\documentclass[
,aps
,pra
,amsmath
,eqsecnum
,amssymb
,amsfonts
,twocolumn
,letterpaper
,10pt
,tightenlines
,superscriptaddress
,nofootinbib
,eqsecnum
,longbibliography
,notitlepage
]{revtex4-2}
\nonstopmode

\usepackage{nag}
\usepackage{graphicx}
\usepackage{amsthm}
\usepackage{tabulary}
\usepackage{threeparttable}
\usepackage{qcircuit}
\usepackage{mathtools}
\usepackage{bm}
\usepackage[table,xcdraw]{xcolor}
\usepackage{braket}
\usepackage{datetime}
\usepackage{stackrel}
\usepackage{stmaryrd}
\usepackage{dsfont}
\usepackage{qcircuit}
\usepackage[english]{babel}
\usepackage{units}

\usepackage{tikz}
\usetikzlibrary{backgrounds,decorations.pathreplacing}

\usepackage{chngcntr}
\counterwithout{equation}{section}

\usepackage[normalem]{ulem}
\usepackage{slashed}
\usepackage{soul}

\usepackage{xcolor}
\usepackage[
    colorlinks=true,
    urlcolor=blue,
    linkcolor=blue,
    citecolor=blue,
    filecolor=blue,
]{hyperref}
\usepackage[caption=false]{subfig}
\usepackage{amssymb}
\usepackage{amsmath}
\usepackage{amsthm}
\usepackage{braket}

\usepackage[OT2,T1]{fontenc}
\DeclareSymbolFont{cyrletters}{OT2}{wncyr}{m}{n}
\DeclareMathSymbol{\Sha}{\mathalpha}{cyrletters}{"58}

\usepackage{makeidx}
\makeindex

\usepackage{soul}
\usepackage{xargs}
\newcommandx{\cmnote}[2][1=]{\linespread{1.0}\todo[linecolor=red,backgroundcolor=red!25,bordercolor=red,#1]{#2}}

\let\underline\ul

\allowdisplaybreaks[4]

\DeclareMathOperator{\diag}{diag}

\newtheorem{theorem}{Theorem}

\newtheorem{definition}{Definition}

\let\originalleft\left
\let\originalright\right
\renewcommand{\left}{\mathopen{}\mathclose\bgroup\originalleft}
\renewcommand{\right}{\aftergroup\egroup\originalright}

 \index{} \index{} \index{} \index{} \index{} \index{} \index{} \index{}%
\makeatletter

\newcommand{\ringplus}{\mathbin{\text{\@ringplus}}}

\newcommand{\@ringplus}{%
  \ooalign{\hidewidth\raise1.3ex\hbox{\tiny$\circ$}\hidewidth\cr$\m@th+$\cr}%
}

\newcommand{\ringminus}{\mathbin{\text{\@ringminus}}}

\newcommand{\@ringminus}{%
  \ooalign{\hidewidth\raise0.9ex\hbox{\tiny$\circ$}\hidewidth\cr$\m@th-$\cr}%
}
\makeatother
 \index{} \index{} \index{} \index{} \index{} \index{} \index{} \index{}%

\newcommand{\tp}[0]{\mathrm{T}}

\newcommand{\EPR}{\text{EPR}}

\newcommand{\GKP}{\text{GKP}}

\newcommand{\qunaught}{\varnothing}

\DeclareFontFamily{U}{wncy}{}
\DeclareFontShape{U}{wncy}{m}{n}{<->wncyr10}{}
\DeclareSymbolFont{mcy}{U}{wncy}{m}{n}
\DeclareMathSymbol{\Sh}{\mathord}{mcy}{"58}

\newcommand{\negspace}{\!}

\newcommand{\lsub}[2]{{\protect\vphantom{#1}}_{#2} \negspace {#1}}
\newcommand{\rsub}[2]{{#1} \negspace {\protect\vphantom{#1}}_{#2}}
\newcommand{\lrsub}[3]{{\protect\vphantom{#1}}_{#2} \negspace {#1} \negspace {\protect\vphantom{#1}}_{#3}}

\newcommand{\ketsub}[2]{\rsub {\ket{#1}} {#2}}
\newcommand{\brasub}[2]{\lsub {\bra{#1}} {#2}}

\newcommand{\pbra}[1]{\brasub{#1} p}

\newcommand{\qket}[1]{\ketsub{#1} q}

\newcommand{\inprod}[2]{\left\langle {#1} | {#2} \right\rangle}

\newcommand{\inprodsubsub}[4]{\lrsub {\inprod{#1}{#2}} {#3} {#4}}

\newcommand{\pinprod}[2]{\inprodsubsub{#1}{#2} p p}

\newcommand{\abs}[1]{\left\lvert{#1}\right\rvert}
\newcommand{\abss}[1]{\lvert{#1}\rvert}

\newcommand{\reals}[0]{\mathbb{R}}

\newcommand{\op}[1]{\hat{#1}}

\newcommand{\opvec}[1]{\op{\vec{#1}}}

\newcommand{\mat}[1]{\bm{\mathrm{#1}}}

\renewcommand{\vec}[1]{\bm{#1}}

\newcommand{\controlled}[1]{\op{\mathrm{C}}_{#1}}
\newcommand{\CZ}[0]{\controlled Z}
\newcommand{\CX}[0]{\controlled X}

 \index{} \index{} \index{} \index{} \index{} \index{} \index{} \index{}%
 \index{} \index{} \index{} \index{} \index{} \index{} \index{} \index{}%

\newcommand{\blu}{\color{blue!70!cyan}}
\newcommand{\blk}{\color{black}}

 \index{} \index{} \index{} \index{} \index{} \index{} \index{} \index{}%
 \index{} \index{} \index{} \index{} \index{} \index{} \index{} \index{}%
\usepackage{graphicx}

\newcommand{\bsop}{\op{B} }

\usepackage{graphicx}
\usepackage{multirow}
\usepackage{hhline}

\xyoption{color}
\xyoption{line}

\newcommandx*\bsbal[3][1=black, 3=->]{\ar @[#1]@{#3} [#2,0] \qw}

\newcommandx*\varbs[4][1=black, 3=\theta, 4=->]{\ar @[#1]@{#4}^{#3} [#2,0] \qw}
\newcommandx*\varbsleft[5][1=black, 3=, 4=->]{\ar @[#1]@{#4}^{#3}_{#5} [#2,0] \qw}

\newcommandx*\ctrlg[2]{\control \ar @{-}^{#1} [#2,0] \qw}
\newcommandx*\ctrlog[2]{\controlo \ar @{-}^{#1} [#2,0] \qw}

\newcommand{\EPRdr}{\ar @{-} [dr(0.5)]\qw }
\newcommand{\EPRur}{\ar @{-} [ur(0.5)]\qw }
\newcommand{\EPRul}{\ar @{-} [ul(0.5)] }
\newcommand{\EPRdl}{\ar @{-} [dl(0.5)] }

\newcommand{\nogate}[1]{*+<.6em>{#1} \POS ="i","i"+UR;"i"+UL **[white]\dir{-};"i"+DL **[white]\dir{-};"i"+DR **[white]\dir{-};"i"+UR **[white]\dir{-},"i" }

\setcounter{page}{1}
\pagenumbering{roman}

\makeatletter
 \newcommand{\xmapsfrom}[2][]{%
    \ext@arrow3095\leftarrowfill@{#1}{#2}\mapsfromchar
}
\makeatother

\newcommand{\QRL}{\text{QRL}}

\newcommand{\BSmat}{\mat{R}}
\newcommand{\FSmat}{\mat{R}}

\newcommand{\swap}{\text{SWAP}}

\usepackage{comment}
\usepackage{placeins}
\usepackage{flushend}
\begin{document}
\title{Equivalent noise properties of scalable continuous-variable cluster states}
\newpage
\setcounter{page}{1}
\pagenumbering{arabic}

\author{Blayney W. Walshe}
\email{blayneyw@gmail.com}
\affiliation{Centre for Quantum Computation and Communication Technology, School of Science, RMIT University, Melbourne, VIC 3000, Australia}
\author{Rafael N. Alexander}
\affiliation{Centre for Quantum Computation and Communication Technology, School of Science, RMIT University, Melbourne, VIC 3000, Australia}
\author{Takaya Matsuura}
\affiliation{Centre for Quantum Computation and Communication Technology, School of Science, RMIT University, Melbourne, VIC 3000, Australia}
\author{Ben Q. Baragiola}
\affiliation{Centre for Quantum Computation and Communication Technology, School of Science, RMIT University, Melbourne, VIC 3000, Australia}
\affiliation{Yukawa Institute for Theoretical Physics, Kyoto University, Kitashirakawa Oiwakecho, Sakyo-ku, Kyoto 606-8502, Japan}
\author{Nicolas C. Menicucci}
\affiliation{Centre for Quantum Computation and Communication Technology, School of Science, RMIT University, Melbourne, VIC 3000, Australia}
\begin{abstract}
   
    Optical continuous-variable cluster states (CVCSs) 
    in combination with Gottesman-Kitaev-Preskill~(GKP) qubits enable fault-tolerant quantum computation so long as these
    resources are of high enough quality. Previous studies concluded that a particular CVCS, the quad rail lattice~(QRL), 
    exhibits lower GKP gate-error rate than others do.
    We show in this work that many other experimentally accessible CVCSs also achieve this level of performance
    by identifying operational equivalences to the QRL. Under this equivalence, the GKP Clifford gate set for each CVCS maps straightforwardly from that of the QRL, inheriting 
    its noise properties.
    Furthermore, each cluster state has at its heart a balanced four-splitter---the four-mode extension to a balanced beam splitter. We classify all four-splitters, show they form a single equivalence class under SWAP and parity operators, and we give a construction of any four-splitter with linear optics, thus extending the toolbox for theoretical and experimental cluster-state design and analysis.
    
\end{abstract}

\maketitle

\section{Introduction}
\label{intro}

 Measurement-based quantum computing (QC) 
with continuous-variable (CV) systems relies on preparing 
large-scale entangled resource states, called CV cluster states~\cite{Menicucci2006,Gu2009} on which adaptive measurements are sufficient for universal QC.
Optics has shown itself to be an excellent platform for demonstrating these states on a large scale
using squeezed vacuum  and passive linear optical components~\cite{Flammia2009optical, Menicucci2011tempmodeCVCS, Wang2014hypercubic, Alexander2016oneway, Alexander2018BSL, Wu2020scalable,Yang2020spatiotemporalgraphs}.
These cluster states are accessible with current technology and highly scalable. Indeed, experiments have demonstrated large cluster states~\cite{Chen2014,Yokoyama2013ultra,Yoshikawa2016} including 2D CV cluster states whose connectivity makes them resources for universal quantum computing~\cite{Asavanant2019detCVCS,Larsen2019detCVCS}.

In measurement-based quantum computing (MBQC) protocols using CV cluster states, performing homodyne detection on local modes teleports quantum information throughout the cluster and applies deterministic Gaussian operations that depend on the measurement bases~\cite{Loock2007GaussgatesCVCS, Asavanant2020detgatesCVCS,Larsen2020detgatesCVCS}.
Hindering quantum computation is inherent
noise from the fact that physical states can only be finitely squeezed%
~\cite{Alexander2014noise, Walshe2019}. 
As measurements are performed, this noise accumulates and eventually overwhelms the intended calculation~\cite{ohliger2010limitations}. As a countermeasure, digital quantum information can be encoded into a Gottesman-Kitaev-Preskill~(GKP) error-correcting code~\cite{GKP}, a class of bosonic code that interfaces seamlessly with CVCSs~\cite{pantaleoni2021hidden} because GKP codes admit all-Gaussian Clifford-gate operations and magic-state production~\cite{Baragiola2019}.
Moreover, GKP codes are resistant to various types of CV errors, including errors from noisy teleportation~\cite{GKP,Noh2019channel,Wu2021GKPgengauss}. Thus, combining CV cluster states with GKP codes provides a pathway to fault tolerance~\cite{menicucci2014fault, fukui2018high, Larsen2021fault, Bourassa2021blueprint, Tzitrin2021passive}.

In order to use the GKP code with a CV cluster state, a specific set of Gaussian operations is required---those that realize GKP Clifford gates. Bespoke methods
have been proposed for some CV cluster states~\cite{Larsen2020architecture, walshe2021streamlined}, each involving a set of homodyne measurement angles tailored for each gate and for each cluster state. For some CV cluster states, this includes position-quadrature measurements to delete certain modes~\cite{Alexander2018BSL,Larsen2020architecture}, effectively wasting them while injecting their noise into the remaining modes. Furthermore, some proposals require optical switches to inject GKP states when error correction is needed~\cite{Larsen2020architecture,Asavanant2019detCVCS}.

A detailed study~\cite{walshe2021streamlined} of a specific CV cluster state---the quad-rail lattice (QRL)~\cite{Menicucci2011tempmodeCVCS, Alexander2016flexible}---showed that it serves as an efficient resource for computing with the square-lattice GKP code: the full set of GKP Cliffords can be performed in a single teleportation step without requiring deletions, thus minimizing inherent noise from the squeezed states in the cluster. Further, replacing the squeezed states with a particular type of GKP state called a \emph{qunaught} state~\cite{Walshe2020,walshe2021streamlined} (also called GKP sensor states~\cite{Duivenvoorden2017}) obviates the need for optical switches and automatically performs GKP error correction during gate execution~\cite{Walshe2020}. Together, these properties give the QRL the lowest logical GKP-Clifford gate noise of any known two-dimensional CV cluster state~\cite{walshe2021streamlined}.
Lower gate noise places lower demands on the quality of the GKP resources, making the QRL  appear to stand out as the CV cluster state of choice in the quest for fault tolerance. But this is not the whole story.

In this work, we show that 
several other two-dimensional CV cluster states can achieve the same performance 
as the QRL.
Previous analysis indicated that the bilayer square lattice (BSL)~\cite{Alexander2018BSL}, the double bilayer square lattice (DBSL)~\cite{Larsen2019detCVCS}, and the modified bilayer square lattice (MBSL)~\cite{Asavanant2019detCVCS} CV cluster states perform worse than than the QRL~\cite{Larsen2020architecture}.
We analyse these cluster states along with another one---a three-dimensional extension to the DBSL~\cite{Larsen2021fault}---and show that each, when modified with an additional beam splitter, is operationally equivalent to the QRL. 
Moreover, for the BSL and DBSL, the additional beam splitter is not even necessary---it can be added virtually by correlating certain specific homodyne measurement angles and post-processing the outcomes. 
This equivalence provides two benefits. 
First, it gives a recipe to map the measurement angles for GKP Clifford gates from the QRL to the other cluster states. Second, every equivalent CV cluster state inherits the QRL's low GKP-logical gate noise. Together, these make the BSL, DBSL, and MBSL in principle as useful as the QRL for achieving fault tolerance with the GKP code.

We further show that the CV cluster states above are members of a large class of scalable CV cluster states whose internal coupling uses balanced \emph{four-splitters}~\cite{Alexander2016flexible}---a generalization of a balanced beam splitter to four modes. 
We characterize all balanced four-splitters and show that they fall into an equivalence class defined by a small set of operations. We provide a procedure to engineer any four-splitter using only linear-optical components---a network of four balanced beam splitters along with SWAP gates and parity operators. 

Section~\ref{sec:bsNetworks} gives properties of beam-splitter networks and introduces quantum optical conventions used throughout this work.
Section~\ref{sec:clusterStates} shows the equivalence of the BSL, DBSL, MBSL, and extension of the DBSL to the QRL cluster state.
Section~\ref{sec:GKPcomputingwithCVCS} analyzes this equivalence when the CV cluster states are used in conjuction with the GKP code.
Finally, Sec.~\ref{sec:four-splitters} classifies the equivalence class of four-splitters and shows how to construct them from linear optical components.

\blu

\blk

\section{Beam-splitter networks}\label{sec:bsNetworks}

A balanced beam splitter mixing modes $j$ and~$k$ is described by the unitary operator
    \begin{align} \label{BSdef}
		\bsop_{jk}\coloneqq & e^{ \frac{\pi}{4} ( \op{a}_j   \op{a}^\dagger_k - \op{a}^\dagger_j   \op{a}_k   ) } 
			 =  e^{- i \frac{\pi}{4}  ( \op{q}_j   \op{p}_k  - \op{p}_j   \op{q}_k  ) },
    \end{align}
where $\op{a}_j$ and $\op{a}^\dagger_k$ are mode annihilation and creation operators satisfying $[\op{a}_j,\op{a}^\dagger_k] = \delta_{j,k}$, and $\op{q}_j = \frac{1}{\sqrt{2}}(\op{a}_j + \op{a}^\dagger_j)$ and $\op{p}_j = \frac{-i}{\sqrt{2}}(\op{a}_j - \op{a}^\dagger_j)$ are the position and momentum quadrature operators.
The beam splitter above is depicted in a right-to-left circuit diagram as
\begin{equation} 
\begin{split} \label{eq:beamsplitter_circuit}
	\raisebox{-1.2em}{$\op{B}_{jk} \, = \, $~~}
         \Qcircuit @C=1.25em @R=2.5em @! 
         {
         	& \bsbal{1} & \rstick{j}  \qw \\
         	& \qw       & \rstick{k} \qw
  		  } 
\end{split}		
\quad 
\end{equation}
with the arrow pointing from mode $j$ to mode $k$.

A beam splitter generates a linear transformation of the mode operators $\op{a}_j$ and $\op{a}_k$ that can be described by a unitary, $2 \times 2$ 
matrix.
For the beam splitter in Eq.~\eqref{BSdef}, this matrix is real and thus orthogonal, and we denote this relationship as
    \begin{align} \label{bsmat}
        \op B_{jk} \to \BSmat_{jk} = \frac{1}{\sqrt{2}} 
        \begin{bmatrix}
        1 & -1  \\
        1 & 1  
    \end{bmatrix},
    \end{align} 
where the arrow ($\to$) is to be interpreted as
\begin{align}
\label{eq:unitaryarrow}
    (\op U \to \mat U)
&\quad\Longrightarrow\quad
    (\op U^\dag \opvec a \op U =
    \mat U \opvec a),
\end{align}
where $\op U$ is a unitary operator representing passive linear-optical elements, $\opvec a = (\op a_1, \dotsc, \op a_N)^\tp$ is a column vector of $N$~annihilation operators, and $\mat U \in \mathrm{U}(N)$ is the unitary matrix acting to linearly combine the annihilation operators, representing this Heisenberg action. (This relation is one directional because the converse is only true up to an overall phase.) 
Note that $\BSmat_{jk}$ may be a submatrix within a larger matrix over more modes.

The complex conjugate of the beam-splitter unitary swaps the indices and results in a transpose of its associated matrix: $(\op{B}_{jk}^\dagger = \op{B}_{kj}) \to (\BSmat_{jk}^\tp = \BSmat_{kj})$. In a circuit diagram, Eq.~\eqref{eq:beamsplitter_circuit}, the complex conjugate changes the direction of the arrow between the wires.

We refer to a series of beam splitters as a \emph{beam-splitter network}. This term is general and allows any beam splitters and any number of modes. In this work, however, we will be concerned with four-mode beam-splitter networks consisting of identical, balanced beam splitters as defined in Eq.~\eqref{BSdef}. 
For example, consider a beam-splitter network given by the unitary operator
    \begin{align}\label{eq:op_QRL}
        \bsop_\text{network} = \bsop_\text{24} \bsop_\text{13} \bsop_\text{34} \bsop_\text{12} ,
    \end{align}
which is described by the circuit\footnote{The fact that the beam splitters in Eqs.~\eqref{eq:op_QRL} and~\eqref{eq:U_QRL} are ordered the same way is a feature of our right-to-left circuit notation. When using left-to-right circuits, the ordering in the circuit description is opposite that in the equation. See Ref.~\cite{walshe2021streamlined} for further motivation for choosing this convention.}
\begin{equation}\label{eq:U_QRL}
    \begin{split}
        \raisebox{-0.94cm}{$\op B_{\text{network}} = $ \;}
\Qcircuit @C=0.5cm @R=0.60cm {
		& \qw & \bsbal{2} & \bsbal{1} & \qw  \\
        & \bsbal{2} & \qw & \qw & \qw  \\
        & \qw & \qw  & \bsbal{1} & \qw \\
        & \qw & \qw & \qw & \qw 
}
        \raisebox{-0.94cm}{\; .}
\end{split}
\end{equation}
Note that we use right-to-left circuits in this work, which benefits from a direct mapping to operator notation---compare the circuit to Eq.~\eqref{eq:op_QRL}.
This beam-splitter network produces a unitary transformation of the mode operators $\op{a}_j$ described by an orthogonal matrix $\BSmat_\text{network}$,
    \begin{align} \label{bsnetworkmatrices}
        \bsop_\text{network} \to
        \BSmat_\text{network} = \BSmat_\text{24} \BSmat_\text{13} \BSmat_\text{34} \BSmat_\text{12},
    \end{align}
where each beam-splitter matrix $\BSmat_{jk}$ from Eq.~\eqref{bsmat} has been expanded into the appropriate $4 \times 4$ matrix over all four modes. Note that non-identical beam splitters that share a mode do not commute, e.g.,~$[\bsop_{jk}, \bsop_{k\ell}] \neq 0$, which is reflected in the fact that their expanded rotation matrices also do not commute, $[\BSmat_{jk}, \BSmat_{k\ell}] \neq 0$.

\subsection{CV SWAP gates and parity operators}

Modes can be exchanged using a CV SWAP gate~\cite{Volkoff2022}
\begin{equation}
\begin{split}
	\raisebox{-1em}{$\swap_{jk} = \swap_{kj}\, = \, $~~}
         \Qcircuit @C=1.15em @R=2em @! 
         {
         	& \qswap      & \rstick{j}  \qw \\
         	& \qswap \qwx & \rstick{k} \qw
  		  } 
\end{split}
\qquad .
\end{equation}   
These satisfy several useful gate identities; see Eq.~\eqref{swapcircuitidentity}.
The action of SWAP gates on a beam-splitter network can also be represented by matrix multiplication of network matrices $\BSmat_\text{network}$,
    \begin{align} \label{SWAPmatrix}
        \swap_{jk} \bsop_\text{network} \rightarrow \mat{P}_{jk} \BSmat_\text{network},
    \end{align}
where $\mat{P}_{jk}$ is a permutation matrix between indices $j$ and $k$. Left multiplication by $\mat{P}_{jk}$ swaps rows $j$ and $k$. Similarly, the right action of a SWAP gate is represented as matrix multiplication from the right by $\mat{P}_{jk}$, which swaps columns $j$ and $k$.

The single-mode parity operator is defined as
    \begin{align} \label{eq:parity}
        \op{F}^2 \coloneqq e^{i \pi \op a^\dag \op a} 
        ,
    \end{align}
expressed here as two applications of the Fourier transform $\op{F}$. Parity transformations flip the sign of a quadrature operator, i.e.,~$\op{F}^2 \op{q} \op{F}^2 = -\op{q}$.
The left action of a parity operator on a beam splitter gives
\begin{equation} \label{paritymatrix}
    \op{F}_j^2 \op{B}_\text{network} \rightarrow \mat{M}_j\BSmat_\text{network}
    ,
\end{equation}
where $\mat{M}_j$ is a diagonal matrix with elements equal to 1 except entry $(j,j)$, which is $-1$. This flips the sign of the $j$th row in $\BSmat_\text{network}$. Similarly, the right action of a parity operator gives matrix multiplication from the right by $\mat{M}_{j}$, which flips the sign of the $j$th column.

Both SWAP gates and parity operators can reverse the direction of a beam splitter; see Eq.~\eqref{eq:bsdirectionswap}.

\subsection{Quadrature measurements} \label{sec:quadmeasurements}

In CV cluster-state quantum computing, modes are usually measured via homodyne detection. Adjusting the phase of the local oscillator allows measurements of any quadrature,
    \begin{align} \label{rotatedquadrature}
        \op{p}_\theta \coloneqq \op{R}^\dagger (\theta)\op{p} \op{R} (\theta) = \op{p} \cos \theta + \op{q} \sin \theta ,
    \end{align}
where
    \begin{align}
        \op{R}(\theta) \coloneqq e^{i \theta \op{a}^\dagger \op{a}}
    \end{align}
is the phase-delay operator (also called the rotation operator). 
The eigenstates $\ketsub{t}{\theta} = \op{R}^\dagger (\theta) \ket{t}_p$ satisfy $\op{p}_\theta \ketsub{t}{\theta} = t \ketsub{t}{\theta}$. 
In a right-to-left circuit, measurement of a quadrature $\op{p}_\theta$ with outcome $m$ is indicated via a projection onto the bra $\brasub{m}{p_\theta} = \brasub{m}{p} \op{R}(\theta)$, or in circuit form
\begin{equation} \label{rotatedmeasurement}
\begin{split}
         \Qcircuit @C=1.25em @R=2.5em @! 
         {
         	\lstick{\brasub{m}{p_{\theta}}} &  \qw 
  		  } 
  	\quad = \quad\quad\quad 
         \Qcircuit @C=0.1em @R=0.9em @! 
         {
         	\lstick{\brasub{m}{p}}  & \gate{R(\theta)}       & \qw     
  		  } 	  
\end{split}
\; .
\end{equation}
Momentum measurements are given by $\theta = 0$, and position measurements by $\theta = \frac{\pi}{2}$. Also, a parity operator $\op{F}^2 = \op{R}(\pi)$ in Eq.~\eqref{eq:parity} serves only to change the sign of the measurement outcome:
\begin{align}
    \brasub{m}{p_{\theta}} \op{F}^2
=
    \brasub{m}{p_{\theta+\pi}}
=
    \brasub{m}{-p_{\theta}}
=
    \brasub{-m}{p_{\theta}}
    ,
\end{align}
where the subscript ``$-p_\theta$'' represents a measurement of the operator~$-\op p_\theta$.

\subsection{Adding and removing beam splitters virtually by 
restricting homodyne angles%
} \label{subsection:virtualbeam splitters}

The CV cluster states in Sec.~\ref{sec:clusterStates} have at their heart beam-splitter networks over four modes. However, only the QRL uses four beam splitters; the others use only three. The equivalences to the QRL involve inserting the missing fourth beam splitter into other cluster state's beam-splitter network.

We show here that in some cases, the missing beam splitter can be inserted virtually simply by correlating homodyne measurement angles. Moreover, by the same process, an existing beam splitter can be effectively deleted.
The action of a beam splitter on two eigenstates of the same quadrature is
\begin{align}\label{eq:virtualBSproof}
    \op B_{21} \ketsub{t}{p_\theta}
    \otimes
    \ketsub{s}{p_\theta} 
    &=
    \ketsub{\tfrac{1}{\sqrt{2}}(s+t)}{p_\theta}
    \otimes 
    \ketsub{\tfrac{1}{\sqrt{2}}(s-t)}{p_\theta} .
\end{align}
Taking the Hermitian conjugate of this expression (%
recall $\op{B}_{21}^\dag = \op{B}_{12}$) gives
\begin{align}\label{eq:virtualBSproof}
    \brasub{t}{p_\theta} \otimes \brasub{s}{p_\theta} \op B_{12}  
    &= \brasub{\tfrac{1}{\sqrt{2}}(s+t)}{p_\theta} \otimes \brasub{\tfrac{1}{\sqrt{2}}(s-t)}{p_\theta}.
\end{align}
 We can interpret this expression in terms of quadrature measurements, as described in Sec.~\ref{sec:quadmeasurements}, on both modes.
This allows us to account for the effects of a beam splitter before measurements of the same quadrature by classically using the proper linear combinations of the measurement outcomes. Doing so effectively deletes the beam splitter from the circuit. Similarly, given measurements of two modes in the same quadrature with no beam splitter preceding them, a ``virtual beam splitter'' can be added to a circuit if the outcomes are processed similarly. Note that inserting or deleting beam splitters \emph{does not} depend on the measurement outcomes, only on the fact that the measured quadratures are the same.

The circuit to remove a beam splitter before identical quadrature measurements is
\begin{align}\label{eq:removeBS}
    \Qcircuit @C=0.6em @R=0.6em
    {
	 &\nogate{\brasub{m_1}{p_\theta}} & \qw & \bsbal{1} &  \qw\\
     &\nogate{\brasub{m_2}{p_\theta}} & \qw & \qw & \qw
	 }
     \raisebox{-1em}{\quad = }
     \Qcircuit @C=0.6em @R=0.6em
     {
      &\nogate{\brasub{\frac{1}{\sqrt{2}}(m_1 + m_2)}{p_\theta}} & \qw & \qw &  \qw\\
      &\nogate{\brasub{\frac{1}{\sqrt{2}}(m_2-m_1)}{p_\theta}} & \qw & \qw & \qw
      }
      \raisebox{-1em}{\; , }
\end{align}
and to add a virtual beam splitter is
\begin{align}\label{eq:virtualBS}
    \Qcircuit @C=0.6em @R=0.6em
    {
	 &\nogate{\brasub{m_1}{p_\theta}} & \qw & \qw &  \qw\\
     &\nogate{\brasub{m_2}{p_\theta}} & \qw & \qw & \qw
	 }
     \raisebox{-1em}{\quad = }
     \Qcircuit @C=0.6em @R=0.6em
     {
      &\nogate{\brasub{\frac{1}{\sqrt{2}}(m_1 - m_2)}{p_\theta}} & \qw & \bsbal{1}
      &  \qw\\
      &\nogate{\brasub{\frac{1}{\sqrt{2}}(m_2+m_1)}{p_\theta}} & \qw & \qw & \qw
     }
     \raisebox{-1em}{\; .}
\end{align}
In both cases above, the physical operation is on the left of the equal sign, and the circuit to which it is equivalent is on the right.

Note that the insertion and deletion of a beam splitter is accompanied by an orthogonal transformation on the measurement outcomes. Collecting the outcomes into a vector $\vec{m} = (m_1, m_2)^\tp$, removing a beam splitter is accompanied by the transformation $\vec{m} \mapsto \mat{R}^\tp \vec{m}$ and inserting one, by $\vec{m} \mapsto \mat{R} \vec{m}$.

\section{Equivalence of beam-splitter constructed cluster states}\label{sec:clusterStates}

The quad-rail lattice (QRL) cluster state, introduced in Ref.~\cite{Menicucci2011tempmodeCVCS}, is a resource for universal, fault-tolerant quantum computing. 
In this section, we briefly describe the construction of 
several types of scalable CV cluster states 
and show that they are equivalent to the QRL given
certain restrictions:
the bilayer square lattice (BSL)~\cite{Alexander2018BSL}, the double bilayer square lattice (DBSL)~\cite{Larsen2019detCVCS}, a 3D extension to the DBSL~\cite{Larsen2021fault}, and the modified bilayer square lattice (MBSL)~\cite{Asavanant2019detCVCS}. The work by 
Larsen \emph{et al.}~\cite{Larsen2020architecture} provides further details for each of the cluster states discussed here. The equivalences we show below are the link that let us apply the detailed analysis of the QRL in previous work~\cite{walshe2021streamlined} to the other cluster states under consideration.

The construction the cluster states considered in this work begins with the preparation of two-mode squeezed states~(TMSSs), generated by sending 
pairs of squeezed vacuum states
through beam splitters.
Differences arise from different ways of stitching the TMSSs together with additional beam splitters. 
Stitching involves four-mode entangling operations, in which one half of four TMSSs are coupled together with beam splitters. For each cluster state, equivalence to the QRL involves inserting an additional beam splitter in this process; doing so effectively gives a new, distinct cluster state with more connectivity between modes---we call these the \emph{completed} version of the original cluster state; \emph{i.e.} the completed BSL (cBSL). When used for CV quantum computing, all of the modes in the cluster state are measured via homodyne detection. We use this fact and show that the needed beam splitter can be inserted \emph{virtually} by restricting specific measurement angles---which allows the unmodified cluster states to be used as if they were a different cluster state with more connectivity. Under this restriction, we refer to a cluster state as  \emph{virtually completed}, \emph{i.e.} the virtually completed BSL (vcBSL). The exception to the cluster states considered here is the MSBL, for which the missing beam splitter cannot be inserted virtually.

\subsection{Representations of CV cluster states}

Large-scale CV cluster states have a complicated entanglement structure spanning many modes, so representing them is a challenge in itself. The graphical calculus for Gaussian pure states~\cite{Menicucci2011} offers a visual representation of these states as graphs that faithfully encode their precise wave functions. A simplified version of this graphical calculus is developed in Ref.~\cite{Menicucci2011tempmodeCVCS} that is applicable for a large class of scalable CV cluster states, which includes all of the ones examined in this work. It streamlines the original formalism while maintaining the faithful representation of the full state. It also introduces the notation of an arrow between nodes for a beam splitter between the corresponding modes, as well as the simultaneous use of both graphs and arrows to illustrate the steps in the construction of a cluster state.

\begin{figure}[b!]
    \includegraphics[width=0.85\columnwidth]{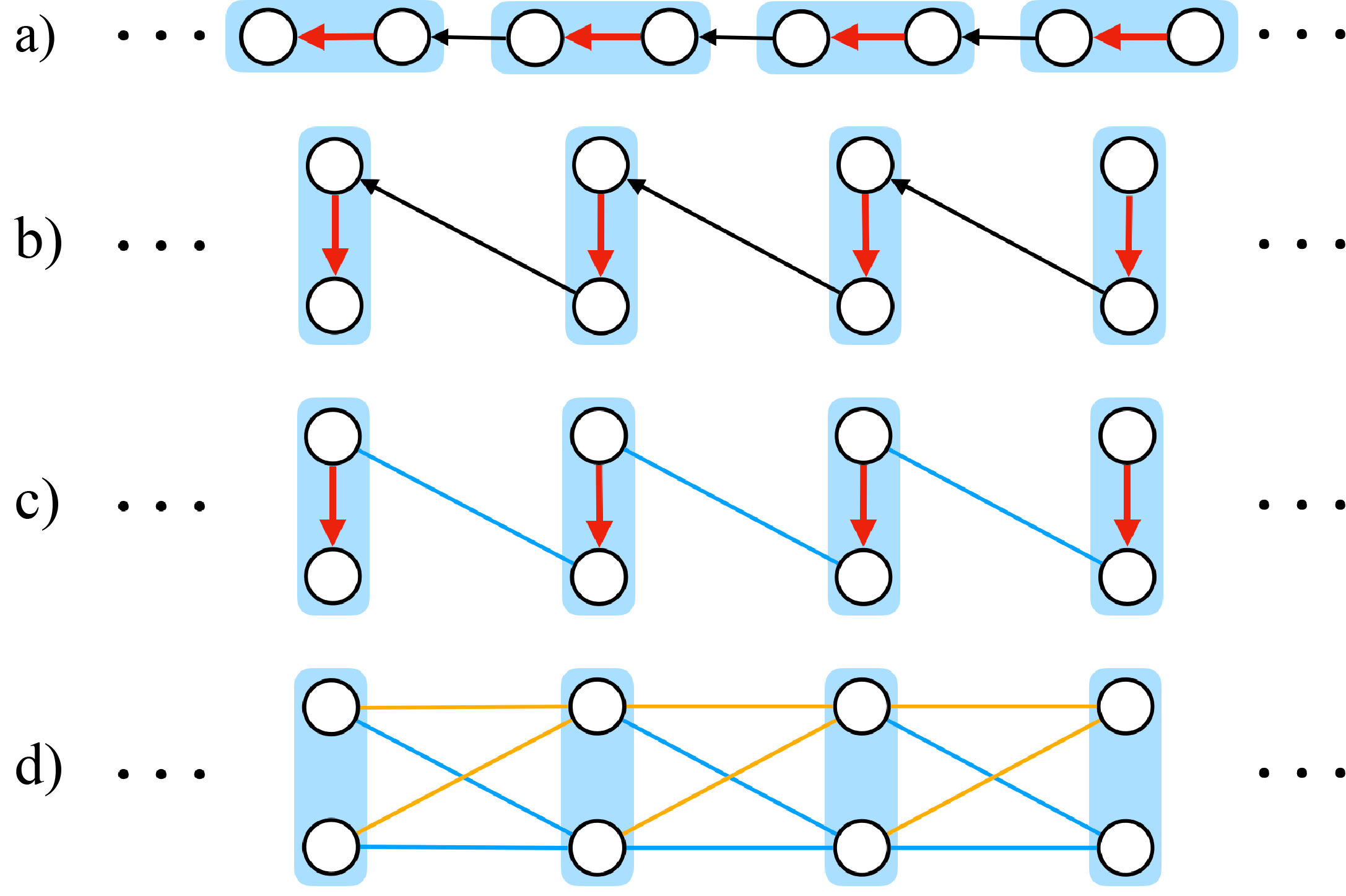}
    \caption{
    Four representations of the same macronode wire. Black and red beam splitters, represented by arrows, couple modes prepared in squeezed states. The black beam splitters are applied first to generate TMSSs, then the red beam splitters couple one half of neighboring TMSSs into a macronode wire. a) Representation showing the modes coupled by beam splitters. b) Folding the macronode wire to emphasize two-mode macronode groupings, indicated in blue. c) Black beam splitters, acting first, generate TMSSs between modes, each represented as two modes with a blue edge between them. Pairs of TMSSs are coupled by red beam splitters. 
    d) The action of both black and red beam splitters in graphical form, revealing the entanglement structure of the macronode wire~\cite{Menicucci2011tempmodeCVCS}. 
    }\label{eq:macronodeWireToBS}
\end{figure}

The \emph{macronode wire}~\cite{Alexander2014noise} is the primitive unit we start with when representing each of these cluster states. It is the basic highway down which quantum information is channeled, and our work within this paper builds upon the details of teleportation that we established in Ref.~\cite{Walshe2020} and extended to two modes for the QRL in Ref.~\cite{walshe2021streamlined}. Figure~\ref{eq:macronodeWireToBS} shows several representations of the macronode wire: a) and b) show a series of modes coupled by beam splitters (black beam splitters apply before the red ones), c) shows a hybrid representation where two-mode squeezed states are coupled by beam splitters, and d) a simplified graph.

For the mathematical details of this representation, the reader may consult Ref.~\cite{Menicucci2011tempmodeCVCS}, but many readers may be satisfied thinking of the graphs (in blue and yellow) as representing the entanglement structure of a multimode Guassian state, upon which beam splitters act in the locations and order prescribed. If the node being acted upon is isolated (not a graph, just a node), one may think of this node as representing a squeezed state. Thus, the macronode wire above is a collection of squeezed states, upon which the black beam splitters act to create a collection of TMSSs, followed by another set of beam splitters (the red ones) that link these TMSSs into a macronode wire whose full graph is shown in d).

Figure~\ref{fig:allCLusterStates} employs the three representations in b), c), and d) of Fig.~\ref{eq:macronodeWireToBS} to describe the cluster states of interest in this work. 
The figure caption provides additional information about the different representations and their uses. 
In each type of cluster state, adjacent macronode wires are coupled together by dashed black beam splitters, enabling two-mode gates.
It's important to recognise, however, that there are many ways to define the macronode wire. Our representation biases a single one. This is made clear in Ref.~\cite{Alexander2016flexible} because a macronode wire is the path along which information is routed, and there are many ways to route that information.

With this perspective, we introduce a unifying description that treats the core of each cluster state as a four-mode beam-splitter network~$\op{B}_\text{network}$, with each one described in its own subsection below.

\begin{figure*}[hp!]
    \vspace{3em}
    \includegraphics[width=0.95\textwidth]{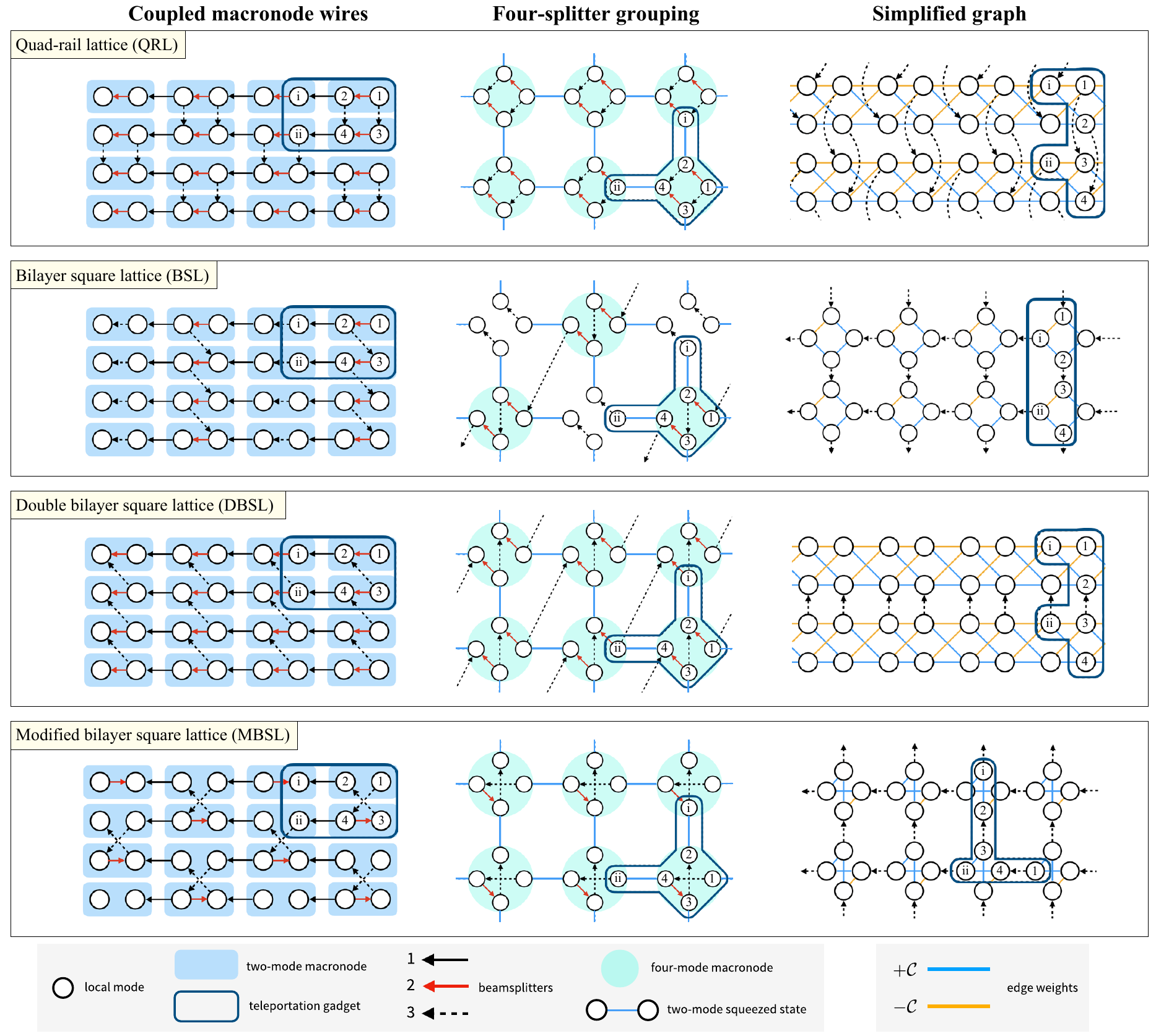}
    \caption{Representations of the CV cluster states of interest in this work. 
    Column 1 represents each cluster state as coupled macronode wires---a representation introduced in Ref.~\cite{walshe2021streamlined} for the QRL. Dashed black beam splitters provide the inter-wire coupling. The box highlights a 6-mode teleportation gadget that provides the computing power for each cluster state. The QRL, BSL, and DBSL all comprise macronode wires that are coupled together in different ways. The macronode wires in the MBSL, however, are periodically missing beam splitters (missing horizontal red arrows compared to the others). 
    Column 2 highlights the two-mode squeezed states that are generated initially (two nodes connected by a blue line) and how these states are coupled together at regular intervals by a four-splitter that is either complete (QRL) or incomplete (others)---see Sec.~\ref{sec:incompletefour-splitters}. This representation shows groupings of four modes (different to the grouping in the first column), where each mode is half of an entangled pair (a TMSS). The QRL representation here differs slightly from that presented in Ref.~\cite{Larsen2020architecture}. We do not address this specifically here because the four-splitter structure (the internal structure of the macronodes) is the same in that work as it is here, so the argument we make here about the gate noise holds regardless.
    Column 3 shows a representation that employs the simplified graphical calculus~\cite{Menicucci2011tempmodeCVCS} along with additional beam splitters (see main text). Similar representations for some of these states can be found in other works~\cite{Menicucci2011tempmodeCVCS, Alexander2016flexible, Larsen2020architecture, Alexander2018BSL}. This column highlights the different types of construction: For the QRL and DBSL, macronode wires are constructed first and then connected to one another, and for the BSL and MBSL, non-local modes are coupled in groups of four, and then these groupings are connected. (Comparison with the other columns shows that the BSL can also be represented as coupled macronode wires; we choose here to follow the description in Ref~\cite{Alexander2018BSL}.) Some works show the full graph-state structure~\cite{Alexander2018BSL, Larsen2020architecture, Alexander2016flexible} by implementing the graph rules of the final beam splitter~\cite{Menicucci2011tempmodeCVCS,Menicucci2011}, but we omit this description as it is graphically unwieldy and does not aid our attempts to compare them.
    }
    \vspace{3em}
    \label{fig:allCLusterStates}
\end{figure*}

\subsection{The teleportation gadget}
As the modes in the cluster states are measured via homodyne detection, states are teleported down the macronode wires that comprise them and across couplings. Each of the cluster-state constructions we consider has at its core a teleportation gadget:%
\footnote{Be mindful that this circuit uses the right-to-left convention discussed in Sec.~\ref{sec:bsNetworks}, with input states on the right and homodyne measurements and output states on the left.}
\begin{equation}\label{eq:2modegadget}
    \begin{split}
\Qcircuit @C=1em @R=1em {
		&&& \lstick{\brasub{m_1}{p_{\theta_1}}}  &  \qw & \multigate{4}{B_\text{network}}  & \qw & \qw &\qw & \nogate{\text{(in)}} & \rstick{1} \\
		&&& \lstick{\brasub{m_2}{p_{\theta_2}}}  & \qw  &\ghost{B_\text{network}}  &  \qw & \bsbal{1} & \qw & \nogate{\ket{\psi}}& \rstick{2} \\
	 & &  & &   &  & \nogate{\text{(out)}} & \qw & \qw& \nogate{\ket{\phi}}& \rstick{\text{i}} \\
		&&& \lstick{\brasub{m_3}{p_{\theta_3}}} & \qw & \ghost{B_\text{network}}  & \qw & \qw & \qw&\nogate{\text{(in)}}& \rstick{3} \\
		&&& \lstick{\brasub{m_4}{p_{\theta_4}}}  &\qw& \ghost{B_\text{network}}  & \qw & \bsbal{1}&\qw & \nogate{\ket{\psi}}& \rstick{4} \\
		 &&&&&&\nogate{\text{(out)}}  & \qw&\qw & \nogate{\ket{\phi}}& \rstick{\text{ii}}
}
\end{split}
\quad \, ,
\end{equation}
where the four-mode beam-splitter network is specific to the cluster state at hand. The Arabic-numbered modes are measured, while the Roman-numbered modes remain afterward.

For each cluster state in Fig.~\ref{fig:allCLusterStates}, this six-mode gadget is indicated inside a dashed box, with matching mode labels between Fig.~\ref{fig:allCLusterStates} and Circuit~\eqref{eq:2modegadget}.
For all cluster states%
, the two input modes are teleported to the two output modes with a two-mode Gaussian unitary 
    \begin{align} \label{eq:twomodeGaussiangate}
        \op{V}_\text{network}^{(2)}(\vec \theta)
    \end{align}
applied that depends on which quadratures are measured [Eq.~\eqref{rotatedquadrature}], as specified by the measurement angles $\vec \theta=\{ \theta_1, \theta_2, \theta_3, \theta_4 \}$. (The displacements on the two modes that arise from the measurement outcomes  will be ignored, which is possible because the gate to be implemented is Gaussian.) 
In addition, this circuit applies a potentially non-unitary operation depending on the states stitched together by the beam-splitter network~\cite{Walshe2020}, as indicated by $\ket{\psi}$ and $\ket{\phi}$ in the circuit above. 
In standard cluster-state constructions, these states are position- and momentum-squeezed states that become TMSSs after a beam splitter. In Section~\ref{subsec:GKPBellpairs}, we replace them with GKP qunaught states in order to perform GKP error correction.
For the moment, though, we leave these states unspecified in order to focus on the beam-splitter network itself and the Gaussian gates implemented by the measurements.

\subsection{Quad-rail lattice (QRL)}\label{sec:QRL}

The quad-rail lattice (QRL) cluster state~\cite{Menicucci2011tempmodeCVCS} has been studied 
as a platform for quantum computing~\cite{Menicucci2011tempmodeCVCS,Alexander2016flexible,Larsen2020architecture,walshe2021streamlined}. We compile and summarize important details relevant to comparison to the other cluster states considered in this work.

The QRL, Fig.~\ref{fig:allCLusterStates}(a), begins with a collection of TMSSs. Groups of four modes (each belonging to a different TMSS) are then combined using additional beam splitters (red and dashed black in the figure). A useful perspective considers the QRL as coupled four-mode beam-splitter networks, each of the form
 \begin{equation}\label{eq:U_QRL_styled}
    \begin{split}
        \raisebox{-3em}{$\op B_{\text{QRL}} = $ \;}
\Qcircuit @C=1.5em @R=0.8em {
		& \qw & \bsbal{2}[.>] & \bsbal[red]{1} & \qw  & \nogate{1} \\
        & \bsbal{2}[.>] & \qw & \qw & \qw   & \nogate{2}\\
        & \qw & \qw  & \bsbal[red]{1} & \qw  & \nogate{3}\\
        & \qw & \qw & \qw & \qw  & \nogate{4}
}
        \raisebox{-3em}{\; ,}
\end{split}
\end{equation}
contained within the first four modes of the two-mode macronode gadget in column 1 of Fig.~\ref{fig:allCLusterStates}(a), with corresponding orthogonal matrix 
\begin{align}
    \mat \BSmat_{\text{QRL}} = \BSmat_\text{24} \BSmat_\text{13} \BSmat_\text{34} \BSmat_\text{12}
    &=\frac{1}{2} \begin{bmatrix}
        1 & -1 & -1 & 1 \\
        1 & 1 & -1 & -1 \\
        1 & -1 & 1 & -1 \\
        1 & 1 & 1 & 1 
    \end{bmatrix}.
\end{align}
This matrix shows that the QRL beam-splitter network produces equal superpositions of quadrature operators on all output modes, with only the phases differing. This is an example of an object we refer to as a \emph{balanced four-splitter}, and it drives the equivalence between each of these cluster states. We discuss more about four-splitters in Sec.~\ref{sec:four-splitters}.

Aside from the internal structure of the QRL's two-mode macronode gadget, it is worth appreciating the brickwork structure of the QRL cluster state when viewed in terms of these macronode wires. The coupling with neighboring wires happens at alternating four-mode sites in the cluster state. This feature is useful because there are no physical operations interfering with the two-mode gate at each site. In contrast, other cluster states in column 1 of Fig.~\ref{fig:allCLusterStates} have columns of beam splitters coupling the macronode wires together, and these have to be accounted for in the measurement outcomes. This fact does not, however, influence the noise equivalence that we are presenting in this paper.

An important insight is that the description of the QRL in terms of local modes and beam-splitter networks does not rely on the input states being squeezed momentum states at all. We previously found%
~\cite{Walshe2020, walshe2021streamlined} 
that preparing some or all of the modes in GKP qunaught states can be used to implement GKP error correction, a fact that we will return to later.

Measuring the four modes coupled by the beam-splitter network in the context of the teleportation gadget, Circuit~\eqref{eq:2modegadget}, teleports the inputs (modes 1 and 3) in Fig.~\ref{fig:allCLusterStates} to the outputs (modes i and ii) and applies a two-mode Gaussian unitary gate $\op{V}_\QRL^{(2)}(\vec \theta)$, Eq.~\eqref{eq:twomodeGaussiangate} to them. This gate, calculated in Ref.~\cite{walshe2021streamlined}, can be decomposed into two single-mode Gaussian unitaries, $\op V(\theta,\theta')$, surrounded by beam splitters:
\begin{equation} \label{VtwoMode_QRL}
    \op{V}_\QRL^{(2)}(\vec \theta) = \bsop_{21} [\op{V}_1(\theta_1, \theta_2) \otimes \op{V}_2(\theta_3, \theta_4)] \bsop_{12}
    \, ,
\end{equation}
or in circuit form
\begin{equation}\label{VtwoMode_QRL_circuit}
    \begin{split}
\Qcircuit @C=1.1em @R=1.25em {
		& \qw&\gate{V(\theta_1,\theta_2)}  &  \bsbal{1} &\qw& \\
		&\bsbal{-1}& \gate{V(\theta_3,\theta_4)} & \qw&\qw&\\
}
\end{split}
\; .
\end{equation}
Explicit forms for $\op V(\theta,\theta')$ can be found in various places~\cite{walshe2021streamlined}. We present two new forms, derived in Appendix~\ref{shearVderivation},
    \begin{align} \label{eq:VgateShear}
        \op V(\theta,\theta') 
        &= \op R(\theta_1-\tfrac{\pi}{2}) \op P[2\cot (2 \theta_-)] \op R(\theta_1-\tfrac{\pi}{2}) 
        \\
        &= \op R(\theta -\pi) \op P_{p}[2\cot (\theta - \theta')] \op R(\theta)
    \, ,
    \end{align}
where the position- and momentum-shear gates are
  \begin{align} 
        \label{eq:posshear}
        \op P(\sigma)      &\coloneqq e^{i \frac{\sigma}{2} \op{q}^2}  
        \quad \text{and} \quad
        \op P_{p}(\sigma) \coloneqq e^{i \frac{\sigma}{2} \op{p}^2}.
    \end{align}

\subsection{Bi-layer square lattice (BSL)}\label{sec:BSL}

The bi-layer square lattice (BSL) cluster state, introduced in Ref.~\cite{Alexander2016flexible} and studied elsewhere~\cite{Alexander2018BSL, Larsen2020architecture}, is depicted in Fig.~\ref{fig:allCLusterStates}(b). 
The BSL begins with a collection of TMSSs. 
Groups of two modes, one from each TMSS, are chained together with beam splitters (red). These chains are periodically connected using a final set of beam splitters (dashed black). 
In some works~\cite{Larsen2020architecture, Alexander2016flexible, Alexander2018BSL}, this procedure is represented in terms of the four-mode groupings created by the first two sets of beam splitters (black and red)~\cite{Menicucci2011tempmodeCVCS},
\begin{align} \label{BSLfourmodes}
    \includegraphics[width=0.4\columnwidth]{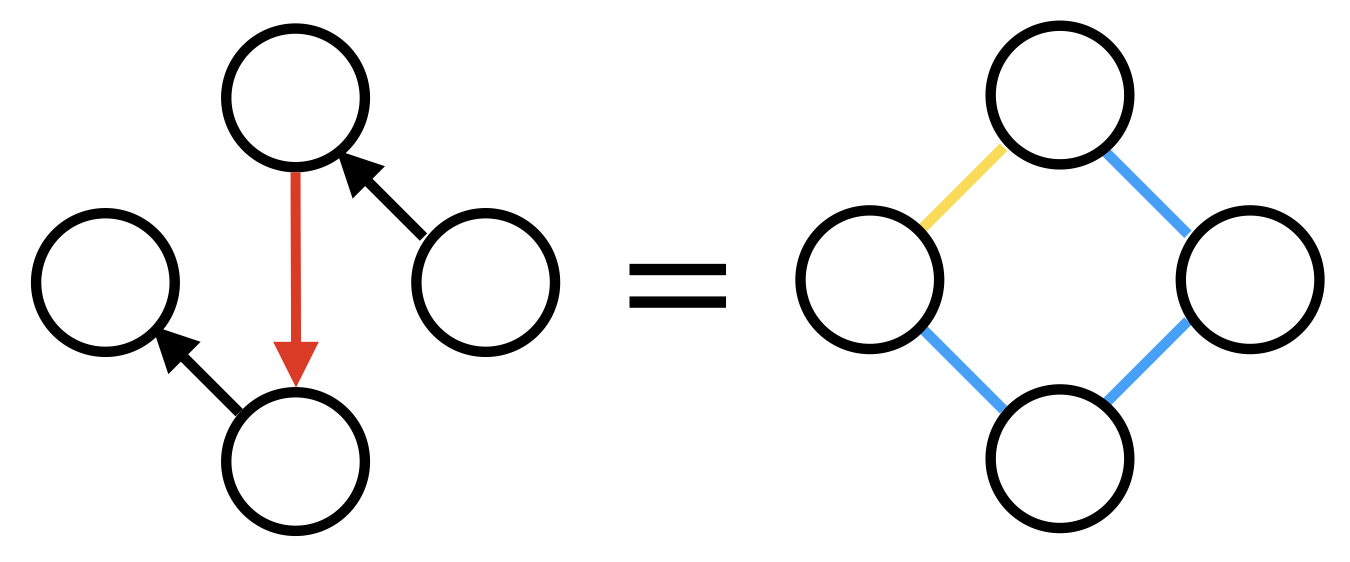}
    \raisebox{0.8cm}{\quad .}
\end{align}
The final set of beam splitters (dashed black) stitches together each group of four modes,
\begin{align} \label{BSLfourmodes2}
    \includegraphics[width=0.4\columnwidth]{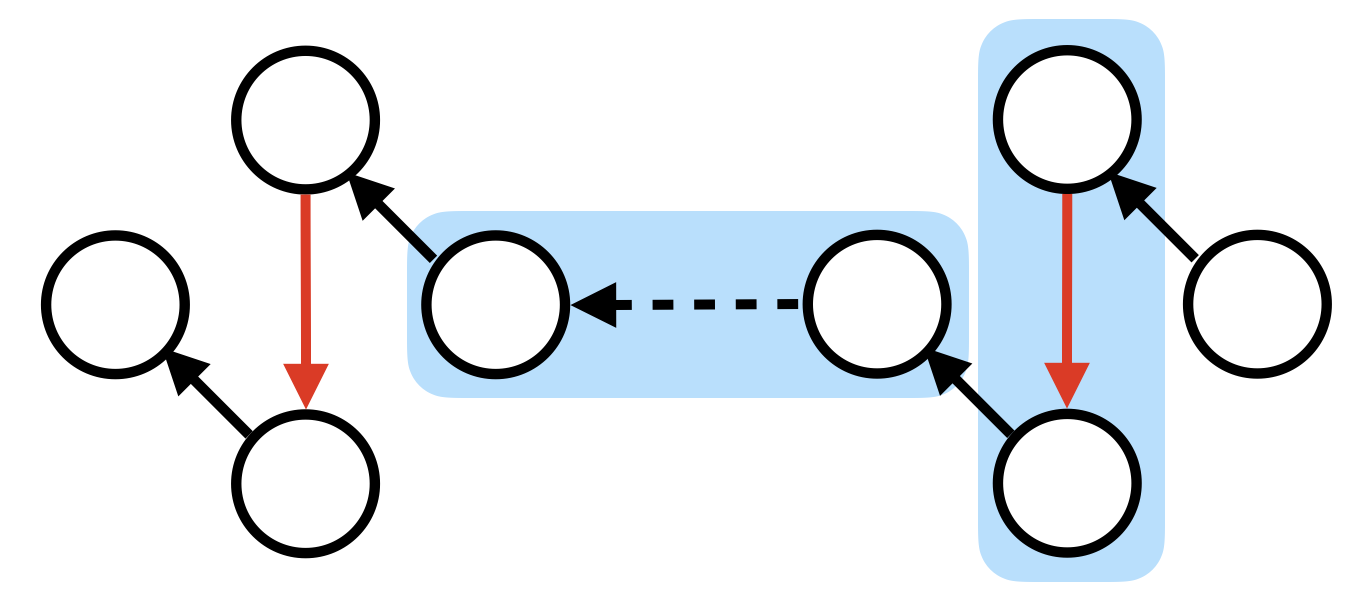}
    \raisebox{0.8cm}{\quad ,}
\end{align}
where we have included blue shading to identify the connection to the macronode wire, Fig.~\ref{eq:macronodeWireToBS}.
With the macronode wire strecthed out straight, we can identify the beam splitters that couple two macronode wires together, see column 1 of Fig.~\ref{fig:allCLusterStates}. In circuit form they are 
\begin{equation}\label{eq:U_BSL}
    \begin{split}
        \raisebox{-3em}{$\op B_{\text{BSL}} = $ \;}
\Qcircuit @C=1.5em @R=0.8em {
	& \qw & \qw & \bsbal[red]{1} & \qw & \nogate{1} \\
        & \qw & \bsbal{1}[.>] & \qw & \qw & \nogate{2} \\
        & \qw & \qw & \bsbal[red]{1} & \qw & \nogate{3} \\
        & \qw & \qw & \qw & \qw & \nogate{4}
}
\raisebox{-3em}{\; ,}
\end{split}
\end{equation}
corresponding to the orthogonal matrix
\begin{align} \label{eq:BSLmat}
    \BSmat_{\text{BSL}} 
    = \BSmat_\text{23} \BSmat_\text{34} \BSmat_\text{12}
    &=\frac{1}{2} \begin{bmatrix}
        \sqrt{2} & -\sqrt{2} & 0 & 0 \\
        1 & 1 & -1 & 1 \\
        1 & 1 & 1 & -1 \\
        0 & 0 & \sqrt{2} & \sqrt{2} 
    \end{bmatrix}.
\end{align}
Note that these are \emph{not} the same four modes as in Eqs.~\eqref{BSLfourmodes} and~\eqref{BSLfourmodes2}, which is evident from the beam-splitter styles. Rather, this four-mode grouping is best seen in Fig.~\ref{fig:allCLusterStates}(b).

\subsubsection*{Completing the BSL and connecting to the QRL}
The BSL's beam-splitter network, Eq.~\eqref{eq:U_BSL}, is \emph{incomplete} because it contains three beam splitters, not four, as described in Sec.~\ref{subsection:virtualbeam splitters}. The incompleteness is reflected in the fact that the BSL's matrix, Eq.~\eqref{eq:BSLmat}, contains elements other than $\pm \frac{1}{2}$, indicating that it does not create equal superpositions of position and momentum operators across all modes. 

The BSL can be completed by inserting a beam splitter $\bsop_{14}$ between modes 1 and 4 to create new CV cluster state that we call the \emph{completed} BSL (cBSL)\footnote{The choice $\bsop_{41}$ also works. That choice would give a different orthogonal matrix, connection to the QRL, and two-mode gate.} 
whose beam-splitter network, $\op B_{\text{cBSL}} \coloneqq \op{B}_{14} \op B_{\text{BSL}}$, is given in circuit form by 
\begin{equation}\label{eq:U_cBSL}
    \begin{split}
        \raisebox{-3em}{$\op B_{\text{cBSL}} = $ \;}
\Qcircuit @C=1.5em @R=0.8em {
	& \bsbal{3}[.>] & \qw & \bsbal[red]{1} & \qw & \nogate{1} \\
        & \qw & \bsbal{1}[.>] & \qw & \qw & \nogate{2} \\
        & \qw & \qw & \bsbal[red]{1} & \qw & \nogate{3} \\
        & \qw & \qw & \qw & \qw & \nogate{4}
}
        \raisebox{-3em}{\; ,}
\end{split}
\end{equation}
with associated orthogonal matrix, 
\begin{align}
    \BSmat_{\text{cBSL}} 
    = \BSmat_\text{14} \BSmat_\text{23} \BSmat_\text{34} \BSmat_\text{12}
    &=\frac{1}{2} \begin{bmatrix}
        1 & -1 & -1 & -1 \\
        1 & 1 & -1 & 1 \\
        1 & 1 & 1 & -1 \\
        1 & -1 & 1 & 1 
    \end{bmatrix}.
\end{align}

\begin{figure}[t!]
    \includegraphics[width=0.9\columnwidth]{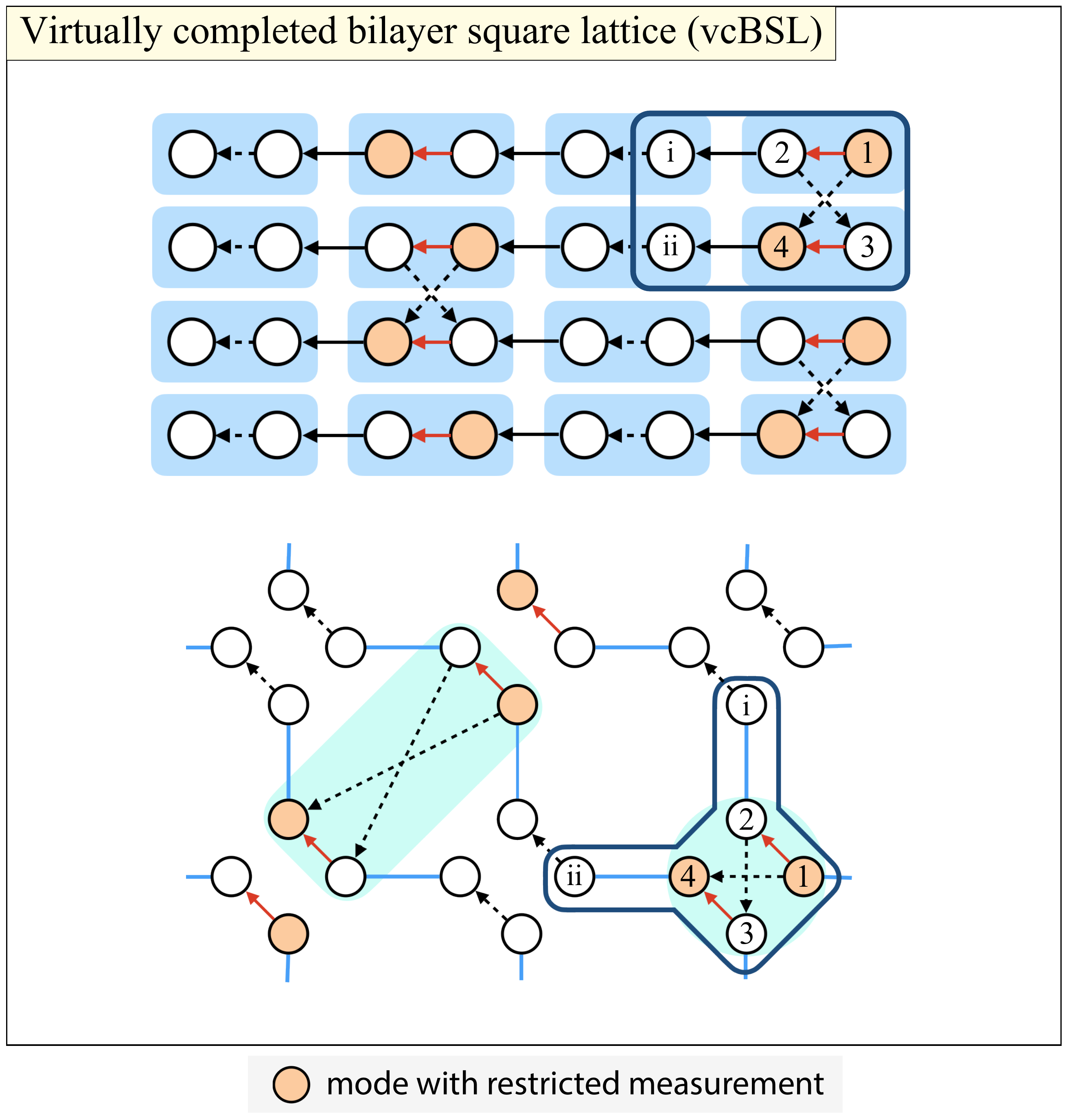}
    \caption{The BSL with chosen restricted measurement angles, allowing it to behave as the cBSL. Local modes in orange are measured in the same basis (same measurement angle), allowing insertion of virtual beam splitters, or virtual deletion of existing ones (see Sec.~\ref{subsection:virtualbeam splitters}). The virtual beam splitters are those between orange modes. Compare with the unrestricted BSL in Fig.~\ref{fig:allCLusterStates}. The choice here provides isolated two-mode teleportation gadgets in a sparse brickwork structure. 
    }\label{fig:cBSL}
\end{figure}

The cBSL's beam-splitter network can be written in terms of the QRL's under a single SWAP on the left and a parity operator, Eq.~\eqref{eq:parity}, on the right,
\begin{equation} \label{eq:U_cBSLTransformed}
    \op B_{\text{cBSL}} 
    = \text{SWAP}_{34} \op B_{\text{QRL}} \op F_4^2 .
\end{equation}
 which has a circuit  
 \begin{equation}\label{eq:U_cBSLTransformed_circuit}
     \begin{split}
         \raisebox{-3em}{$\op B_\text{cBSL} = $ \;}
 \Qcircuit @C=1.5em @R=0.8em {
 	& \qw & \qw & \bsbal{2}[.>] & \bsbal[red]{1} & \qw & \qw & \nogate{1}  \\
         & \qw & \bsbal{2}[.>] & \qw & \qw & \qw & \qw  & \nogate{2} \\
         & \qswap & \qw & \qw & \bsbal[red]{1} & \qw & \qw & \nogate{3} \\
         & \qswap \qwx & \qw & \qw & \qw & \gate{F^2} & \qw & \nogate{4}
 }
         \raisebox{-3em}{\; .}
 \end{split}
\end{equation}
This can be verified using the circuit identity in Eq.~\eqref{swapcircuitidentity}. 
Now consider the cBSL beam-splitter network in the two-mode teleportation gadget, Circuit~\eqref{eq:2modegadget}. 
Replacing $\op B_\text{network}$ with Eq.~\eqref{eq:U_cBSLTransformed}, pushing the SWAP into the measurements, and bouncing~\cite{Walshe2020} the $\op F^2$ to the second output mode, we get the two-mode gate, Eq.~\eqref{eq:twomodeGaussiangate}, for the cBSL,
\begin{align} \label{VtwoMode_cBSL}
    \op{V}_{\text{cBSL}}^{(2)}(\vec \theta)
    &= \op F_2^2 \bsop_{21} [\op{V}_1(\theta_1, \theta_2) \otimes \op{V}_2(\theta_4, \theta_3)] \bsop_{12} ,
\end{align}
where the vector $\vec{\theta}$ specifies the measurement angles in the initial gadget, Circuit~\eqref{eq:2modegadget}.
In circuit form, the two-mode gate is
\begin{equation}\label{twomodeGate_cBSL}
    \begin{split}
\Qcircuit @C=1.1em @R=1.25em {
		& \qw & \qw&\gate{V(\theta_1,\theta_2)}  &  \bsbal{1} &\qw& \\
		& \gate{F^2} &\bsbal{-1}& \gate{V(\theta_4,\theta_3)} & \qw&\qw& \\
}
\end{split}
\; .
\end{equation}
Up to the parity operator, this gate is nearly the same as that for the QRL, Circuit~\eqref{VtwoMode_QRL_circuit}, except that the measurement angles $\theta_3$ and $\theta_4$ have swapped positions.
The equivalence has revealed that the cBSL can implement any teleported gate that the QRL can by properly choosing the measurement angles.

\subsubsection*{Virtually completing the BSL}

In the teleportation gadget, Eq.~\eqref{eq:2modegadget}, a beam splitter network is followed by homodyne detection. 
This gives us the freedom to insert beam splitters into the measurement side of the circuit virtually by restricting measurement angles, see Sec.~\ref{subsection:virtualbeam splitters}.
Thus, we do not need to generate the cBSL physically---instead, 
we can complete the BSL virtually, allowing it to behave as if it were the cBSL. 
To do so, restrict the measurement
bases on modes 1 and 4 to be the same (indicated by the orange-shaded modes in Fig.~\ref{fig:cBSL}) to insert a virtual beam splitter between those modes,
\begin{equation}\label{eqform:U_cBSL}
    \brasub{m_1}{p_\theta,1} \otimes \! \brasub{m_4}{p_\theta,4} \op B_{\text{BSL}} 
    =  \! \! \! \brasub{\tfrac{m_1 - m_4}{\sqrt{2}} }{p_\theta,1} \otimes \! \! \brasub{\tfrac{m_1 + m_4}{\sqrt{2}}}{p_\theta,4} \op B_{\text{cBSL}}.
\end{equation}

In the teleportation gadget, this gives the circuit for the \emph{virtually completed BSL} (vcBSL),
\begin{equation}\label{eq:U_cBSL}
    \begin{split}
        \qquad
\Qcircuit @C=1.5em @R=0.8em {
	 \lstick{\brasub{\tfrac{m_1 - m_4}{\sqrt{2}}}{p_{\theta_1}}} & \bsbal{3}[-->] & \qw & \bsbal[red]{1} & \qw & \nogate{1} 
	 \\
    \lstick{\brasub{m_2}{p_{\theta_2}}}& \qw           & \bsbal{1}[.>] & \qw  & \qw & \nogate{2} 
    \\
    \lstick{\brasub{m_3}{p_{\theta_3}}}& \qw           & \qw & \bsbal[red]{1} & \qw & \nogate{3} 
    \\
    \lstick{\brasub{\tfrac{m_1 + m_4}{\sqrt{2}}}{p_{\theta_4 = \theta_1}}}&  \qw           & \qw & \qw & \qw & \nogate{4}
}
        \raisebox{-3em}{\; ,}
\end{split}
\end{equation}
with the virtual beam splitter indicated as the dashed arrow.

As long as this measurement angle restriction is respected, the vcBSL can be used in the teleportation gadget as if it were the cBSL. (Feedforward operations should also respect the linear transformations on the outcomes.) Thus, the vcBSL has the same two-mode gate as the cBSL under that restriction:
\begin{align} \label{VtwoMode_BSL}
    \op{V}_{\text{vcBSL}}^{(2)}(\vec \theta) & = \op{V}_{\text{cBSL}}^{(2)}(\vec \theta) |_{\theta_1=\theta_4} 
\end{align}
given in circuit form as
\begin{equation}\label{twomodeGate_BSL}
    \begin{split}
\Qcircuit @C=1.1em @R=1.25em {
		& \qw & \qw&\gate{V(\theta_1,\theta_2)}  &  \bsbal{1} &\qw& \\
		& \gate{F^2} &\bsbal{-1}& \gate{V(\theta_4 = \theta_1,\theta_3)} & \qw&\qw& \\
}
\end{split}
\; .
\end{equation}
Although any teleported gate that the QRL can implement is also available to the cMBSL via the relation in Eq.~\eqref{eq:U_cMBSLTransformed}, 
this is not the case for the vcBSL---the measurement restrictions required to insert the missing beam splitter constrain the available gates.

In the larger structure of the cluster state, freedom to restrict measurement angles in order to virtually add or remove beam splitters allows different connections between modes and between two-mode teleportation gadgets.  
In practice, restricted measurement angles can be made on-the-fly to tailor the connectivity to the problem at hand. 
Figure~\ref{fig:cBSL} illustrates a choice that produces a brickwork-like structure, similar to that of the QRL, although here the brickwork structure is more sparse---this choice allows for two-mode gates followed by single-mode gates.
\blk

\subsection{Double bi-layer square lattice (DBSL)}\label{sec:DBSL}
The DBSL, introduced in Ref.~\cite{Larsen2019detCVCS}, is a modification of the BSL with a construction that does not begin with the unit described in Eq.~\eqref{BSLfourmodes}. 
Rather, constructing the DBSL begins by producing a macronode wire, Eq.~\eqref{eq:macronodeWireToBS},
which is then wound up into a cylinder by introducing non-local beam-splitter connections~\cite{Larsen2020architecture}. 
Position measurements can then be used to cut out two-dimensional cluster-state sheets with the structure of the DBSL. (Further position measurements can reduce this DBSL to a BSL~\cite{Larsen2019detCVCS}).

Larsen \emph{et al.}~\cite{Larsen2019detCVCS} experimentally generated a DBSL cluster state and in a follow-up work~\cite{Larsen2020detgatesCVCS} used the state to implement a variety of measurement-based Gaussian operations on one and two modes. To do so, they used an alternate perspective on the DBSL cluster state that identifies macronode wires running perpendicular to those in 
column 3 of Figure~\ref{fig:allCLusterStates}(c),
again using position measurements to pare the DBSL down to a simpler cluster state.
That subfigure
depicts a portion of one of these two-dimensional sheets, revealing that the DBSL can be locally viewed as separate macronode wires connected periodically by beam splitters.

Column 1 of Figure~\ref{fig:allCLusterStates}(c,) depicts four of these coupled macronode wires stretched out into the rows of a two-mode cluster state. We will use this representation to propose a method of using the DBSL that doesn't require any such deletions.

The box indicates the teleportation gadget containing the four-mode beam-splitter network $\op B_\text{DBSL}$,
\begin{equation}\label{eq:U_DBSL}
    \begin{split}
        \raisebox{-3em}{$\op B_{\text{DBSL}} = $ \;}
\Qcircuit @C=1.5em @R=0.8em {
        & \qw & \qw & \bsbal[red]{1} & \qw & \nogate{1} \\
        & \qw & \qw & \qw & \qw & \nogate{2} \\
        & \qw & \bsbal{-1}[.>] & \bsbal[red]{1} & \qw & \nogate{3} \\
        & \qw & \qw & \qw & \qw & \nogate{4}
}
\raisebox{-3em}{\; ,}
\end{split}
\end{equation}
which has unitary matrix
\begin{align}
    \BSmat_{\text{DBSL}} 
    =
    \BSmat_\text{32} 
    \BSmat_\text{34} \BSmat_\text{12}
    &=\frac{1}{2} \begin{bmatrix}
        \sqrt{2} & -\sqrt{2} & 0 & 0 \\
        1 & 1 & 1 & -1 \\
        -1 & -1 & 1 & -1 \\
        0 & 0 & \sqrt{2} & \sqrt{2} 
    \end{bmatrix}.
\end{align}

While the beam-splitter networks for the BSL and DBSL differ only in the 
direction 
of a single beam splitter [compare Eq.~\eqref{eq:U_BSL} to Eq.~\eqref{eq:U_DBSL}], the cluster states at large have a more significant difference---the DBSL has double the connectivity between macronode wires; compare the DBSL and BSL in column 1 of Fig.~\ref{fig:allCLusterStates}. More connectivity provides more access to two-mode gates, as they can only be done between connected wires.

\subsubsection*{Completing the DBSL and connecting to the QRL}
\begin{figure}
    \includegraphics[width=0.9\columnwidth]{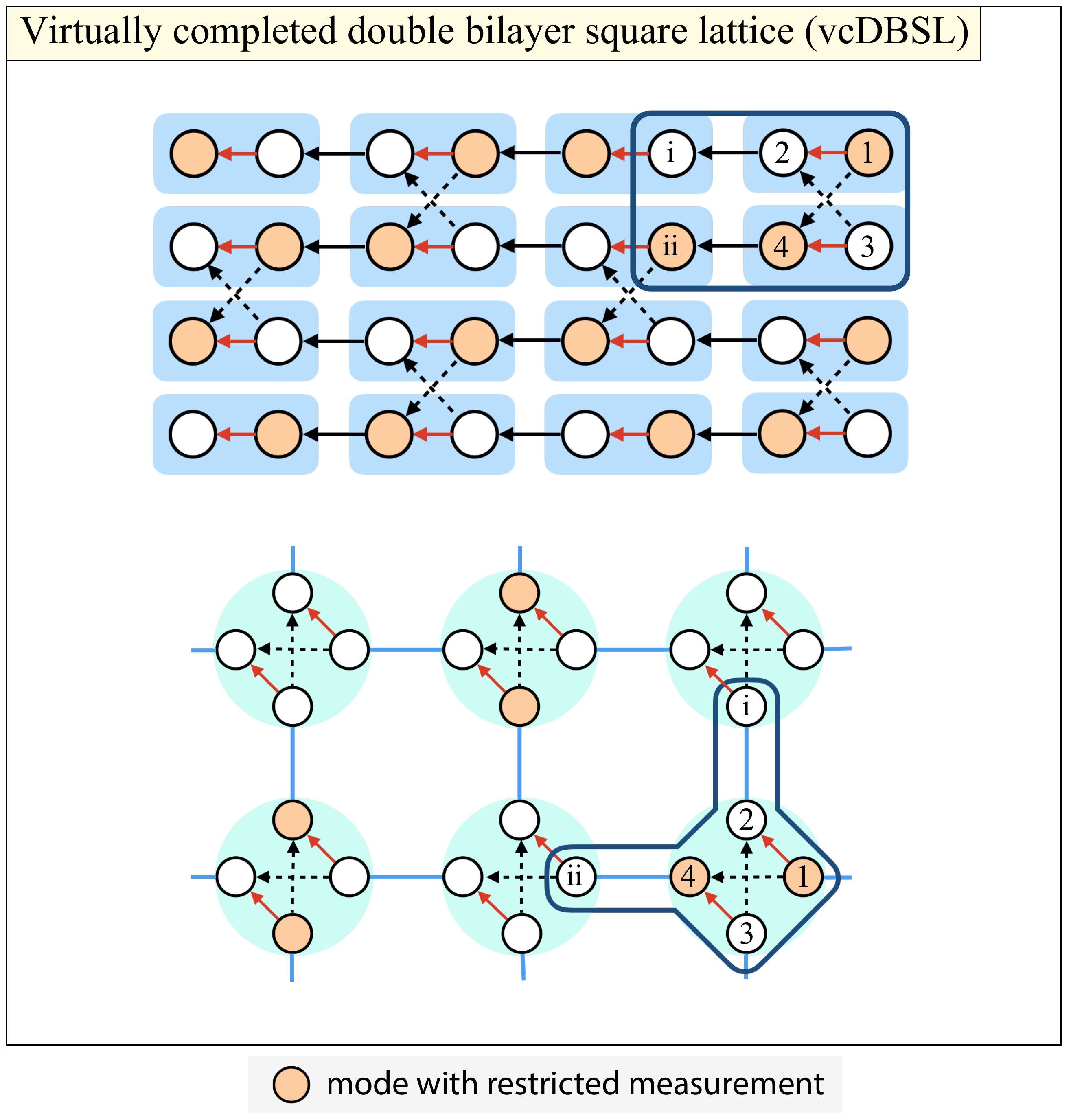}
    \caption{The DBSL with restricted measurement angles, allowing it to function as the cDBSL. Local modes in orange are measured in the same basis to virtually insert or delete beam splitters, see Sec.~\ref{subsection:virtualbeam splitters}. Compare with the unrestricted BSL in Fig.~\ref{fig:allCLusterStates}. The virtual beam splitters are those between orange modes. The choice here provides isolated teleportation gadgets in a dense brickwork structure like that of the QRL.}\label{fig:cDBSL}
\end{figure}

The DBSL's beam-splitter network, Eq.~\eqref{eq:U_DBSL}, is incomplete in the same way the BSL's is. 
In the same way, inserting a beam-splitter between modes 1 and 4
gives the \emph{completed DBSL} (cDBSL) with beam-splitter network $\op B_{\text{cDBSL}} \coloneqq \op{B}_{14} \op B_{\text{DBSL}}$. In circuit form,
\begin{equation}\label{eq:U_cDBSL}
    \begin{split}
        \raisebox{-3em}{$\op B_{\text{cDBSL}} = $ \;}
\Qcircuit @C=1.5em @R=0.8em {
	& \bsbal{3}[.>] & \qw & \bsbal[red]{1} & \qw & \nogate{1} \\
        & \qw & \qw & \qw & \qw & \nogate{2} \\
        & \qw & \bsbal{-1}[.>] & \bsbal[red]{1} & \qw & \nogate{3} \\
        & \qw & \qw & \qw & \qw & \nogate{4}
\raisebox{-3em}{\; ,}
}
\end{split}
\end{equation}
and with associated unitary matrix
\begin{align}
    \BSmat_{\text{cDBSL}} 
    =
     \BSmat_\text{14} \BSmat_\text{32}
    \BSmat_\text{34} \BSmat_\text{12}
    &=\frac{1}{2} \begin{bmatrix}
       1  & -1 & -1 & -1 \\
        1 & 1 & 1 & -1 \\
        -1 & -1 & 1 & -1 \\
        1 & -1 & 1 & 1 
    \end{bmatrix}.
\end{align}

The completed DBSL (cDBSL) beam-splitter network can be written in terms of the QRL under a parity operation and two SWAPs on the left and a parity operation on the right:
\begin{equation}
    \op B_{\text{cDBSL}}
    = \op F^2_3 \text{SWAP}_{23} \text{SWAP}_{34} \op B_{\text{QRL}} \op F_4^2
    \, ,
\end{equation}
which has a circuit  
\begin{equation}\label{eq:U_cDBSLTransformed}
    \begin{split}
        \raisebox{-3em}{$\op B_\text{cDBSL} = $ \;}
\Qcircuit @C=1.3em @R=0.8em {
	& \qw & \qw& \qw & \qw & \bsbal{2}[.>] & \bsbal[red]{1} & \qw & \qw  & \nogate{1} \\
        & \qw & \qswap   & \qw & \bsbal{2}[.>] & \qw & \qw & \qw & \qw  & \nogate{2} \\
        & \gate{F^2} & \qswap \qwx  & \qswap & \qw & \qw & \bsbal[red]{1} & \qw & \qw & \nogate{3} \\
        & \qw & \qw  & \qswap \qwx & \qw & \qw & \qw & \gate{F^2} & \qw & \nogate{4}
}
        \raisebox{-3em}{\; .}
\end{split}
\end{equation}
Using the cBSL beam-splitter network in the two-mode teleportation gadget, Circuit~\eqref{eq:2modegadget}, gives a two-mode gate
\begin{equation}
    \op V_\text{cDBSL}^{(2)}(\mat \theta) 
    = 
    \op F^2_2 \op B_{21}[\op V_1(\theta_1, \theta_4) \otimes \op V_2(\theta_2,\theta_3)]\op B_{12} .
\end{equation}
where the vector $\vec{\theta}$ specifies the measurement angles in the initial gadget, Circuit~\eqref{eq:2modegadget}. In circuit form, the two-mode gate is
\begin{equation}\label{twomodeGate_BSL}
    \begin{split}
\Qcircuit @C=1.1em @R=1.25em {
		& \qw & \qw&\gate{V(\theta_1,\theta_4)}  &  \bsbal{1} &\qw& \\
		& \gate{F^2} &\bsbal{-1}& \gate{V(\theta_2,\theta_3)} & \qw&\qw& \\
}
\end{split}
\; .
\end{equation}
Up to the parity operator, this gate is nearly the same as that for the QRL, Circuit~\eqref{VtwoMode_QRL_circuit}, except that some measurement angles have swapped positions.

\subsubsection*{Virtually completing the DBSL}

Now consider the DBSL in the two-mode teleportation gadget, Circuit~\eqref{eq:2modegadget}.
We complete the DBSL's beam-splitter network in the same way as we did for the BSL's, Eq.~\eqref{eqform:U_cBSL}: restricting the measurement bases on modes 1 and~4 and inserts a beam splitter between these modes. Doing so gives the circuit for the \emph{virtually completed DBSL} (vcDBSL),
\begin{equation}\label{eq:U_vcDBSL}
    \begin{split}
        \qquad  \qquad
\Qcircuit @C=1.5em @R=0.8em {
	\lstick{\brasub{\tfrac{m_1 - m_4}{\sqrt{2}}}{p_{\theta_1}}}           & \bsbal{3}[-->] & \qw            & \bsbal[red]{1} & \qw & \nogate{1} \\
    \lstick{\brasub{m_2}{p_{\theta_2}}}                                   & \qw            & \qw            & \qw            & \qw & \nogate{2} \\
    \lstick{\brasub{m_3}{p_{\theta_3}}}                                   & \qw            & \bsbal{-1}[.>] & \bsbal[red]{1} & \qw & \nogate{3} \\
    \lstick{\brasub{\tfrac{m_1 + m_4}{\sqrt{2}}}{p_{\theta_4 = \theta_1}}}& \qw            & \qw            & \qw            & \qw & \nogate{4}
}
        \raisebox{-3em}{\; ,}
\end{split}
\end{equation}
with the virtual beam splitter indicated as the dashed arrow.

The vcDBSL functions like the cDBSL in the teleportation gadget, with the same two-mode gate under the measurement restrictions above,
\begin{equation}
    \op V_\text{vcDBSL}^{(2)}(\mat \theta) 
    = \op V_\text{cDBSL}^{(2)}(\mat \theta) \vert_{\theta_1 = \theta_4}
\end{equation}
represented in circuit form as
\begin{equation}\label{twomodeGate_DBSL}
    \begin{split}
\Qcircuit @C=1.1em @R=1.25em {
		& \qw & \qw&\gate{V(\theta_1,\theta_4 = \theta_1)}  &  \bsbal{1} &\qw& \\
		& \gate{F^2} &\bsbal{-1}& \gate{V(\theta_2,\theta_3)} & \qw&\qw& \\
}
\end{split}
\; .
\end{equation}

In the larger structure of the cluster state, 
Fig.~\ref{fig:cDBSL} illustrates a choice of restricted measurement angles that gives a highly connected brickwork structure just like that of the QRL (and more densely connected than the BSL, Fig.~\ref{fig:cBSL}).

\subsection{3D DBSL Mikkel-splitter gadget (MSG)} \label{sec:Mikkel}
After their work with the DBSL, Larsen \emph{et al.}~\cite{Larsen2021fault} extended the DBSL into a 3D cluster state that utilises variable beam splitters to choose the location of their two mode gate. In doing so, they identify a teleportation gadget of the same form as Circuit~\eqref{eq:2modegadget}, which we dub the \emph{Mikkel-splitter gadget}~(MSG) after the given name of the first author. The MSG's beam-splitter network $\op{B}_\text{MSG}$ is
\begin{equation}\label{eq:U_DBSL-V}
    \begin{split}
        \raisebox{-3em}{$\op B_\text{MSG} = $ \;}
\Qcircuit @C=1.5em @R=0.8em {
		& \bsbal{3}[.>] & \qw & \bsbal[red]{1} & \qw & \nogate{1} \\
        & \qw & \qw & \qw & \qw & \nogate{2} \\
        & \qw & \qw & \bsbal[red]{1} & \qw & \nogate{3} \\
        & \qw & \qw & \qw & \qw & \nogate{4}
}
\raisebox{-3em}{\; ,}
\end{split}
\end{equation}
with the unitary matrix
\begin{align}
    \BSmat_\text{MSG} 
    =  \BSmat_\text{14} \BSmat_\text{34} \BSmat_\text{12}
    &=\frac{1}{2} \begin{bmatrix}
        1 & -1 & -1 & -1 \\
        \sqrt{2} & \sqrt{2} & 0 & 0 \\
        0 & 0 & \sqrt{2} & -\sqrt{2} \\
        1 & -1 & 1 & 1 
    \end{bmatrix}.
\end{align}

\subsubsection*{Completing the MSG and connecting to the QRL}
The MSG's beam-splitter network can be completed in a similar way to that of the BSL and the DBSL, with the virtual beam splitter instead being inserted between modes 2 and 3. 
Restricting these measurement bases leads to the same completed beam-splitter network and associated orthogonal matrix as the BSL. That is,
$\op B_{\text{cMSG}} \coloneqq \op{B}_{23} \op B_{\text{MSG}} = \op B_\text{cBSL}$, 
\begin{equation}\label{eq:U_cBSL}
    \begin{split}
        \raisebox{-3em}{$\op B_{\text{cMSG}} = $ \;}
\Qcircuit @C=1.5em @R=0.8em {
	& \qw & \bsbal{3}[.>] & \bsbal[red]{1} & \qw & \nogate{1} \\
        & \bsbal{1}[.>] & \qw & \qw & \qw & \nogate{2} \\
        & \qw & \qw & \bsbal[red]{1} & \qw & \nogate{3} \\
        & \qw & \qw & \qw & \qw & \nogate{4}
}
        \raisebox{-3em}{\; ,}
\end{split}
\end{equation}
with associated orthogonal matrix, 
\begin{align}
    \BSmat_{\text{cMSG}} 
    = \BSmat_\text{23} \BSmat_\text{14} \BSmat_\text{34} \BSmat_\text{12}
    &=\frac{1}{2} \begin{bmatrix}
        1 & -1 & -1 & -1 \\
        1 & 1 & -1 & 1 \\
        1 & 1 & 1 & -1 \\
        1 & -1 & 1 & 1 
    \end{bmatrix}.
\end{align}
Consequently, the connection to the QRL is also identical, Eq.~\eqref{eq:U_cBSLTransformed}, and the two-mode gate in the teleportation gadget is the same as the BSL, \eqref{VtwoMode_cBSL},
\begin{align} \label{VtwoMode_cMSG}
    \op{V}_{\text{cMSG}}^{(2)}(\vec \theta) = \op{V}_{\text{cBSL}}^{(2)}(\vec \theta)
    \, ,
\end{align}
with circuit form given in Circuit~\eqref{twomodeGate_cBSL}.

Reference~\cite{Larsen2021fault}'s 3D cluster state reduces down to a form of the DBSL when the second variable beam splitter therein is removed. This connection is clear from their identical completed four-splitter. However, this connection also identifies the useful fact that these four-splitters extend beyond 2D lattices and can be used to define two-mode gates wherever four modes meet together at a lattice site---regardless of dimension of the greater lattice. This will become more important as we generalise these four-splitters in Sec.~\ref{sec:four-splitters}.

\subsubsection*{Virtually completing the MSG}
Now consider the MSG in the two-mode teleportation gadget, Circuit~\eqref{eq:2modegadget}.
We complete the MSG's beam-splitter network by restricting the measurement bases on modes 2 and~3 and to insert a virtual beam splitter between these modes. Doing so gives the circuit for the \emph{virtually completed MSG} (vcMSG),
\begin{equation}\label{eq:U_vcMSG}
    \begin{split}
        \qquad  \qquad
\Qcircuit @C=1.5em @R=0.8em {
    \lstick{\brasub{m_1}{p_{\theta_1}}}                                  & \qw            & \bsbal{3}[.>] & \bsbal[red]{1} & \qw & \nogate{1} \\
    \lstick{\brasub{\tfrac{m_2 - m_3}{\sqrt{2}}}{p_{\theta_2}}}          & \bsbal{1}[-->] & \qw            & \qw            & \qw & \nogate{2} \\
    \lstick{\brasub{\tfrac{m_2 + m_3}{\sqrt{2}}}{p_{\theta_3=\theta_2}}} & \qw            & \qw            & \bsbal[red]{1} & \qw & \nogate{3} \\
    \lstick{\brasub{m_4}{p_{\theta_4}}}                                  & \qw            & \qw            & \qw            & \qw & \nogate{4}
}
        \raisebox{-3em}{\; ,}
\end{split}
\end{equation}
with the virtual beam splitter indicated as the dashed arrow.

In the teleportation gadget, the vcMSG functions like the cMSG with the same two-mode gate under the measurement-angle restrictions,
    \begin{equation}
        \op{V}_{\text{vcMSG}}^{(2)}(\vec \theta)  = \op{V}_{\text{cMSG}}^{(2)}|_{\theta_2=\theta_3}
    \end{equation}
in circuit form
\begin{equation}\label{twomodeGate_cMSG}
    \begin{split}
\Qcircuit @C=1.1em @R=1.25em {
		& \qw & \qw&\gate{V(\theta_1,\theta_2)}  &  \bsbal{1} &\qw& \\
		& \gate{F^2} &\bsbal{-1}& \gate{V(\theta_4,\theta_3=\theta_2)} & \qw&\qw& \\
}
\end{split}
\; .
\end{equation}

\subsection{Modified bi-layer square lattice (MBSL)}\label{sec:MBSL}

\begin{figure}
    \centering
    \includegraphics[width=0.9\columnwidth]{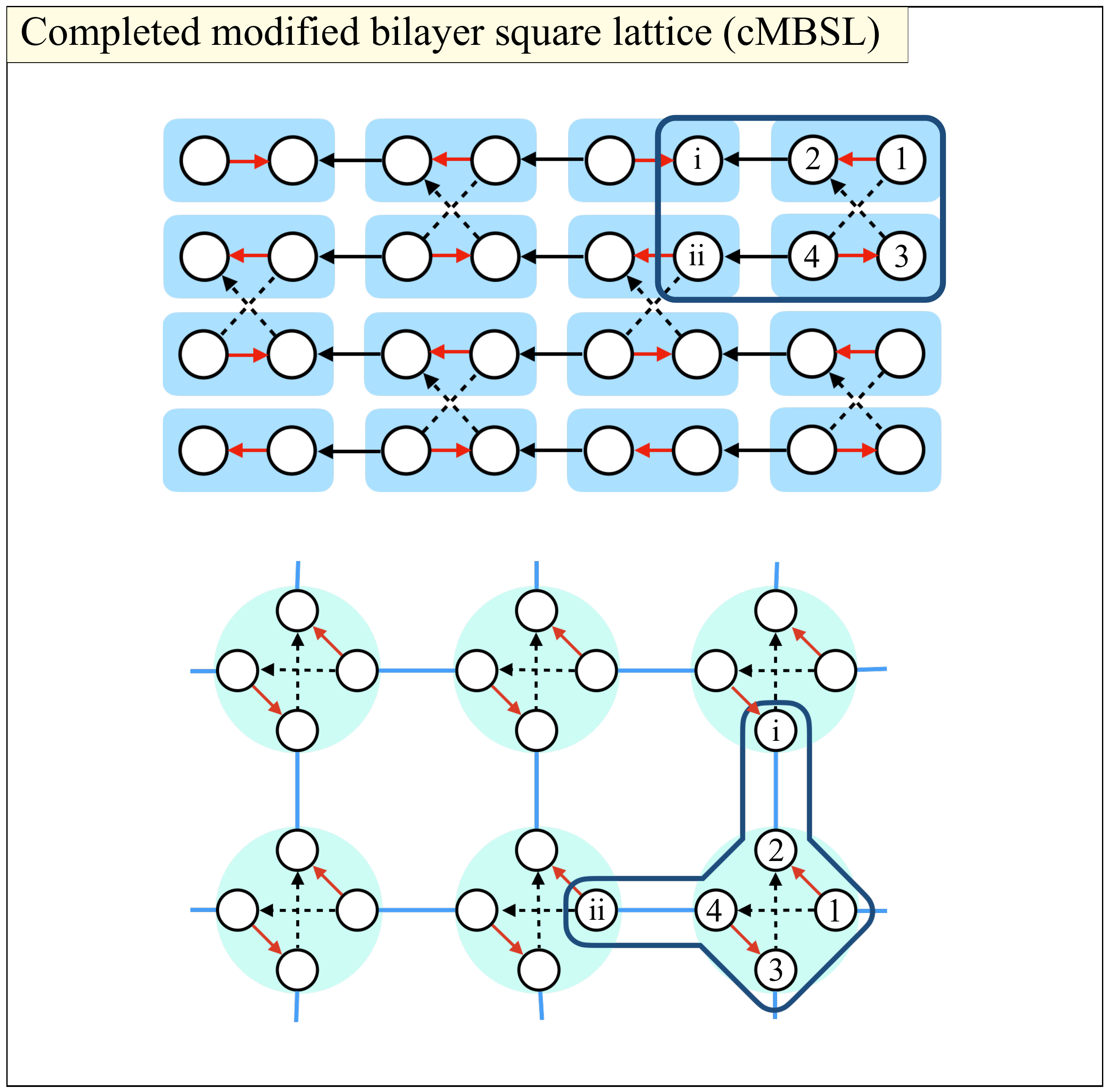}
    \caption{Adding an additional beam splitter to the MBSL produces the completed MBSL (cMBSL). The MBSL cannot be completed virtually by restricting measurement angles. The teleportation gadgets form a dense brickwork structure like that of the QRL.}
    \label{fig:MBSLfixedAngles}
\end{figure}

Asavanant \emph{et al.}~\cite{Asavanant2019detCVCS} introduced and experimentally created the modified bilayer-square-lattice (MBSL) cluster state. It is constructed by first coupling TMSSs into basic four-mode groupings,
\begin{align}\label{eq:MBSLclusterToBS}
    \includegraphics[width=0.4\columnwidth]{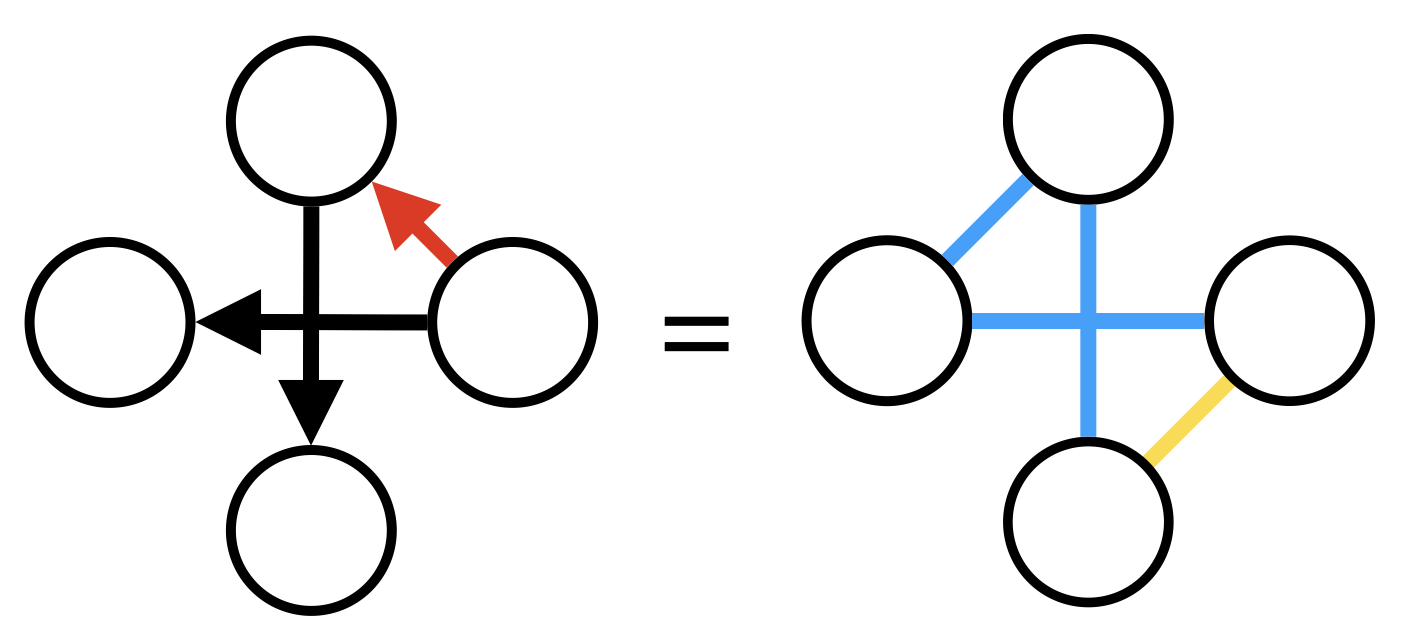}
\end{align}
and then stitching these groupings together using additional beam splitters; this is shown in column 3 of Fig.~\ref{fig:allCLusterStates}. When viewed from the perspective of coupled macronode wires, the MBSL has a different structure than the others considered in this work. That is because, in addition to the beam-splitter network being incomplete---a trait shared by the other cluster states---the macronode wires in the MBSL themselves are also incomplete, which can be seen in column 1 of Fig.~\ref{fig:allCLusterStates} as missing horizontal red beam splitters.. 

The beam-splitter network for the MBSL is
\begin{equation}\label{eq:U_MBSL}
    \begin{split}
        \raisebox{-3em}{$\op B_{\text{MBSL}} = $ \;}
\Qcircuit @C=1.5em @R=0.8em {
		& \bsbal{3}[.>] & \qw & \qw & \qw & \nogate{1} \\
        & \qw & \qw & \qw & \qw & \nogate{2} \\
        & \qw & \bsbal{-1}[.>] & \qw & \qw & \nogate{3} \\
        & \qw & \qw & \bsbal[red]{-1} & \qw & \nogate{4}
}
\raisebox{-3em}{\; ,}
\end{split}
\end{equation}
with the corresponding orthogonal matrix
\begin{align} \label{eq:MBSLmatrix}
    \BSmat_{\text{MBSL}} &=\frac{1}{2} \begin{bmatrix}
        \sqrt{2} & 0 & 1 & -1 \\
        0 & \sqrt{2} & 1 & 1 \\
        0 & -\sqrt{2} & 1 & 1 \\
        \sqrt{2} & 0 & -1 & 1 
    \end{bmatrix}.
\end{align}

The larger MBSL cluster-state construction has a brickwork-like structure, but as mentioned, it is missing critical beam splitters that complete the macronode wires themselves.

\subsubsection*{Completing the MBSL and connecting to the QRL}

Physically inserting the fourth beam splitter gives a new CV cluster state---the completed MBSL (cMBSL). The cMBSL's beam-splitter network is 
$\op B_{\text{cMBSL}} \coloneqq \op B_{\text{MBSL}} \op B_{12}$, given in circuit form as
\begin{equation}\label{eq:U_cMBSL}
    \begin{split}
        \raisebox{-3em}{$\op B_{\text{cMBSL}} = $ \;}
\Qcircuit @C=1.5em @R=0.8em {
	& \bsbal{3}[.>] & \qw & \bsbal[red]{1} & \qw & \nogate{1} \\
        & \qw & \qw & \qw & \qw & \nogate{2} \\
        & \qw & \bsbal{-1}[.>] & \qw & \qw & \nogate{3} \\
        & \qw & \qw & \bsbal[red]{-1} & \qw & \nogate{4}
\raisebox{-3em}{\; ,}
}
\end{split}
\end{equation}
with associated unitary matrix
\begin{align}
    \BSmat_{\text{cMBSL}} 
    =
     \BSmat_\text{14} \BSmat_\text{32}
    \BSmat_\text{43} \BSmat_\text{12}
    &=\frac{1}{2} \begin{bmatrix}
       1  & -1 & 1 & -1 \\
        1 & 1 & 1 & 1 \\
        -1 & -1 & 1 & 1 \\
        1 & -1 & -1 & 1 
    \end{bmatrix}.
\end{align}
The connection to the QRL is given by
\begin{align}
    \op B_\text{cMBSL} = \op F^2_3 \text{SWAP}_{14} \text{SWAP}_{23} \text{SWAP}_{34} \op B_\text{QRL}
\end{align}
with the circuit
\begin{equation}\label{eq:U_cMBSLTransformed}
    \begin{split}
        \raisebox{-3em}{$\op B_\text{cMBSL} = $ \;}
\Qcircuit @C=1.3em @R=0.8em {
      & \qw        & \qswap & \qw& \qw & \qw & \bsbal{2}[.>] & \bsbal[red]{1} & \qw & \qw  & \nogate{1} \\
      & \qw        & \qw \qwx & \qswap   & \qw & \bsbal{2}[.>] & \qw & \qw & \qw & \qw  & \nogate{2} \\
      & \gate{F^2} & \qw \qwx & \qswap \qwx  & \qswap & \qw & \qw & \bsbal[red]{1} & \qw & \qw & \nogate{3} \\
      & \qw        & \qswap \qwx& \qw  & \qswap \qwx & \qw & \qw & \qw & \gate{F^2} & \qw & \nogate{4}
}
        \raisebox{-3em}{\; .}
\end{split}
\end{equation}
Since the fourth beam splitter was physically inserted into the circuit, the completed MBSL has no measurement restrictions in its teleported two-mode gate:
    \begin{equation}
        \op V_\text{cMBSL}^{(2)}(\mat \theta) =  \op{F}_2^2 \op B_{21}[\op V_1(\theta_4, \theta_3) \otimes \op V_2(\theta_1,\theta_2)]\op B_{12} .
    \end{equation}
and can perform any gate that the QRL can via Eq.~\eqref{eq:U_cMBSLTransformed}.     
The larger cluster state structure of the cMBSL is the same as that for the cDBSL shown in Fig.~\ref{fig:cDBSL}. 
This is the same brickwork structure as the QRL, albeit with a different beam-splitter network. Thus, the cMBSL is as versatile as the QRL is for cluster-state computing as the QRL is---it can implement the same gates and can do it as often as the QRL can.

Unlike the other CV cluster states above, the MBSL cannot be completed by restricting measurement bases, because its missing fourth beam splitter is not on the measurement side of the circuit. Rather, it is on the state side of the circuit, meaning that a beam splitter must be physically included in the circuit---see Sec.~\ref{sec:incompletefour-splitters} for more details.

\section{Cluster-state computing with the GKP code} \label{sec:GKPcomputingwithCVCS}

Recent work~\cite{walshe2021streamlined} showed that the QRL is compatible with the Gottesman-Kitav-Preskill (GKP) encoding of a qubit into a mode~\cite{GKP}. %
This compatibility rests on two key features~\cite{walshe2021streamlined}.
First, the ability to perform GKP error correction can be built into the QRL cluster state itself by replacing the usual TMSSs with GKP Bell pairs made from GKP qunaught states (see Eq.~\eqref{eq:GKPqunaught} below) and beam splitters.
As the modes of the cluster state are measured, the potentially noisy input states to each teleportation gadget are automatically projected into the GKP code space, which is followed by an outcome-dependent correction.
Second, choosing appropriate measurement angles in the QRL teleportation gadget gives a complete generating set of GKP Clifford gates. These gates are executed in a single teleportation step and require only four modes in the beam-splitter network. 

Using GKP Bell pairs in the cluster state allows GKP error correction to be performed at the same time as each Clifford gate. 
Walshe \emph{et al.}~\cite{walshe2021streamlined} calculated the logical gate noise 
from simultaneous implementation of a GKP Clifford and GKP error correction 
and found that, given GKP qunaught states of high enough quality, fault tolerance is achievable by concatenation with a larger qubit code. 

The above features are not exclusive to the QRL.
Using the equivalence of the other cluster states to the QRL, established Sec.~\ref{sec:clusterStates}, we show here that
(a)~each cluster state can implement a full generating set of GKP Cliffords while at the same time implementing GKP error correction, and
(b)~the associated GKP Clifford-gate noise is identical to that of the QRL, thus making any of the cluster states suitable for fault-tolerant quantum computation.

\subsection{Constructing cluster states with GKP Bell pairs}\label{subsec:GKPBellpairs}

GKP Bell pairs are created by mixing two 
GKP qunaught states
on a beam splitter~\cite{Walshe2020}.
We consider (unnormalized) energy-constrained qunaught states, also called \emph{noisy qunaught states}, which are defined as
\begin{align} \label{eq:GKPqunaught}
    \ket{\bar \varnothing} = \op N(\beta) \ket{\varnothing} &= \op N(\beta) \int ds \; \Sha_{\sqrt{2\pi}}(s)\qket{s} ,
\end{align}
with energy damping operator
       $ \op N(\beta) \coloneqq  e^{-\beta \op a^\dag \op a}.$ 
The position (and momentum) noisy qunaught wavefunction is a comb of Gaussians with broad Gaussian envelope that damps peaks far from the origin. 
Two noisy GKP qunaughts on a beam splitter produces a noisy GKP Bell pair, which iteself is equivalent to a noisy GKP projector acting on one half of an EPR pair~\cite{Walshe2020},
\begin{align}
    \op{B}_{12} \ket{\bar{\qunaught}} \otimes \ket{\bar{\qunaught}}
    & =
    \op N_1(\beta) \otimes \op N_2(\beta) \ket{\Phi_\text{GKP}} \\
    & = \label{bellpairequivalence}
    \op N_2(\beta) \op \Pi_{\GKP,2} \op N_2(\beta) \ket{\EPR}
\end{align}
where $\ket{\Phi} \coloneqq \frac{1}{\sqrt{2}} (\ket{00} + \ket{11})$ is a two-qubit Bell state, $\ket{\EPR} \coloneqq \int ds \ket{s}_q \otimes \ket{s}_q$ is a two-mode EPR state, and the noisy GKP projector is
    \begin{equation} \label{eq:noisyGKPproj}
    \op{\bar{\Pi}}_{\GKP} \coloneqq \op N(\beta) \op \Pi_{\GKP} \op N(\beta) ,
    \end{equation}
with $\op \Pi_{\GKP}$ being the ideal GKP projector.

Replacing the ancilla states $\ket{\psi}$ and $\ket{\phi}$ in Circuit~\eqref{eq:2modegadget} with noisy qunaught states $\ket{\bar \varnothing}$ teleports the two inputs through two GKP Bell pairs during the implementation of Gaussian gate $\op{V}_\text{network}^{(2)}(\vec{\theta})$. Using the identity in Eq.~\eqref{bellpairequivalence}, the circuit becomes
\begin{equation}\label{eq:2modegadgetpart2}
    \begin{split}
\Qcircuit @C=0.8em @R=1em {
		&&& \lstick{\brasub{m_1}{p_{\theta_1}}}  &  \qw & \multigate{4}{B_\text{network}}  & \qw  & \nogate{\text{(in)}} & \\
		&&& \lstick{\brasub{m_2}{p_{\theta_2}}}  & \qw  &\ghost{B_\text{network}}  &  \qw  & \qw & \EPRdr& \\
	 & &  & &   &  & \nogate{\text{(out)}}  & \gate{\bar{\Pi}_\GKP}& \EPRur& \\
		&&& \lstick{\brasub{m_3}{p_{\theta_3}}} & \qw & \ghost{B_\text{network}}    & \qw  & \nogate{\text{(in)}}& \\
		&&& \lstick{\brasub{m_4}{p_{\theta_4}}}  &\qw& \ghost{B_\text{network}}  & \qw & \qw & \EPRdr& \\
		 &&&&&&\nogate{\text{(out)}}  &\gate{\bar{\Pi}_\GKP} & \EPRur& 
}
\raisebox{-5.5em}{\, ,}
\end{split}
\end{equation}
with the noisy GKP projectors appearing only on the output modes. 
Applying the circuit transformation rules presented in Ref.~\cite{Walshe2020} 
gives the simplified circuit
\begin{equation}\label{eq:2modeKraus}
\begin{split}
    \Qcircuit @C=1.4em @R=2em {
        & \nogate{\text{(out)}} & \gate{\bar{\Pi}_\GKP} & \multigate{1}{V^{(2)}_\text{network}(\bm{\theta})} & \qw & \rstick{\text{(in)}}& \\
        & \nogate{\text{(out)}} & \gate{\bar{\Pi}_\GKP} & \ghost{V^{(2)}_\text{network}(\bm{\theta})} & \qw &  \rstick{\text{(in)}} &
    }
    \end{split}
    \quad ,
\end{equation}
showing that the implemented gate, $\op{V}_\text{network}^{(2)}(\vec{\theta})$, depends only on the details of the beam-splitter network and measurement angles there; the noisy GKP projectors act afterwards. Thus, using GKP Bell pairs in the construction of any of the cluster states enables teleportation-based GKP error correction.
An alternative method is to insert GKP Bell pairs into a CVCS after the fact; a method to do so is given in Appendix~\ref{sec:GKPinsertion}.

\begin{table*}[t!]
    \begin{center}
    \bgroup
    \def\arraystretch{1.25}
    \begin{tabular}{|l|l|l|l|l|l|}
        \hline
    Architecture   & QRL      & vcBSL*       & vcDBSL*         & vcMSG*    \\
    mapping & $\{\theta^\text{Q}_1,\theta^\text{Q}_2,\theta^\text{Q}_3,\theta^\text{Q}_4\}$ &
     $\{\theta^\text{Q}_1,\theta^\text{Q}_2,\theta^\text{Q}_4,\theta^\text{Q}_3\}|_{\theta^\text{Q}_1=\theta^\text{Q}_3}$  & $\{\theta^\text{Q}_1,\theta^\text{Q}_4,\theta^\text{Q}_2,\theta^\text{Q}_3\}|_{\theta^\text{Q}_1=\theta^\text{Q}_3}$   & 
     $\{\theta^\text{Q}_1,\theta^\text{Q}_2,\theta^\text{Q}_4,\theta^\text{Q}_3\}|_{\theta^\text{Q}_2=\theta^\text{Q}_4}$  \\
    \hline \hline
    $\CZ(1)$   & $\{\tfrac{\pi}{2},\tfrac{\pi}{2}\pm \chi,\tfrac{\pi}{2},\tfrac{\pi}{2}\mp \chi\}$ & $\{\tfrac{\pi}{2},\tfrac{\pi}{2}\pm \chi,\tfrac{\pi}{2}\mp \chi,\tfrac{\pi}{2}\}$ & $\{\tfrac{\pi}{2},\tfrac{\pi}{2}\mp \chi,\tfrac{\pi}{2}\pm \chi,\tfrac{\pi}{2}\}$   & $\{-\chi,0,0,\chi \}^\ddagger$   \\
    SWAP  & $\{0,\tfrac{\pi}{2},\tfrac{\pi}{2},0\}$   & ----- $^\mathsection$      & ----- $^\mathsection$     & ----- $^\mathsection$   \\
    $\op I \otimes \op I$   & $\{\tfrac{\pi}{2},0,\tfrac{\pi}{2},0\}$   & $\{\tfrac{\pi}{2},0,0,\tfrac{\pi}{2}\}$ & $\{\tfrac{\pi}{2},0,0,\tfrac{\pi}{2}\}$    & 
    $\{\tfrac{\pi}{2},0,0,\tfrac{\pi}{2}\}$   \\
    $\op F \otimes \op F$   & $\{\tfrac{3\pi}{4},\tfrac{\pi}{4},\tfrac{3\pi}{4},\tfrac{\pi}{4}\}$ & 
    $\{\tfrac{3\pi}{4},\tfrac{\pi}{4},\tfrac{\pi}{4},\tfrac{3\pi}{4}\}$   & $\{\tfrac{3\pi}{4},\tfrac{\pi}{4},\tfrac{\pi}{4},\tfrac{3\pi}{4}\}$    & 
    $\{\tfrac{3\pi}{4},\tfrac{\pi}{4},\tfrac{\pi}{4},\tfrac{3\pi}{4}\}$   \\
    $\op P(1) \otimes \op P(1)$  & $\{\tfrac{\pi}{2},\tfrac{\pi}{2}\mp \chi,\tfrac{\pi}{2},\tfrac{\pi}{2}\mp \chi\}$ & 
    $\{\tfrac{\pi}{2},\tfrac{\pi}{2}\mp \chi,\tfrac{\pi}{2}\mp \chi,\tfrac{\pi}{2}\}$ & 
    $\{\tfrac{\pi}{2},\tfrac{\pi}{2}\mp \chi,\tfrac{\pi}{2}\mp \chi,\tfrac{\pi}{2}\}$    & 
    $\{\tfrac{\pi}{2},\tfrac{\pi}{2}\mp \chi,\tfrac{\pi}{2}\mp \chi,\tfrac{\pi}{2}\}$ \\
    \hline
    \end{tabular}
    \egroup
    \end{center}
    \caption{Dictionary of two-mode GKP Clifford gates. The mapping takes specific sets of measurement angles $\{\theta^\text{Q}_1,\theta^\text{Q}_2,\theta^\text{Q}_3, \theta^\text{Q}_4\}$ that implement gates for the quad-rail lattice (QRL) given in Ref.~\cite{walshe2021streamlined} and shows the reordering of and restrictions on those angles that perform the same gates on the bilayer square lattice (BSL), the double bilayer square lattice (DBSL), 
    and the Mikkel-splitter gadget (MSG) through their virtually completed counterparts vcBSL, vcDBSL, and vcMSG. The constant angle is defined $\chi = \arctan 2$~\cite{walshe2021streamlined}.
    Each mapping indicates a measurement-angle restriction required to complete the beam-splitter network---see Sec.~\ref{sec:clusterStates}. 
    }\label{tab:dictionary}
    \begin{tablenotes}[flushleft]
    \item \footnotesize{
    $^*$ Each gate is accompanied by a parity operator $\op F^2$ on the second mode. This has no effect of the noise properties of the gate operation. }
    \item \footnotesize{$^\ddagger$ These measurement angles, given in Ref.~\cite{Larsen2021fault}, implement $[\op F^\dag \otimes \op F]\CZ(1)$ and are not mapped from the QRL. Incidentally, our results show that this gate is available to the QRL by the inverse mapping.}  
    \item \footnotesize{$^\mathsection$ We found no mappings between SWAP gates for the QRL (isolated or combined with single-mode gates) compatible with the angle restrictions on the vcBSL, vcDBSL, and the vcMSG. }
    \end{tablenotes}
\end{table*}

\subsection{Implementing GKP Clifford gates}

For the square-lattice GKP code, a set of CV unitaries that enables a complete generating set GKP Clifford gates is~\cite{walshe2021streamlined}
\begin{align} \label{eq:GKPCliffords}
    \underbrace{\left\{\op{I}, \op{F}, \op{P}(\pm 1), \op{\mathrm{C}}_Z(\pm 1)\right\}}_{\text {CV unitaries }} \longmapsto \underbrace{\left\{\bar{I}, \bar{H}, \bar{P}, \overline{\mathrm{C}}_Z\right\}}_{\text {GKP Cliffords }}.
\end{align}
where $\op{F} = \op{R}(\tfrac{\pi}{2})$, $\op{P}(\sigma)$ is given in Eq.~\eqref{eq:posshear}, and $\CZ(g) \coloneqq e^{ig \op{q} \otimes \op{q}}$.
The single-qubit GKP Cliffords are implemented in a single teleportation step of a macronode wire~\cite{Walshe2020,walshe2021streamlined} and are thus accessible to any of the cluster states above since we have expressed them as coupled macronode wires. Whether a two-qubit gate required to complete the set of GKP Clifford gates is accessible depends on the beam-splitter network in the teleportation gadget. 
Walshe \emph{et al.}~\cite{walshe2021streamlined}  found that, in a single teleportation step, the QRL can implement $\CZ(1)$, the SWAP gate, and a suite of simultaneous single-mode gates. The measurement angles that do so are given in the first column of Table~\ref{tab:dictionary}. 

The completed cluster states---cBSL, cDBSL, cMSG, and cMBSL---each have access to all of the QRL's gates using the correspondences in the previous section. But what about the virtually completed versions, which are completed by restricting measurement angles? 
In Table~\ref{tab:dictionary}, we leverage the connections between the cluster states above and the QRL to find the measurement angles that implement the same gates on the virtually completed cluster states, up to parity operations (which do not affect GKP codes).%
\footnote{Single-mode GKP codes on any lattice centered at the origin are even parity~\cite{Grimsmo2020}.}
We find that the vcBSL, vcDBSL, and vcMSG can natively implement almost all of the 
Clifford gates that the QRL can---the SWAP is the only one for which we could not find a set of angles. 
In the last column, we give the measurement angles from Ref.~\cite{Larsen2021fault} that generate $\CZ(1)$ (up to local Fourier transforms) for the MSG. This gate was found natively for that cluster state~\cite{Larsen2021fault} and does not result from the QRL connection in this work---highlighting the fact that this list is not exhaustive. Additional measurement angles implementing the same logical-GKP gates may exist for any of the cluster states, including the non-completed versions. As a final note, under the measurement-angle restrictions required to complete each beam-splitter network, we found no available SWAP gate.

\subsubsection*{Logical gate noise}
As a cluster state is measured, unavoidable noise accumulates due to the fact that the cluster states are built from finitely squeezed resources---TMSSs or GKP Bell pairs. The CV noise translates to logical qubit errors when the cluster state is combined with encoded GKP qubits. For high-quality GKP input states, Walshe \emph{et al.}~\cite{walshe2021streamlined} calculated the logical noise accompanying each GKP Clifford gate in Eq.~\eqref{eq:GKPCliffords} and found that a 1\% gate error rate is achievable with 
GKP resources and a CV cluster state both having 
squeezing of at least 11.9 dB. This minimum value is 
set by the the two-mode $\CZ(1)$ gate, which is the poorest performing gate in the set. 

The origin of this noise 
lies in the 
fact that all states used in the protocol 
have finite squeezing; 
none is ideal. This is quantified for the GKP Bell pairs using the damping operator~$\op{N}(\beta)$, Eq.~\eqref{eq:GKPqunaught}. Inspecting Circuit~\eqref{eq:2modeKraus} reveals that the damping operators are entirely contained in the noisy GKP projectors, Eq.~\eqref{eq:noisyGKPproj}, and are independent of the details of the beam-splitter network and measurements that enact the gate~$\op{V}_\text{network}^{(2)}(\vec{\theta})$. This means that if an identical $\op{V}_\text{network}^{(2)}(\vec{\theta})$ is found for two different cluster state constructions, the associated logical-gate noise will be the same. 

All of the completed cluster states---the cBSL, cDBSL, cMSG, and cMBSL---can implement any gate that the QRL can, including the entangling GKP Clifford gate $\CZ(1)$ (up to single-mode Fourier transforms and parity operators).
Table~\ref{tab:dictionary} shows that the vcBSL, vcDBSL, and vcMSG can also implement $\CZ(1)$ natively as well as any subset of the QRL's gates that respect the angle restrictions---SWAP is a notable exception.
This means that the gate noise on $\CZ(1)$ is the same as the QRL's, and each cluster state inherits the logical-gate noise (as a function of squeezing) calculated in Ref.~\cite{walshe2021streamlined} for the gates native to the architecture---all gates for physically completed four-splitters and a subset of the gates for virtually completed ones.

Our conclusion appears to conflict with a prior study of gate noise for the 
BSL and DBSL~\cite{Larsen2020architecture}, which concluded that higher-quality GKP resources were needed there than for the QRL. In that study, deleting certain modes from the BSL and DBSL (using position measurements) was required before computation in order to isolate macronode wires. 
Each deleted mode introduces noise into the remaining ones, increasing the requirements on the GKP resources.
In contrast, the equivalence of the BSL and DBSL with the QRL
in present work arises from utilizing each cluster state efficiently by selecting macronode wires---distinct from those in Ref.~\cite{Larsen2020detgatesCVCS}---that require no deletions before use.

We note that although the GKP logical Clifford gate noise is identical across the different cluster states, they may have different uses due to inherently different connectivity. 
For example, the DBSL has a denser set of teleportation gadgets than the BSL, permitting two-mode gates more often; compare Figs.~\ref{fig:cBSL} and~\ref{fig:cDBSL}. The effects of these differences on explicit calculations of fault-tolerance thresholds is left to future work. Our purpose here is to show that many cluster states perform just as well as the QRL in terms of gate noise and thus inherit the favorable performance reported in Ref.~\cite{walshe2021streamlined}.

\section{Balanced four-splitters}\label{sec:four-splitters}

In the above section, we compared a small set of cluster state constructions that have been introduced in the literature.
The QRL is unique among these constructions in that it stitches together one half of four TMSSs (or GKP Bell pairs~\cite{walshe2021streamlined}) using \emph{four} beam splitters; all the other constructions use three. The key to connecting the cluster states above to the QRL is completing them by inserting a fourth beam splitter---either physically (in the case of the MBSL) or virtually using restricted measurements (all others). 
Here, we show that these completed are members of
a class of four-mode transformations generated by \emph{balanced four-splitters}, a term coined in Ref.~\cite{Alexander2016flexible}.%
\footnote{In that reference, the term referred specifically to~$\op B_\QRL$, Eq.~\eqref{eq:U_QRL}. Here we expand this term to include a larger class of unitary gates, of which $\op B_\QRL$ is one example.}
This extends the gate-noise results---and others that may extend from this work---to apply not only to the small set of cluster states discussed here but to every cluster state that might be conceived of that shares the four-splitter properties described herein. For generality, we extend this definition to $n$~modes.

\begin{definition}[Balanced $n$-splitter]
Let $\op{U}$~be an $n$-mode linear-optical unitary operator with $\op{U} \to \mat U \in \mathrm{U}(n)$ [see Eq.~\eqref{eq:unitaryarrow} for notation]. $\op{U}$~is called a \emph{balanced $n$-splitter} if all elements
of $\mat U$ have the same magnitude. The matrix $\mat U$ is called the \emph{$n$-splitter matrix} corresponding to~$\op U$.
\end{definition}

\begin{definition}[Real balanced $n$-splitter]
Let $\op{U}$ be a balanced $n$-splitter. $\op{U}$~is called a \emph{real balanced $n$-splitter} if its $n$-splitter matrix $\mat U \in \mathrm{O}(n)$ or, equivalently, if all elements of~$\mat U$ are real.
\end{definition}
 
Four-splitters over $n$ modes map every mode operator~$\op a_j$ to an equal-magnitude (\emph{i.e.},~balanced) linear combination of all $n$ mode operators,
namely $\op{U}^\dagger \op{a}_j \op{U} = \sum_{k=1}^n U_{jk} \op{a}_k$ with $|U_{jk}|= \frac{1}{ \sqrt{n} }$. 
Given a balanced $n$-splitter~$\op U \to \mat U$, then $\mat U$ is $n^{-1/2}$ times an order-$n$, complex Hadamard matrix~\cite{colbourne2007handbook,Horadam2007}, by definition of the latter. When $\op U$ is a real balanced $n$-splitter, then the associated Hadamard matrix is real rather than complex.

Our work will largely focus on real balanced $n$-splitters. We need the following important definition:

\begin{definition}[Equivalence of real balanced $n$-splitters]
Two real balanced $n$-splitters~$\op U_1 \to \mat R_1$ and $\op U_2 \to \mat R_2$ are called \emph{equivalent} if $\sqrt n \mat R_1$ and $\sqrt n \mat R_2$ are equivalent as order-$n$ real Hadamard matrices~\cite{hirasaka2016}---i.e.,~if one can be transformed into the other by a sequence of row permutations, column permutations, row negations, and column negations.
\end{definition}

For real balanced $n$-splitters with $n \in \{2, 4, 8, 12\}$, all belong to a single equivalence class for the given~$n$~\cite{hirasaka2016}, but for larger~$n$, multiple equivalence classes exist. We will focus on the cases of $n=2$ and $n=4$ below.

Immediately from the definition above, we see that the beam splitter~$\op B_{jk} \to \mat R_{jk}$ defined in Eq.~\eqref{bsmat} is a real balanced two-splitter. But it is not the only one. For two modes, there are exactly eight possible real balanced two-splitters, corresponding to four cases of exactly one negative sign in $\mat R_{jk}$ and four cases of exactly three negative signs in $\mat R_{jk}$. This is a property of the set of order-2 (real) Hadamard matrices~\cite{hirasaka2016}.

One can straightforwardly verify that for this small set of Hadamard matrices, swapping the rows is equivalent to negating one column, and swapping the columns is equivalent to negating one row. Also, if negation of each row is allowed, then negation need only be allowed on one fixed column (say, the second column) to generate all possible real two-splitters from a given one. Thus, the full set of equivalence operations can be reduced to simply row negation and negation of the second column. This is an example of a smallest generating set, as well, by a simple counting argument: since there are $8=2^3$ real balanced two-splitters, a minimum of three binary operations is necessary to generate all of them from a single one.

The set of real balanced four-splitters displays similar behavior. While the QRL four-splitter~$\op B_\QRL$ [Eq.~\eqref{eq:U_QRL_styled}] defines one particular example of a real balanced four-splitter, many other examples exist. In fact, there are 768 real balanced four-splitters, one for each order-4 real Hadamard matrix~\cite{Yoshikawa2016}.
We are interested in treating this set of matrices
as an equivalence class under a minimal set of matrix operations because each of these corresponds to a linear-optical unitary gate: permutations are SWAP gates, and negating a row or column is a $\op F^2$ gate before or after the four-splitter, respectively. We will use this below to demonstrate the functional equivalence of all real balanced four-splitters in the CV cluster-state setting.

Theorem \ref{equivClass}, proven next, shows that the equivalence class of order-4 Hadamard matrices can be generated with a smaller set of matrix operations than is generally needed---row negations, row permutations, and a single column negation are sufficient. This is indeed the minimal set of operations.

\begin{theorem}\label{equivClass}
Let $\op U_1 \to \mat R_1$ and $\op U_2 \to \mat R_2$ be real balanced four-splitters. Then, $\mat R_1$ can be transformed into $\mat R_2$ by  some sequence of row permutations, row negations, and negation of any single column.
\end{theorem}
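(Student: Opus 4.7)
My plan is a transitive-group-action argument combined with a cardinality match between the group and the set it acts on. The allowed operations act on a real balanced four-splitter matrix as $\mat R \mapsto \mat D_r \mat P \mat R \mat D_c$, where $\mat P$ is a $4\times 4$ permutation matrix ($4!=24$ choices), $\mat D_r$ is a diagonal $\pm 1$ row sign ($2^4=16$ choices), and $\mat D_c$ is diagonal with a single entry $\epsilon \in \{\pm 1\}$ in the designated column and $+1$ elsewhere ($2$ choices). Because left and right multiplication commute, these generate a group $G$ of order $|G|=24\cdot 16\cdot 2 = 768$. Separately, an elementary count (row $1$: $16$ ways; row $2$ orthogonal to row $1$: $\binom{4}{2}=6$ ways; row $3$ in the remaining $2$-dimensional orthogonal complement: $4$ sign vectors; row $4$ forced up to an overall sign: $2$ ways) gives exactly $16\cdot 6\cdot 4\cdot 2 = 768$ order-$4$ real Hadamard matrices, and hence exactly $768$ real balanced four-splitter matrices. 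If the $G$-orbit of one such $\mat R_0$ has size $768$, it must equal the whole set, placing $\mat R_1$ and $\mat R_2$ in a common orbit.

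To carry out this orbit count I would fix $\mat R_0 = \tfrac12 \mat H_0$ with Sylvester's matrix
\[
    \mat H_0 = \begin{pmatrix}1&1&1&1\\1&1&-1&-1\\1&-1&1&-1\\1&-1&-1&1\end{pmatrix},
\]
take the designated column to be column $2$ without loss of generality (by a one-time relabeling), and then show the $G$-stabilizer of $\mat R_0$ is trivial. Restricting the stabilizer condition $\mat D_r \mat P \mat H_0 \mat D_c = \mat H_0$ to column $1$ (which $\mat D_c$ does not touch) would give $\mat D_r (1,1,1,1)^\tp = (1,1,1,1)^\tp$, because row permutations fix the constant vector; hence $\mat D_r = \id$. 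The residual equation $\mat P \mat H_0 = \mat H_0 \mat D_c$ splits into two cases: when $\epsilon=+1$ it reduces to $\mat P \mat H_0 = \mat H_0$, and distinctness of the four rows of $\mat H_0$ forces $\mat P = \id$; when $\epsilon=-1$ I would verify directly that the rows of $\mat H_0 \mat D_c$ are not a permutation of those of $\mat H_0$, ruling this case out. Triviality of the stabilizer then yields an orbit of the full size $768$.

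The main obstacle I anticipate is the row-matching check in the $\epsilon=-1$ case, which is a combinatorial property specific to Sylvester's matrix and which I would discharge by explicitly enumerating the four rows of $\mat H_0 \mat D_c$ and checking that none appears among those of $\mat H_0$. As a constructive backup I would reduce any $\mat R$ to $\tfrac12 \mat H_0$ directly: row-negate so that column $1$ equals $(1,1,1,1)^\tp$; row-permute (applying the designated-column negation if needed to fix a parity) so that column $2$ equals $(1,1,-1,-1)^\tp$; then use within-block row swaps on the pairs $\{1,2\}$ and on $\{3,4\}$ together with the composite operation of designated-column negation plus the cross-block swap $(1\leftrightarrow 3,\,2\leftrightarrow 4)$, which exchanges the lower-right $2\times 2$ blocks while preserving the first two columns, to bring columns $3$ and $4$ into Sylvester form. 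The orthogonality constraint on rows forces an invariant $(ab)(ef)=-1$ on the relevant block signs that makes this final reduction always possible, and reversibility of all operations then provides the transformation $\mat R_1 \to \tfrac12\mat H_0 \to \mat R_2$.
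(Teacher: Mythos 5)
Your proposal is correct and follows essentially the same route as the paper: both arguments establish transitivity by showing the orbit of a fixed Hadamard matrix under the $24\cdot 16\cdot 2=768$ allowed operations has full size $768=\abs{H_4}$, and your trivial-stabilizer computation (column~1 forces $\mat D_r=\id$, distinctness of rows forces $\mat P=\id$, and the $\epsilon=-1$ case is excluded because negating one column flips every row's parity) is the mirror image of the paper's direct distinctness argument via the same row-parity invariant. The only substantive additions are that you derive the count $\abs{H_4}=16\cdot 6\cdot 4\cdot 2=768$ from scratch rather than citing it, and you sketch a constructive reduction to Sylvester form as a backup; both are sound.
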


\begin{proof}
Since every $\mat R_{1,2} = \tfrac 1 2 \mat H_{1,2}$, where $\mat H_1$ and $\mat H_2$ are order-4 real Hadamard matrices. Our strategy will be to show that the set of all order-4 real Hadamard matrices, $H_4$, is a single equivalence class according to the conditions of equivalence in the theorem. Since $\abs{H_4} = 768$~\cite{hirasaka2016}, we will use the available operations to construct 768 unique matrices, each of which is a member of~$H_4$.

Let
\begin{align}
    \mat H_0 =
    \begin{bmatrix}
        1 & 1 & 1 & 1 \\
        1 & -1 & 1 & -1 \\
        1 & 1 & -1 & -1 \\
        1 & -1 & -1 & 1 
    \end{bmatrix}
    \in H_4
    .
\end{align}
Notice that the row parity (number of minus signs in a given row)
of each row of~$\mat H_0$ is even. 
Orthogonality ensures that the $4!$ possible row permutations and $2^4$ possible row negations generate distinct matrices (since no row can equal any other row, nor can it equal any row's negation). This accounts for $4! \times 2^4 = 384$ of the elements of $H_4$ (including ~$\mat H_0$). All of these have all-even row parity. To access the other half of~$H_4$, we negate the last column of each of the matrices already generated. Each of these new matrices is distinct from any of the others already obtained because row permutations and row negations preserve row parity, but a single column negation flips the parity of all rows. Thus, we have produced another 384 members of~$H_4$ (with odd row parity) that are all distinct from each other and from the other half (with even row parity). This exhausts the 768 members of~$H_4$. The above procedure can be repeated using any of the other columns in place of the last one, showing the choice of column to be arbitrary. The result equally applies after multiplying each of these matrices by~$\tfrac 1 2$. This 
proves the theorem.
\end{proof}

\noindent \textbf{Note:} This subsection has used the precise term \emph{real balanced four-splitter} in order to distinguish it from (more general) balanced four-splitters. In the rest of this work, where the context is clear, the term \emph{four-splitter} is used to mean a real balanced four-splitter, which is the focus of our work.

\subsection{Physical constructions of four-splitters}\label{four-splitterImplementations}

First, we prove the following theorem about beam-splitter networks:
\begin{theorem}
Let $\op{B}_\text{network}$ be a beam splitter network comprising four real balanced beam splitters [Eq.~\eqref{eq:beamsplitter_circuit}]. Then, $\op B_\text{network}$ is a real balanced four-splitter if and only if all of the following conditions are satisfied:
\begin{enumerate}
    \item \label{item:eachtwo} Each mode interacts with exactly two beam splitters.
    \item \label{item:spreadout} All modes interact with a first beam splitter before any mode interacts with a second.
    \item \label{item:nodoubling} No pair of beam splitters interacts with the same two modes.
\end{enumerate}
\end{theorem}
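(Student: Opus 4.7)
The plan is to encode the network as a 4-vertex multigraph $G$ whose vertices are the modes and whose edges are the beam splitters in temporal order. Condition~\ref{item:eachtwo} says that $G$ is 2-regular, and condition~\ref{item:nodoubling} says that $G$ is simple; the only 2-regular simple graph on four vertices is the 4-cycle $C_4$, which decomposes uniquely into two perfect matchings $M_1$ and $M_2$ whose union is $C_4$. A key fact I will use is that each edge of $M_2$ \emph{crosses} the mode partition induced by $M_1$: its two endpoints lie in different $M_1$-pairs. Condition~\ref{item:spreadout} then says that, after reordering commuting beam splitters, all edges of one matching fire in time before any edge of the other.

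For sufficiency, suppose all three conditions hold. Within each layer the two beam splitters act on disjoint modes and hence commute, so the network factors as $M_2$-layer after $M_1$-layer. After $M_1$, mode $j$ carries the operator $\tfrac{1}{\sqrt{2}}(\pm \op a_j \pm \op a_{j'})$ where $\{j,j'\} \in M_1$, and the four modes split into two pairs whose operators are supported on the two disjoint input-index pairs induced by $M_1$. Because $M_2$ crosses this partition, each $M_2$-beam-splitter combines a post-$M_1$ operator supported on one pair with one supported on the other, yielding an output of the form $\tfrac{1}{2}(\pm \op a_1 \pm \op a_2 \pm \op a_3 \pm \op a_4)$. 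Every entry of the resulting network matrix then has magnitude $\tfrac{1}{2}$, as required.

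For necessity, suppose $\op B_\text{network}$ is a real balanced four-splitter. If some mode touches at most one beam splitter, its output operator is supported on at most two input modes, producing a zero in the corresponding row of $\mat R$ and contradicting balance; the mode-incidence count of $8$ then forces exactly two beam splitters per mode, establishing condition~\ref{item:eachtwo}. Given this, a multi-edge would exhaust the degree-$2$ budget at its two endpoints, forcing the remaining two beam splitters to form another doubled edge on the complementary pair of modes; the resulting matrix is block-diagonal and hence unbalanced, establishing condition~\ref{item:nodoubling}. The main obstacle is condition~\ref{item:spreadout}: suppose it fails, so that some mode $j$ meets its second beam splitter at time $t_j$ strictly before another mode $k$ meets its first at time $t_k > t_j$. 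The input operator $\op a_k$ first appears in any mode's expression at time $t_k$; just before $t_j$, no operator in the network carries an $\op a_k$ component, so the beam splitter at $t_j$ cannot introduce $\op a_k$ into mode $j$'s output. By condition~\ref{item:eachtwo}, $j$ touches no further beam splitter, so its final output is $\op a_k$-free and row $j$ of $\mat R$ has a zero in column $k$, contradicting balance.
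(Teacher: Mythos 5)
Your proof is correct in outline but takes a genuinely different route from the paper's. The paper establishes this theorem by exhaustive computer search over all $(4\times 3)^4 = 20\,736$ candidate networks, explicitly remarking that no elegant analytic proof of necessity was found, and offering only a block-matrix factorization as intuition for sufficiency. Your sufficiency argument is essentially that same block-matrix observation recast in the language of the two perfect matchings of $C_4$, and it is sound. Your necessity argument is the genuinely new content: the propagation argument for condition~3.2 (the input operator $\op a_k$ cannot appear in mode $j$'s row before a beam splitter touches mode $k$, and row $j$ is frozen after mode $j$'s second beam splitter) is correct. Note only that your ``strict'' witness pair $t_k > t_j$ is taken as the definition of failure of condition~3.2; strictly speaking failure could also mean a single beam splitter is simultaneously one mode's second and another's first, but in this four-mode, degree-two setting such a configuration always produces a strict witness as well (the first two beam splitters then cover only three modes, so the fourth is untouched until time $\geq 3$), so nothing is lost.

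There is one incorrect justification you should repair, in the necessity of condition~3.1. You claim that a mode $j$ touching at most one beam splitter has output ``supported on at most two input modes, producing a zero in the corresponding row.'' This is false in general: in the network $\op B_{jk}\op B_{km}\op B_{k\ell}\op B_{\ell m}$ (rightmost applied first), mode $j$ touches only the final beam splitter, yet row $j$ of the resulting matrix is supported on all four input modes, because $\op a_k$ has already been spread over modes $k$, $\ell$, $m$ before $\op B_{jk}$ acts. The correct obstruction is the diagonal entry: writing $\op B_\text{network} = \op U_A\op B_{jk}\op U_B$ with neither $\op U_A$ nor $\op U_B$ touching mode $j$, row $j$ of $\mat R$ equals $\tfrac{1}{\sqrt 2}(\vec e_j \pm \vec e_k)^\tp \mat R_B$, whose $(j,j)$ entry is $\pm\tfrac{1}{\sqrt 2}$ because column $j$ of $\mat R_B$ is $\vec e_j$; a magnitude of $\tfrac{1}{\sqrt 2}$ rather than $\tfrac{1}{2}$ violates balance just as a zero would. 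With that substitution your argument closes, and it is arguably preferable to the paper's, since it explains \emph{why} the three conditions are necessary rather than merely verifying that they are.
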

\begin{proof}
Let $\op{B}_\text{network} \to \mat R$. We proved the result by exhaustive search. Each beam splitter can start at any of the four modes and end at any of the three other modes (${4 \times 3}$), and there are four of these choices to be made independently, giving a total of ${(4 \times 3)^4 = 20\,736}$ possible versions of~$\mat R$. We constructed each of these matrices and checked if it was a real balanced four-splitter matrix. Independently, we checked whether each beam-splitter network satisfied each of the three conditions individually and in combination. We found, by comparing these lists, that satisfaction of all three conditions is both necessary and sufficient for $\mat R$ to be a real balanced four-splitter matrix.
\end{proof}

A bit more intuition into this result is possible.
Given conditions~\ref{item:spreadout} and~\ref{item:nodoubling}, without loss of generality, we can choose to partition the four modes, labelled~$(j, k, \ell, m)$, as $\{ \{j, k\}, \{\ell, m\} \}$, such that the first two beam splitters act on modes within the same partition. Condition~\ref{item:eachtwo} now guarantees that the last two beam splitters act across the partition $\{ \{j, \ell\}, \{k, m\} \}$. (If this partitioning is not respected, just swap mode labels $\ell$ and~$m$ in the first step.) This means that $\mat R$ has the form
\begin{align}
    \mat R
&=
    \frac 1 2
    \begin{bmatrix}
        \pm & 0 & \pm & 0 \\
        0 & \pm & 0 & \pm \\
        \pm & 0 & \pm & 0 \\
        0 & \pm & 0 & \pm
    \end{bmatrix}
    \begin{bmatrix}
        \pm & \pm & 0 & 0\\
        \pm & \pm & 0 & 0\\
        0 & 0 & \pm & \pm \\
        0 & 0 & \pm & \pm
    \end{bmatrix}
    ,
\end{align}
where each $\pm$ independently represents either~$+1$ or $-1$ (fixed by the beam splitters in question), and the row and column ordering are $(j, k, \ell, m)$. This block form guarantees that~$\mat R$ has entries with equal magnitude. The fact that it represents a product of real beam-splitter matrices guarantees that $\mat R \in \mathrm{O}(4)$, which proves sufficiency of the conditions. We did not find an elegant way to rigorously prove necessity of the conditions, but the exhaustive search guarantees that
\begin{align}
\label{eq:foursplitterdecomp}
    \mat R
&=
    \FSmat_{km}^{(\tp)}
    \FSmat_{j \ell}^{(\tp)}
    \FSmat_{\ell m}^{(\tp)}
    \FSmat_{jk}^{(\tp)} \eqqcolon  \FSmat_\text{4-split}
\end{align}
is a real balanced four-splitter matrix, where $^{(\tp)}$ indicates that the matrix shown may be transposed or not. This means the form above allows each beam splitter to point in either direction between the modes in the subscript. Note that the first two matrices commute, $[\FSmat^{(\tp)}_{jk}, \FSmat^{(\tp)}_{\ell m}] = 0$, and so do the last two, $[\FSmat^{(\tp)}_{j \ell}, \FSmat_{km}^{(\tp)}] = 0$---the corresponding beam-splitter pairs exhibit the same commutations.

Next, we note that
not all four-splitters can be decomposed into four real beam splitters alone, each of the form in Eq.~\eqref{eq:beamsplitter_circuit}. 
This can be seen with a simple counting argument.
Consider four beamsplitters over four modes arranged according to Eq.~\eqref{eq:foursplitterdecomp}.
There are $4!$ ways to freely assign mode labels and
$2^4$ ways to choose to transpose or not
each beam-splitter matrix.
This gives
$4! \times 2^4 = 384$
beam-splitter networks. Noting the commutativity within the beamsplitter pairs, as described below Eq.~\eqref{eq:foursplitterdecomp}, we can reduce this total by a factor of $2^2$ to account for this possible reordering of the beam splitters within each pair. Accounting for this operational equivalence (since the beamsplitters can act simultaneously) gives $384/4 = 96$ physically distinct beam-splitter networks.

By explicit computation, we found that these 96 physical beam-splitter networks generate only 40 unique four-splitters out of the 768 total. This is due to collisions, where more than one beam-splitter network produces the same four-splitter matrix. Indeed, it was observed previously that pairs of beamsplitters in the QRL commute~\cite{Menicucci2011tempmodeCVCS,Wang2014hypercubic,Alexander2016flexible}, 
 \begin{equation}\label{eq:U_QRL_commute}
    \begin{split}
        \raisebox{-0.94cm}{$\op B_{\text{QRL}} = $ \;}
\Qcircuit @C=0.5cm @R=0.60cm {
		& \qw & \bsbal{2}[.>] & \bsbal[red]{1} & \qw  \\
        & \bsbal{2}[.>] & \qw & \qw & \qw  \\
        & \qw & \qw  & \bsbal[red]{1} & \qw \\
        & \qw & \qw & \qw & \qw 
}
\raisebox{-0.94cm}{\; $ = $ \;}
\Qcircuit @C=0.5cm @R=0.60cm {
        & \bsbal[red]{1} & \qw & \bsbal{2}[.>] & \qw  \\
        & \qw            & \bsbal{2}[.>] & \qw & \qw  \\
        & \bsbal[red]{1} & \qw           & \qw & \qw \\
        & \qw            & \qw           & \qw & \qw 
}
        \raisebox{-0.94cm}{\; ,}
\end{split}
\end{equation}
giving a two-to-one correspondence with a four-splitter matrix.%
\footnote{Unlike the commutation within beam-splitter pairs, this commutation relation still requires a definite ordering of the pairs to be chosen in the experiment. It just so happens that either ordering gives the same result.}
We find that 48 of the 96 beam-splitter networks exhibit the same behavior: one pair of beam splitters commutes with the other pair.
Consequently, this set generates 24 four-splitters. The remaining 48 physical beam-splitter networks have a different property that gives three-to-one correspondences with four-splitters. Each beam-splitter network of this type is equivalent to two other beam-splitter networks. An example is the cBSL,
\begin{align}
 &\op B_\text{cBSL} =
 \\* \nonumber
&\Qcircuit @C=0.5cm @R=0.6cm {
		& \qw & \bsbal{3}[.>] & \bsbal[red]{1} & \qw  \\
        & \bsbal{1}[.>] & \qw & \qw & \qw  \\
        & \qw & \qw & \bsbal[red]{1} & \qw \\
        & \qw & \qw & \qw & \qw
}
 \raisebox{-0.94cm}{\;=\;} 
\Qcircuit @C=0.5cm @R=0.6cm {
		& \bsbal{2}[-->] & \qw & \qw & \bsbal{3}[.>] & \qw  \\
        & \qw & \qw & \bsbal{1}[.>] & \qw & \qw \\
        & \qw & \qw & \qw & \qw & \qw  \\
        & \qw & \bsbal{-2}[-->] & \qw & \qw & \qw
} 
 \raisebox{-0.94cm}{\;=\;} 
\Qcircuit @C=0.5cm @R=0.6cm {
		& \bsbal[red]{1} & \bsbal{2}[-->] & \qw & \qw  \\
        & \qw & \qw & \qw & \qw \\
        & \bsbal[red]{1} & \qw & \qw & \qw \\
        & \qw & \qw & \bsbal{-2}[-->] & \qw
}
\end{align}
The 48 physical beam-splitter networks of this type produce the remaining 16 four-splitters and complete the set of 40 that are possible using beam-splitter networks alone.

The above reveals that we must go beyond beam-splitter networks alone in order to construct the full set of 768 four-splitters. However, starting from any of these 40, Theorem \ref{equivClass} guarantees that we can generate the full equivalence class given three types of matrix operation: row swaps, row negation, and a column negation. These matrix operations are implemented physically with SWAP gates and parity operators, as described in Eq.~\eqref{SWAPmatrix} and Eq.~\eqref{paritymatrix}, respectively. 
Following Theorem~\ref{equivClass}, we include a single $\op F^2$ before the beam-splitter network, up to four $\op F^2$ after it (one for each mode), and final SWAP gates that may 
permute the modes in any way. 

Starting from the 96 physical beam-splitter networks constructed above, 2 choices of parity operator beforehand, $2^4$ afterwards, and $4!$ mode swaps gives $96 \times 2 \times 2^4 \times 4! = 73\,728$ total linear optical networks. In fact, this is exactly 96 times 768, the number of unique four-splitters. Theorem~\ref{equivClass} guarantees that each four-splitter is represented. The question is how many times each one appears. We numerically computed every network's orthogonal matrix 
and
found that
each unique four-splitter is realized by 96 different linear-optical networks.%
\footnote{Note: This is 96 linear-optical networks that construct a \emph{single} unique balanced four-splitter, as opposed to the 96 physical beam-splitter networks, referred to earlier, which can generate 40 different balanced four-splitters between them.}

\subsection{Incomplete four-splitters and how to complete them}\label{sec:incompletefour-splitters}

Most of the cluster states we analyzed in Sec.~\ref{sec:clusterStates} have at their heart beam-splitter networks comprising three beam splitters (rather than four). 
In each case, we went through a procedure to \emph{complete} them with a fourth beam splitter added either virtually by restricting measurement angles or, in the case of the MBSL,
by inserting the fourth beam splitter physically. These illustrate the two types of incomplete beam-splitter network we consider---either the missing fourth beam splitter is (a) on the measurement side of the circuit or (b) on the state side of the circuit. 

The two types of incomplete four-splitter correspond to respectively dropping (a)~$\FSmat_{km}^{(\tp)}$ or (b)~$\FSmat_{jk}^{(\tp)}$ in Eq.~\eqref{eq:foursplitterdecomp}. For type~(a), this gives orthogonal matrices where two of the rows have two 0 entries and two $\pm \sqrt{2}$ entries, such as for the BSL in Eq.~\eqref{eq:BSLmat}. For type~(b), the orthogonal matrices have two columns with two 0 entries and two $\pm \sqrt{2}$ entries, such as for the MBSL in Eq.~\eqref{eq:MBSLmatrix}. Importantly, the operations that define the equivalence class in Theorem \ref{equivClass} do not couple the two types of incomplete four-splitter---doing so would require a transpose-type operation, which is not available.

Restated in terms of physical operations: beam-splitter networks of type~(b) cannot be transformed to type~(a) using the left action of SWAPs and parity operators---\emph{i.e.},~on the measurment side. 
If this were possible, it would allow type (b) to be completed by restricting measurement angles in a teleportation gadget.

One last situation must be addressed. 
Is it possible to complete a type~(b) beam-splitter network by restricting measurements and post-processing the outcomes in a more general way? We show in Appendix~\ref{threesplitterinequivalence} that the answer is ``no,'' which ultimately means that incomplete four-splitters of type~(b) cannot be completed by restricting measurements---the fourth beam splitter must be included in another way.%
\footnote{Of course, \emph{any} real beam-splitter network can be generated virtually by choosing all measurement angles to be the same. This does nothing of value, however, since it also trivially undoes any other beam splitters on the associated modes.}

\section{Conclusion}
By demonstrating an equivalence to the well-studied QRL architecture, we have shown that the BSL, DBSL, MSG, and MBSL architectures are all compatible with the streamlined quantum computing techniques introduced in Ref.~\cite{walshe2021streamlined}, which provide a complete Clifford gate set for the GKP code within a single teleportation gadget, minimizing gate noise arising from noise in the ancillae. 

To show this equivalence, we identify macronode wires that periodically couple to one another throughout each cluster state. 
This identification allows the cluster states to be used without introducing additional noise that arises from deleting modes. 

The equivalence to the QRL requires \emph{completing the four-splitter} within each type of cluster state. For the BSL, DBSL, and MSG, correlating specific measurement angles produces virtual beam-splitters that complete the four-splitters, such that the couplings behave just like the ones in the QRL do, up to SWAP gates and single-mode parity operators. Once completed, each cluster state has access natively to a generating set of 
GKP Clifford gates (inherited from the QRL but limited to those that satisfy the angle restrictions), which have the same intrinsic gate noise as the QRL~\cite{walshe2021streamlined}. These are sufficient to implement any Clifford gate.%
\footnote{Some work remains to discern the effects of these measurement restrictions on any specific architecture, and we leave that to those doing the practical implementations. We have shown, nevertheless, that a generating set of Clifford gates is available natively for each architecture.}

Physically completing the four-splitter (required for the MBSL, optional for the others) eliminates these angle restrictions, allowing all QRL gates to be implemented natively on the new cluster state. This shows that all architectures based on complete four-splitters are entirely equivalent in terms of their available gates and noise properties due to finite squeezing.

The specific cluster states in Sec.~\ref{sec:clusterStates} are a small subset of those available to experimentalists with passive optical components. The analysis in Sec.~\ref{sec:four-splitters} of all balanced four-splitters that are equivalent under trivial passive
operations (SWAP and parity gates) provides new pathways to designing scalable cluster states. In practice, the class of four-splitters (and the cluster states they comprise) that can be completed by correlating measurement angles may be more useful than the QRL-type cluster states due to fewer optical components (one fewer beam splitter). That is because fewer optical components can allow for simpler experimental set ups.

Further, the noise equivalence we show here applies only to gate noise introduced by imperfect GKP states throughout the cluster. Optical components themselves introduce losses and other noise due to imperfections and alignment errors; reducing their number could improve gate noise under more realistic models. 
On the other hand, the connectivity in the context of each cluster state at large may favor a true four-splitter in the teleportation gadget in order to leverage more flexibility for routing quantum information~\cite{Alexander2016flexible} and implementing simultaneous gates. For a given computation, there may be advantages to choosing a specific cluster state to reduce logical gate depth and consequently logical noise.

In this work, we focus on Clifford gates for the square-lattice GKP code. It is not known whether cluster states can implement a full set of Clifford gates for GKP codes on other lattices in a single teleportation step, and, if so, \emph{which} cluster states do.
Our general analysis of four-splitters should be useful in this study: once a particular four-splitter and measurement angles are found that provide the full gate set, equivalent four-splitters (and measurement angles) are immediately accessible via the SWAP and parity operations that define the equivalence class. 
Also, the fact that we have an analytic form for the QRL's teleported gate, Eq.~\eqref{VtwoMode_QRL}, simplified the search for square-lattice GKP Cliffords. Ideally, a similar expression 
for incomplete four-splitters would help expedite the search for other gates; at the moment, none is known.

\section*{Acknowledgements}
We thank Nicholas Funai and Yui Kuramochi for useful discussion. This work is supported by the Australian Research Council Centre of Excellence for Quantum Computation and Communication Technology (Project No. CE170100012). B.Q.B is additionally supported by the Japan Science and Technology Agency through the Ministry of Education, Culture, Sports, Science, and Technology Quantum Leap Flagship Program (MEXT Q-LEAP). T.M.~is supported by JSPS Overseas Research Fellowships.
\appendix
\numberwithin{equation}{section}
\renewcommand{\theequation}{\thesection\arabic{equation}}

\section{Derivation of the single-mode teleported gate $\op V(\theta,\theta')$ }\label{shearVderivation}

We present an alternate derivation of $\op V(\theta,\theta')$ gate, simpler than that found in Ref.~\cite{Walshe2020}.
We will make use of the LDU decomposition of a rotation matrix to decompose the phase-delay operator into shear and squeeze operators,
    \begin{align}\label{rotLDU}
        \op R(\theta)&=\op P^\dag_p(\tan{\theta})\op S(\sec{\theta})\op P(\tan{\theta}) 
\\*
            &=\op P(\tan{\theta})\op S(\cos{\theta})\op P_p^\dagger(\tan{\theta})
    \end{align}
where the position- and momentum-shear gates are given in Eq.~\eqref{eq:posshear}, 
and we use the squeezing operator $\op{S}(\zeta)$ defined in Refs.~\cite{Walshe2020, walshe2021streamlined}, for which $\zeta>0$ squeezes momentum and $\zeta<0$ squeezes position. Using Eq.~\eqref{rotLDU} in Eq.~\eqref{rotatedmeasurement} gives a useful identity for the rotated homodyne measurement outcome,%
\footnote{The factor at the front maintains the norm of the state: $\abs{a} \pinprod {a m} {a m'} = \abs{a} \delta(am - am') = \delta(m-m')$.}
\begin{align}\label{RotToShear}
    \brasub{m}{p_{\theta}}=\pbra{m} \op R(\theta) = \sqrt{\sec \theta} \pbra{m \sec{\theta}} \op P(\tan{\theta})
    \; .
\end{align}	
The fact that the outcome inside the bra is modified means that when using this identity, the modified measurement outcomes that should be used when considering measurement statistics or potential feed-forward operations based on the outcome (as in CV teleportation). Note that these involve classical post-processing of the actual measurement outcome $m$ obtained in the lab.%

Begin with a Bell-type measurement in a CV teleportation circuit with measurement angles $\theta_1$ and $\theta_2$. Using Eq.~\eqref{rotatedmeasurement}, extract phase-delay operators such that the measurement bases are position and momentum, respectively,
	\begin{equation}\label{eq:Belltypemeasurement}
    \resizebox{0.6\columnwidth}{!}{
         \Qcircuit @C=1em @R=2.5em @! 
         {
         	\lstick{\brasub{m_1}{p_{\theta_1}}}	& \bsbal{1} & \qw \\
         \lstick{\brasub{m_2}{p_{\theta_2}}}	& \qw       & \qw
  		  }
\quad  \raisebox{-1.3em}{$=$} \quad \quad \qquad
    \Qcircuit @C=1em @R=1em {
         	\lstick{\brasub{m_1}{q}}	& \gate{R(\theta_1-\tfrac{\pi}{2})}&\bsbal{1} & \qw \\
         \lstick{\brasub{m_2}{p}}	&\gate{R(\theta_2)}& \qw       & \qw
  		  }
    }
    \raisebox{-1em}{\quad .}
	\end{equation}
Decompose the phase delay on the second mode into two, and then commute the common rotation on both modes through the beam splitter,
	\begin{equation}\label{}
    \Qcircuit @C=1em @R=1em {
         	\lstick{\brasub{m_1}{q}}	& \qw&\bsbal{1} & \gate{R(\theta_1-\tfrac{\pi}{2})}&\qw \\
         \lstick{\brasub{m_2}{p}}	&\gate{R(\tfrac{\pi}{2} - \theta_-)}& \qw       & \gate{R(\theta_1-\tfrac{\pi}{2})}&\qw
  		  }
    \raisebox{-1em}{\quad .}
	\end{equation}
where $\theta_- \coloneqq \theta_1 - \theta_2$.
Now, use Eq.~\eqref{RotToShear} to rewrite the rotated-basis measurement on the second mode as a position shear by $\tan (\tfrac{\pi}{2} - \theta_-) = \cot \theta_-$ with modified measurement outcome $m_2' = m_2 \csc \theta_-$.

Then, decompose the beam splitter using the circuit identity in Eq.~\eqref{eq:bsDecomp},
	\begin{equation}\label{}
    \resizebox{0.8\columnwidth}{!}{
    \Qcircuit @C=1em @R=1em {
         &\lstick{\brasub{m_1}{q}}&\qw & \ctrl{1} & \gate{S^\dag({\sqrt{2}})} & \targ & \gate{R(\theta_1-\tfrac{\pi}{2})}&\qw
         \\
         &\lstick{\brasub{m_2'}{p}}& \gate{P (\cot \theta_-)}	& \targ & \gate{S(\sqrt{2})} & \ctrlo{-1} & \gate{R(\theta_1-\tfrac{\pi}{2})}&\qw
  		  }
    }
    \raisebox{-1em}{\quad ,}
	\end{equation}
The left-most controlled-shift gate, $\CX^{12}$, takes the $\op q_1$ eigenvalue $m_1$ to become a shift on mode 2, $\op{X}(m_1)$,
\eqref{RotToShear},
	\begin{equation}\label{}
    \resizebox{0.8\columnwidth}{!}{
    \Qcircuit @C=1em @R=1em {
         &\lstick{\brasub{m_1}{q}} &\qw                      & \qw & \gate{S^\dag({\sqrt{2}})} & \targ & \gate{R(\theta_1-\tfrac{\pi}{2})}&\qw
         \\
         &\lstick{\brasub{m_2'}{p}}& \gate{P(\cot \theta_-) } & \gate{X(m_1)} & \gate{S(\sqrt{2})} & \ctrlo{-1} & \gate{R(\theta_1-\tfrac{\pi}{2})}&\qw
  		  }
    }
    \raisebox{-1em}{\quad ,}
	\end{equation}
Absorbing the resulting shift and squeeze into the outcome gives 
$m_2''=\sqrt{2}(m_2 \csc \theta_- -m_1 \cot \theta_-)$.
The remaining position shear commutes with the controlled gate $\CX^{21}(-1)$, yielding the circuit
	\begin{equation}\label{}
        \resizebox{0.6\columnwidth}{!}{
    \Qcircuit @C=1em @R=1em {
         	&\lstick{\brasub{m'_1}{q}}&\qw & \targ & \qw & \gate{R(\theta_1-\tfrac{\pi}{2})}&\qw\\
         &\lstick{\brasub{m_2''}{p}}	& \qw & \ctrlo{-1} & \gate{P(2\cot \theta_-
)} & \gate{R(\theta_1-\tfrac{\pi}{2})}&\qw
  		  }
         } \raisebox{-1em}{\quad ,}
	\end{equation}

The controlled gate acting on the two quadrature eigenstates is equivalent to (the dual of) a displaced position-correlated EPR pair~\cite{Walshe2020}, $\brasub{s}{q_1}\otimes \brasub{t}{p_2}
e^{i\op p_1 \op q_2} =
\bra{\EPR}
\op D_1^\dag( \tfrac{s-it}{\sqrt{2}})$. With this, the circuit becomes
\begin{equation}\label{}
    \Qcircuit @C=1em @R=1em {
         	&\EPRdl&\gate{D^\dag(\mu_{1,2}')}  & \gate{R(\theta_1-\tfrac{\pi}{2})}&\qw
         	\\
         &\EPRul  & \gate{P(2\cot \theta_-)} & \gate{R(\theta_1-\tfrac{\pi}{2})}&\qw
  		  }
    \raisebox{-1em}{\quad ,}
	\end{equation}
where $\mu_{1,2}'=m_1+i(m_2 \csc \theta_- -m_1 \cot \theta_-)$.
Finally, bounce each operator on mode two to mode one through the EPR state~\cite{Walshe2020} and push the displacement to the left to get the final form of the circuit
	\begin{equation}\label{eq:singlemodeVderive}
        \hspace{1cm}
         \Qcircuit @C=1em @R=2.5em 
         {
         \lstick{\brasub{m_1}{p_{\theta_1}}}	& \bsbal{1} &\qw   \\
         \lstick{\brasub{m_2}{p_{\theta_2}}}	& \qw      & \qw  
  		  }
\quad  \raisebox{-1.3em}{=} \quad 
    \Qcircuit @C=1em @R=1.8em {
         	&\EPRdl &\gate{D(\mu_{1,2})}& \gate{V(\theta_1,\theta_2)}& \qw \\
                &\EPRul & \qw               & \qw                        & \qw
  		  }
    \raisebox{-1em}{\; ,}
	\end{equation}
where $\op V(\theta_1,\theta_2)=\op R(\theta_1-\tfrac{\pi}{2}) \op P(2\cot \theta_-) \op R(\theta_1-\tfrac{\pi}{2})$, 
	and the displacement amplitude is
	\begin{align} \label{eq:amplitude}
	\mu_{1,2} &=-\frac{m_1 e^{i\theta_2}+m_2 e^{i\theta_1}}{\sin(\theta_1-\theta_2)}
	\, .
	\end{align}
For $\theta_- = n \pi$ for integer n, these expressions are not defined. This corresponds to the case where the two modes in Eq.~\eqref{eq:Belltypemeasurement} are measured the same basis (up to parity). This effectively removes the beam splitter, uncoupling the modes and precluding teleportation.

\section{Displacements for two-mode gates}\label{Appendix:Displacements} 

As presented in the main text, the teleported gate in Eq.~\eqref{eq:2modeKraus} lacks a two-mode displacement operator originating from the four measurement outcomes in the teleportation gadget. We give its form here since knowing the displacement is necessary in practice to perform corrections during the teleportation procedure.

Up to a local parity operator but including the single-mode outcome-dependent displacements, the two-mode gate realized by the completed teleportation gadgets in Sec.~\ref{sec:clusterStates} take the form
\begin{align}\label{eq:gen2modeV}
    \op{V}^{(2)}(\vec{\theta},\vec{m}) = \op B_{1,2}(\pm \tfrac{\pi}{4}) \begin{bmatrix}
        \op D_1(\mu_{1,2}) \op V_1(\theta_1,\theta_2) \\ \otimes \\ \op D_2(\mu_{3,4}) \op V_2(\theta_3,\theta_4)
    \end{bmatrix} \op B_{1,2}(\tilde \pm \tfrac{\pi}{4}),
\end{align}
where $\vec{m}$ stands in for the four outcomes, and we use vertical tensor product notation to shorten the expression.
The $\pm$ and $\tilde{\pm}$ are independent signs (indicating the different possible directions of each 50:50 beam splitter%
),  
and each single-mode displacement amplitude is given by Eq.~\eqref{eq:amplitude}.
Pushing the displacement operators to the left through the beam splitter gives the more general form of Eq.~\eqref{eq:2modeKraus},
\begin{align}\label{eq:2modeKrausDisplacements}
    \resizebox*{.85\columnwidth}{!}{
        \Qcircuit @C=1.4em @R=2em {
            & \lstick{\text{(out)}} & \gate{N(\beta)\Pi N(\beta)} & \gate{D(\mat m_1)} & \multigate{1}{V^{(2)}(\bm{\theta})} & \qw & \nogate{\text{(in)}}&\\
            & \lstick{\text{(out)}} & \gate{N(\beta)\Pi N(\beta)} & \gate{D(\mat m_2)} & \ghost{V^{(2)}(\bm{\theta})} & \qw & \nogate{\text{(in)}}&
        }.}
\end{align}
The amplitudes are mixed by the beam splitter to give
\begin{align}
    \vec{m}_1 &= \tfrac{1}{\sqrt{2}}(\mu_{1,2}\pm \mu_{3,4}) \\
    \vec{m}_2 &= \tfrac{1}{\sqrt{2}}(\mu_{3,4}\mp \mu_{1,2}) .
\end{align}
with the sign $\pm$ determined by the sign in
the leftmost beam splitter.

\section{Circuit identity for beam splitters over three modes}\label{appendix:generalEulerAngles}

Consider beam splitters that contain a transmission parameter $\theta$,
\begin{equation}
\begin{split} \label{eq:genbeamsplitter}
	\raisebox{-1.2em}{$\op{B}_{jk}(\theta)  \coloneqq
 e^{ - \theta ( \op{a}_j   \op{a}^\dagger_k - \op{a}^\dagger_j   \op{a}_k )} \, = \, $~~}
         \Qcircuit @C=1.25em @R=2.5em @! 
         {
         	& \varbs{1} & \rstick{j}  \qw \\
         	& \qw       & \rstick{k} \qw
  		  } 
\end{split}		
\end{equation}
with $\theta \in \{ -\pi, \pi \}$ described by orthogonal matrices
\begin{align} \label{bsmatrot}
        \BSmat_{jk}(\theta) = 
        \begin{bmatrix}
        \cos \theta & -\sin \theta  \\
        \sin \theta & \cos \theta
    \end{bmatrix},
    \end{align} 
which are identical in form to two-dimensional rotation matrices. 

When a third mode is included, labeled by $\ell$, the matrix in Eq.~\eqref{bsmatrot} expands into the three-dimensional matrix 
    \begin{align}
        \BSmat_{jk}(\theta) =
        \begin{bmatrix}
        \cos \theta & -\sin \theta & 0 \\
        \sin \theta &  \cos \theta & 0 \\
        0           & 0            & 1
        \end{bmatrix} 
        ,
    \end{align}
which can be interpreted 
as
a rotation in three dimensions around principle $z$ axis by angle $\theta$, $\BSmat_{jk}(\theta) = \BSmat_z(\theta)$. Beam splitters between other pairs of modes describe rotations around the other principle axes: $\BSmat_{k\ell}(\theta) = \BSmat_x(\theta)$ for $x$ rotations, and $\BSmat_{\ell j}(\theta) = \BSmat_y(\theta)$ for $y$ rotations. Note that swapping the subscripts on any matrix takes $\theta \to -\theta$, which reverses the direction of the arrow in the circuit diagram on the right-hand side of Eq.~\eqref{eq:genbeamsplitter}.

This association leads to several important consequences. Any beam-splitter network over three modes containing any number of real beam splitters can be described by a three-dimensional rotation matrix. Decomposing this matrix using Euler angles or Tait-Bryan angles gives its description in terms of three principle-axis rotation matrices, each corresponding to a single beam splitter. Thus, any three-mode beam-splitter network (with real beam splitters) can be reduced to three beam splitters or fewer.

As an example, consider reordering a beam-splitter network of two beam splitters that share one mode, $\op{B}_{k \ell}(\theta_2)\op{B}_{jk}(\theta_1)$. Suppose we seek the specific Tait-Bryan decomposition into fundamental-axis rotations in the order $z-y-x$ (right-to-left). The associated rotation matrices satisfy\footnote{Note that the corresponding rotation is already in an Euler form, $\mat{R}_z(0) \mat{R}_x(\theta_2) \mat{R}_z(\theta_1)$ and a different Tait-Bryan form, $\mat{R}_y(0) \mat{R}_x(\theta_2) \mat{R}_z(\theta_1)$.}
    \begin{align}
        \mat{R}_x(\theta_2) \mat{R}_z(\theta_1) = \mat{R}_z(\gamma) \mat{R}_y(\beta) \mat{R}_x(\alpha)
    \end{align}
corresponding to the circuit equivalence,
\begin{align}
\begin{split} \label{eq:genBScommute1}
    \Qcircuit @C = 0.75cm @R = 0.75cm {
       & \qw & \varbs{1}[\theta_1] & \qw \\
       & \varbs{1}[\theta_2] & \qw & \qw \\
        & \qw & \qw & \qw
    }
    \raisebox{-.75cm}{\;=\;}
    \Qcircuit @C = 0.75cm @R = 0.75cm {
         & \varbs{1}[\gamma]& \qw & \qw & \qw  \\
         & \qw & \qw & \varbs{1}[\alpha] & \qw  \\
          & \qw & \varbs{-2}[\raisebox{1em}{\scriptsize{$\beta$}}] & \qw & \qw 
     }
     \raisebox{-.75cm}{\;,\;}   
\end{split}
\end{align}
where $\alpha$, $\beta$, and $\gamma$ are functions of $\theta_1$ and $\theta_2$. 
For the case of two balanced beam splitters,
\begin{align}
    \Qcircuit @C = 0.75cm @R = 0.75cm {
       & \qw & \bsbal{1} & \qw \\
       & \bsbal{1} & \qw & \qw \\
        & \qw & \qw & \qw
    }
    \raisebox{-.75cm}{\;=\;}
    \Qcircuit @C = 0.75cm @R = 0.75cm {
         & \varbs{1}[\theta]& \qw & \qw & \qw  \\
         & \qw & \qw & \varbs{1}[\theta] & \qw  \\
          & \qw & \varbs{-2}[\raisebox{1em}{\scriptsize{$\theta'$}}] & \qw & \qw 
     }
      \raisebox{-.75cm}{\;},    
\end{align}
where $\theta = \operatorname{arctan}{\frac{1}{\sqrt{2}}}$ and $\theta' = \operatorname{arctan}{-\frac{\sqrt{3}}{3}}$.

\section{Type (b) incomplete four-splitters cannot be completed by restricting measurements} \label{threesplitterinequivalence}

Here, we show that type (b) incomplete four-splitters, 
which lack a fourth beam splitter on the state side of the beam-splitter network, cannot be completed by choosing specific measurement angles and post-processing of the measurement outcomes. 
Consider a beam-splitter network of this type, $\op{B}^{(3)}$, that, when completed, gives a four-splitter $\op{B}_\text{4-split}$. $\op{B}^{(3)}$ is equivalent to the desired four-splitter $\op{B}_\text{4-split}$ with a residual operator on the measurement side,
    \begin{align}
        \op{B}^{(3)} 
        = \op{B}_\text{res} \op{B}_\text{4-split}
        \label{eq:measurement_side_compensation}
    \end{align}
where the residual operator $\op{B}_\text{res} \coloneqq \op{B}^{(3)} \op B^\dagger_\text{4-split}$ is itself another network of real beam splitters. %
$\op{B}_\text{res} \to \mat R_\text{res}$ has an associated orthogonal matrix $\mat{R}_\text{res}$ composed from the individual beam-splitter matrices in the network.

The question is whether there exist measurement angles $\vec{\theta} \coloneqq \{ \theta_1, \theta_2, \theta_3, \theta_4  \}$ and some linear transformation of the measurement outcomes that, together, virtually delete the residual beam-splitter network $\op{B}_\text{res}$---i.e.,
    \begin{align} \label{eq:undoingsomething}
        \brasub{\vec{m}}{p_{\vec{\theta}}}\op{B}_\text{res} =  
        \sqrt{\abss{\det \mat{L}^{-1}} }\,
        \brasub{\mat{L}^{-1}_\text{res} \vec{m}}{p_{\vec{\theta}'}} 
        \, .
    \end{align}
We consider a Heisenberg-picture linear transformation described by $\mat{L}_\text{res}$ more general than that of the orthogonal matrix $\mat{R}_\text{res}$ in order to allow for the possibility that post-processing beyond orthogonal transformation may be required.
For example, scaling the outcome on a single mode can insert a squeezing operator: $\brasub{m}{q} = \sqrt{\zeta} \, \brasub{\zeta m}{q} \op{S}(\zeta) $~\cite{Walshe2020}. Note that doing so introduces a change in measure, which is the origin of the determinant in Eq.~\eqref{eq:undoingsomething}.
Additionally, $\mat{L}_\text{res}$ can only linearly transform outcomes in the chosen measurement bases---we further include the possibility that deletion also requires a new set of angles $\vec{\theta}'$.

Theorem~\ref{noChoice}, proved below, shows that $\op{B}_\text{res}$ cannot be deleted unless all the modes are measured in the same basis, $\theta_j = \theta_k$ for all $j,k$, which trivially disentangles them. For Theorem~\ref{noChoice} to apply here, a further property of $\mat{R}_\text{res}$ is required: it must be neither block diagonal nor row permutations away from block diagonal. 
This can be shown using its decomposition into constituent matrices, Eq.~\eqref{eq:foursplitterdecomp}. Without loss of generality, we choose an incomplete four-splitter missing a beam splitter between modes $j$ and $k$ in Eq.~\eqref{eq:foursplitterdecomp}, i.e.,~$\mat{R}^{(3)}=\mat R_{km} \mat R_{jl} \mat R_{lm}$.%
\footnote{For clarity of presentation, we assume no transposes on the matrices in Eq.~\eqref{eq:foursplitterdecomp}. The result still holds with them restored, however; the notation just becomes cluttered.}
Then, we have
    \begin{align} \label{residualmatrix}
        \mat{R}_\text{res} 
        & = \mat{R}^{(3)} \mat{R}^\tp_\text{4-split} 
        = \FSmat_{km}\FSmat_{j \ell} \FSmat^\tp_{jk} 
       \FSmat^\tp_{j \ell} \FSmat^\tp_{km} .
    \end{align} 
Every type (b) residual beam-splitter network has this form---a pair of commuting beam splitters on the left, a pair of commuting beam splitters on the right, and a fifth beam splitter that commutes with neither of the pairs.

We observe that the matrix in Eq.~\eqref{residualmatrix} has no zero entries.  To show this, let us consider the following permutation $\{m, k, j, \ell \} \rightarrow \{1, 2, 3, 4\} $
and define the matrix $\mat Q$ that realizes this permutation for any $\{m, k, j, \ell \}$.
Then, we have%
\footnote{With the transposes restored in Eq.~\eqref{eq:foursplitterdecomp}, the signs may change in Eq.~\eqref{eq:no_zero_entries}, but the matrix still has no zeroes.}
\begin{align}
    \mat Q & \mat{R}_\text{res}  \mat Q^{-1} = \mat R_{12} \mat R_{34} \mat R_{23} \mat R_{43} \mat R_{21} \\
    &= \frac{1}{2\sqrt{2}}\begin{pmatrix} 1 + \sqrt{2} & 1 -\sqrt{2} & -1 & -1 \\  1-\sqrt{2} &   1 + \sqrt{2} & -1 & -1 \\ 1 & 1 & 1 + \sqrt{2} & 1 -\sqrt{2} \\ 1 & 1 & 1 -\sqrt{2} & 1 +\sqrt{2} \end{pmatrix}, \label{eq:no_zero_entries}
\end{align}
which has no zero entries.%
\footnote{Note that this is the residual matrix associated with the MBSL in Sec.~\ref{sec:MBSL}.}
Since permutation does not change the values of entries, $\mat{R}_\text{res} $ itself has no zero entries.
The next theorem essentially states that the beam-splitter network that corresponds to such a matrix cannot be cancelled 
as in
Eq.~\eqref{eq:undoingsomething} for any nontrivial choices of the measurement angles $\vec{\theta}$. (That is, the only choice that works is all angles equal, which virtually deletes the entire beam-splitter network.)

\begin{theorem}\label{noChoice}
    Let $\vec{m}$ denote a vector of $N$ homodyne measurement outcomes and $\vec{\theta}$ denote a vector of $N$ choices of homodyne-measurement angles with $\theta_j \in(-\frac{\pi}{2},\frac{\pi}{2}]$ for all $j\in\{1,\ldots,N\}$.  Let $\op{B}^{(N)} \to \mat 
    R$
    be a real beam-splitter network%
    , where $\mat R \in \mathrm{O}(N)$ is an $N$-dimensional real orthogonal matrix that is neither block diagonal nor row permutations away from block diagonal. Let $\vec{\theta}'$ be another vector of $N$ choices of homodyne-measurement angles, with $\theta'_i\in(-\frac{\pi}{2},\frac{\pi}{2}]$ for all $i\in\{1,\ldots,N\}$. Finally, let $\mat L \in \reals^{N\times N}$ be an 
    $N$-dimensional real 
    matrix such that there exists a corresponding Gaussian unitary $\op{U}_G$ that satisfies the following:
    \footnote{The determinant is required because of the following unitarity: $ \op{I} = \int d\vec{m} \,  \op{U}^\dagger_G(\mat{L}) \ketsub{\vec{m}}{p}\brasub{\vec{m}}{p} \op{U}_G(\mat{L}) = \int d\vec{m} \, {\abss{\det \mat{L}^{-1}} }\, \ketsub{\mat{L}^{-1} \vec{m}}{p}\brasub{\mat{L}^{-1} \vec{m}}{p} $.}
    \begin{equation}
        \sqrt{\abss{\det \mat{L}^{-1}} }\, \brasub{\mat{L}^{-1} \vec{m}}{p} = \brasub{\vec{m}}{p} \op{U}_G. \label{eq:unitary_corresponding_to_L}
    \end{equation}
    Then, given $\vec{\theta}$ and $\op{B}^{(N)}$, there does not exist a pair $(\vec{\theta}',\mat L)$ that satisfies
    \begin{equation}
        \forall \vec{m}\in\mathbb{R}^N, \qquad \brasub{\vec{m}}{p_{\vec{\theta}}}\op{B}^{(N)} = \sqrt{\abss{\det \mat{L}^{-1}} }\, \brasub{ \mat{L}^{-1} \vec{m}}{p_{\vec{\theta}'}}, \label{eq:rot_BS=Gauss_rot}
    \end{equation}
    unless $\vec{\theta}=\theta\vec{1}_N$, where $\vec{1}_N=(1, 1, \ldots, 1)^{\tp}$.
\end{theorem}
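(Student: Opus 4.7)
The plan is to convert the measurement identity~\eqref{eq:rot_BS=Gauss_rot} into an operator equation and then analyse it in the Heisenberg picture. Using $\brasub{m}{p_\theta}=\brasub{m}{p}\op R(\theta)$ modewise on the left and Eq.~\eqref{eq:unitary_corresponding_to_L} on the right, \eqref{eq:rot_BS=Gauss_rot} becomes
\begin{equation}
\brasub{\vec m}{p}\,\op R(\vec\theta)\,\op B^{(N)}=\brasub{\vec m}{p}\,\op U_G\,\op R(\vec\theta') \quad \forall\,\vec m\in\reals^N,
\end{equation}
with $\op R(\vec\theta)\coloneqq\bigotimes_j\op R_j(\theta_j)$. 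Since the momentum eigenbras span the dual Hilbert space, this forces the operator identity $\op V\coloneqq\op R(\vec\theta)\,\op B^{(N)}\,\op R^\dag(\vec\theta')=\op U_G$. The goal is then to show that this identity can hold for no choice of $(\vec\theta',\mat L)$ unless $\vec\theta$ is constant.

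Next I would analyse $\op V=\op U_G$ in the Heisenberg picture. The left-hand side is built from phase delays and real beam splitters alone, so by the convention of Eq.~\eqref{eq:unitaryarrow} it acts on $\opvec a$ as a complex unitary $\mat M$ with entries $M_{ij}=e^{i\theta_i}R_{ij}e^{-i\theta'_j}$; in quadrature language this is $\op V^\dag\opvec q\op V=(\Re\mat M)\opvec q-(\Im\mat M)\opvec p$ and $\op V^\dag\opvec p\op V=(\Im\mat M)\opvec q+(\Re\mat M)\opvec p$. For the right-hand side, Eq.~\eqref{eq:unitary_corresponding_to_L} directly gives $\op U_G^\dag\opvec p\op U_G=\mat L\opvec p$, and a short wavefunction Fourier computation yields $\op U_G^\dag\opvec q\op U_G=\mat L^{-\tp}\opvec q$---in particular $\op U_G$ supports no squeezing or position-momentum mixing. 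Matching the two Heisenberg actions forces $\Im\mat M=0$ and $\mat L^{-\tp}=\Re\mat M=\mat L$, so $\mat M$ is real and equals an orthogonal $\mat L$.

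Reality of $\mat M$ yields a sharp angle constraint: whenever $R_{ij}\neq 0$, the phase $e^{i(\theta_i-\theta'_j)}$ must be real, so $\theta_i-\theta'_j\in\pi\integers$. The range $\theta_i,\theta'_j\in(-\tfrac{\pi}{2},\tfrac{\pi}{2}]$ confines the difference to the open interval $(-\pi,\pi)$, whose only multiple of $\pi$ is $0$, forcing $\theta_i=\theta'_j$ for every pair with $R_{ij}\neq 0$. I would then package these equalities through the bipartite graph $G$ on vertex classes $\{1,\ldots,N\}$ with an edge at $(i,j)$ precisely when $R_{ij}\neq 0$: the derived condition says $\vec\theta$ and $\vec\theta'$ are constant on each connected component, with matching constants across the two sides. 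Orthogonality of $\mat R$ forces equal row and column sizes in every component---the rows of a component, restricted to its column support, are orthonormal, giving $|J|\geq|I|$, and the dual bound gives equality---so a disconnected $G$ would permit row (and, if needed, column) permutations placing $\mat R$ in block-diagonal form, contradicting the theorem's hypothesis. Hence $G$ is connected, all $\theta_i$ collapse to a single common value~$\theta$, and $\vec\theta=\theta\vec 1_N$.

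The hardest step will be the Heisenberg bookkeeping in the second paragraph: one must verify that $\op U_G$, defined only through its action on momentum eigenstates by~\eqref{eq:unitary_corresponding_to_L}, has a \emph{unique} symplectic realisation with no off-diagonal position-momentum block, so that a nonzero $\Im\mat M$ on the LHS cannot be absorbed by squeezing on the RHS. Once that pin is in place the rest is routine: a phase-counting argument on the sparsity pattern of $\mat R$ followed by the graph-connectedness observation that reconciles the theorem's ``row permutations away from block diagonal'' hypothesis with the conclusion $\vec\theta=\theta\vec 1_N$.
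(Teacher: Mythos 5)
Your proof is correct, and it parallels the paper's argument up to the decisive matrix identity before finishing by a genuinely different route. Like the paper, you promote~\eqref{eq:rot_BS=Gauss_rot} to the operator identity $\op R(\vec\theta)\,\op B^{(N)}=\op U_G\,\op R(\vec\theta')$, note that the left-hand side is passive so $\op U_G$ can contain no squeezing, and use~\eqref{eq:unitary_corresponding_to_L} to force $\op U_G$ to be a real beam-splitter network with $\mat L\in\mathrm{O}(N)$; the step you flag as hardest is actually unproblematic, since~\eqref{eq:unitary_corresponding_to_L} determines $\op U_G$ completely as an operator (a momentum rescaling), whose symplectic action is exactly $\diag(\mat L^{-\tp},\mat L)$ with no position--momentum mixing, as you compute. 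Where you diverge is the endgame. The paper combines $\diag(e^{i\vec\theta})\mat R=\mat R'\diag(e^{i\vec\theta'})$ with its transpose to get $\mat R^{\tp}\diag(e^{2i\vec\theta})\mat R=\diag(e^{2i\vec\theta'})$, invokes uniqueness of the spectrum up to permutation, and concludes that $\mat P\mat R$ commutes with a diagonal matrix, forcing block structure unless all $\theta'_j$ coincide. You instead read reality of $\mat M=\diag(e^{i\vec\theta})\mat R\,\diag(e^{-i\vec\theta'})$ entrywise: $R_{ij}\neq0$ forces $\theta_i-\theta'_j\in\pi\integers$, the stated angle ranges confine this difference to $(-\pi,\pi)$ so $\theta_i=\theta'_j$, and connectivity of the support bipartite graph of $\mat R$ collapses all angles to a single value. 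Your finish is more elementary, avoids the spectral/commutant machinery, and cleanly disposes of the mod-$\pi$ ambiguity that the paper relegates to a footnote. One shared informality worth noting: your disconnected-graph alternative yields block diagonality only up to row \emph{and} column permutations, which is slightly weaker than the hypothesis as literally worded (``row permutations away from block diagonal''); the paper's own commutant step carries exactly the same slack, and neither matters for the intended application, where $\mat R_\text{res}$ has no zero entries at all.
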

\begin{proof}
    Let $\op{R}^{(N)}(\vec{\theta})$ be the $N$-mode local phase rotation with the rotation angles given by $\vec{\theta}$.  Then, the condition Eq.~\eqref{eq:rot_BS=Gauss_rot} is equivalent to
    \begin{equation}
        \forall \vec{m}\in\mathbb{R}^N, \quad \brasub{\vec{m}}{p}\op{R}^{(N)}(\vec{\theta}) \op{B}^{(N)} = \brasub{\vec{m}}{p} \op{U}_G \op{R}^{(N)}(\vec{\theta}').
    \end{equation}
    Since this holds for all $\vec{m}\in\mathbb{R}^N$, we have
\begin{equation}
    \op{R}^{(N)}(\vec{\theta}) \op{B}^{(N)} = \op{U}_G(\mat{L}) \op{R}^{(N)}(\vec{\theta}').\label{eq:rot_BS=Gauss_rot_operator_level}
\end{equation}
Now the left-hand side, as well as $\op{R}^{(N)}(\vec{\theta}')$, is a linear-optical unitary and thus does not contain squeezing.  Therefore, in order for the above to be equal, $\op{U}_G$ must also be a linear-optical unitary: $\op{U}_G \to \mat U \in \mathrm{U}(N)$.
From Eq.~\eqref{eq:unitary_corresponding_to_L}, $\op{U}_G$ does not mix $\op{p}$s and $\op{q}$s, and~$\mat L$ is the (real) matrix by which the $\op p$s evolve. This restricts~$\mat U$ such that 
$\mat U = \mat L \in \mathrm{O}(N)$,
i.e.,~it is real orthogonal.  This means that $\op{U}_G$ is a real beam-splitter network denoted by $\op{B}'^{(N)} \to \mat R' \in \mathrm{O}(N)$, with $\mat R' = \mat L$.

Eq.~\eqref{eq:rot_BS=Gauss_rot_operator_level} can now be rewritten as
\begin{equation}
    \op{R}^{(N)}(\vec{\theta}) \op{B}^{(N)} = \op{B}'^{(N)} \op{R}^{(N)}(\vec{\theta}').\label{eq:rot_BS=BS_rot}
\end{equation}
Since they are linear-optical unitaries, we consider their representations as unitary matrices acting on the vector of annihilation operators.  It is known that for a beam-splitter network and a local phase rotations, $\op{B}^{(N)} \to \mat R$ and $\op{R}^{(N)}(\vec{\theta}) \to \diag(e^{i\vec{\theta}})$, where $\diag(e^{i\vec{\theta}})$ is a diagonal matrix of phases.
With these in mind, one can see that Eq.~\eqref{eq:rot_BS=BS_rot} is equivalent to
\begin{equation}
    \diag(e^{i\vec{\theta}}) \mat R = \mat R' \diag(e^{i\vec{\theta}'}). \label{eq:diag_ortho=ortho_diag}
\end{equation}
Taking the transpose of the both sides leads to
\begin{equation}
    \mat R^{\tp} \diag(e^{i\vec{\theta}})  = \diag(e^{i\vec{\theta}'}) \mat R'^{\tp}. \label{eq:transpose_D11}
\end{equation}
Combining Eqs.~\eqref{eq:transpose_D11} and~\eqref{eq:diag_ortho=ortho_diag} gives
\begin{equation}
    \mat R^{\tp} \diag(e^{2i\vec{\theta}}) \mat R = \diag(e^{2i\vec{\theta}'}).
\end{equation}
The above implies that the left-hand side is the diagonalization of the right-hand side, which is itself diagonal.  Since the 
spectrum
of a normal matrix is unique up to a permutation of the eigenvalues, we obtain
\begin{equation}
    (\mat P \mat R)^{\tp} \diag(e^{2i\vec{\theta}'})  (\mat P \mat R) = \diag(e^{2i\vec{\theta}'}), \label{eq:commutation_diagonal}
\end{equation}
where $\mat P$ is a permutation matrix that satisfies $\mat P \diag(e^{2i\vec{\theta}}) \mat P^{\tp} = \diag(e^{2i\vec{\theta}'})$.  Equation~\eqref{eq:commutation_diagonal} means that $\mat P \mat R$ and $\diag(e^{2i\vec{\theta}'})$ commute.  However, it is known that any matrix that commutes with a diagonal matrix has a block-diagonal form that acts nontrivially only within the eigenspaces of degenerate eigenvalues.  Given that $2\theta'_j \in(-\pi,\pi]$ for all $j\in\{1,\ldots,N\}$, Eq.~\eqref{eq:commutation_diagonal} contradicts 
the initial assumption that $\mat R$ is neither block diagonal nor a 
row permutation away from block diagonal unless all the elements of $\vec{\theta}'$ are the same.
In this exceptional case, we are restricted to choosing%
\footnote{Technically, they can differ by integer multiples of~$\pi$, which is what the 2 in~$\diag(e^{2i\vec{\theta}'})$ allows. Since these are physically identical (just negate the outcome of the measurement), this freedom of definition provides no benefit, and we do not dwell on it.}
${\vec{\theta}=\vec{\theta}'=\theta \vec{1}_N}$by Eq.~\eqref{eq:commutation_diagonal}, which completes the proof.
\end{proof}

Note that the restriction $\theta_j\in[-\frac{\pi}{2},\frac{\pi}{2})$ for any $j\in\{1,\ldots,N\}$ does not imply the loss of generality because other measurement angles can be simulated by putting negative signs on the measurement outcomes with appropriate measurement angles in this range.
As we see in Eq.~\eqref{eq:no_zero_entries} that the matrix $\mat{R}_\text{res}$ has no zero entries and thus is neither block diagonal nor row permutation away from block diagonal.
This fact combined with the above theorem means that, putting aside the trivial case $\vec{\theta}=\theta \vec{1}_N$, it is impossible to absorb the residual beam splitter $\op{B}_\text{res}$ by the change of the measurement angles as well as the linear post-processing of the measurement outcomes. 
Combining this with Eq.~\eqref{eq:measurement_side_compensation} implies that this (equivalence class of) incomplete four-splitter~$\op{B}^{(3)}$ [\emph{i.e.},~type~(b)] cannot be made equivalent to a four-splitter (followed by homodyne measurement and reinterpretation of its outcomes) with nontrivial choices of measurement angles. %

\section{GKP Bell pair insertion}\label{sec:GKPinsertion}

The noise analysis in Ref.~\cite{walshe2021streamlined} for the QRL assumes that the CV cluster state is entirely constructed from GKP qunaught states. In practice, a deterministic source of GKP qunaughts may not be available, and the cluster state may have locations where squeezed momentum states were swapped in to fill in the gaps~\cite{menicucci2014fault,walshe2021streamlined,Bourassa2021blueprint, Tzitrin2021passive}.
Here, we give a procedure to insert a GKP Bell pair---or any other state made from two single-mode states coupled on a beam splitter---%
into a macronode wire (which may be a part of a larger cluster state) after it has been created.

\begin{figure}
    \centering
    \includegraphics[width=0.8\columnwidth]{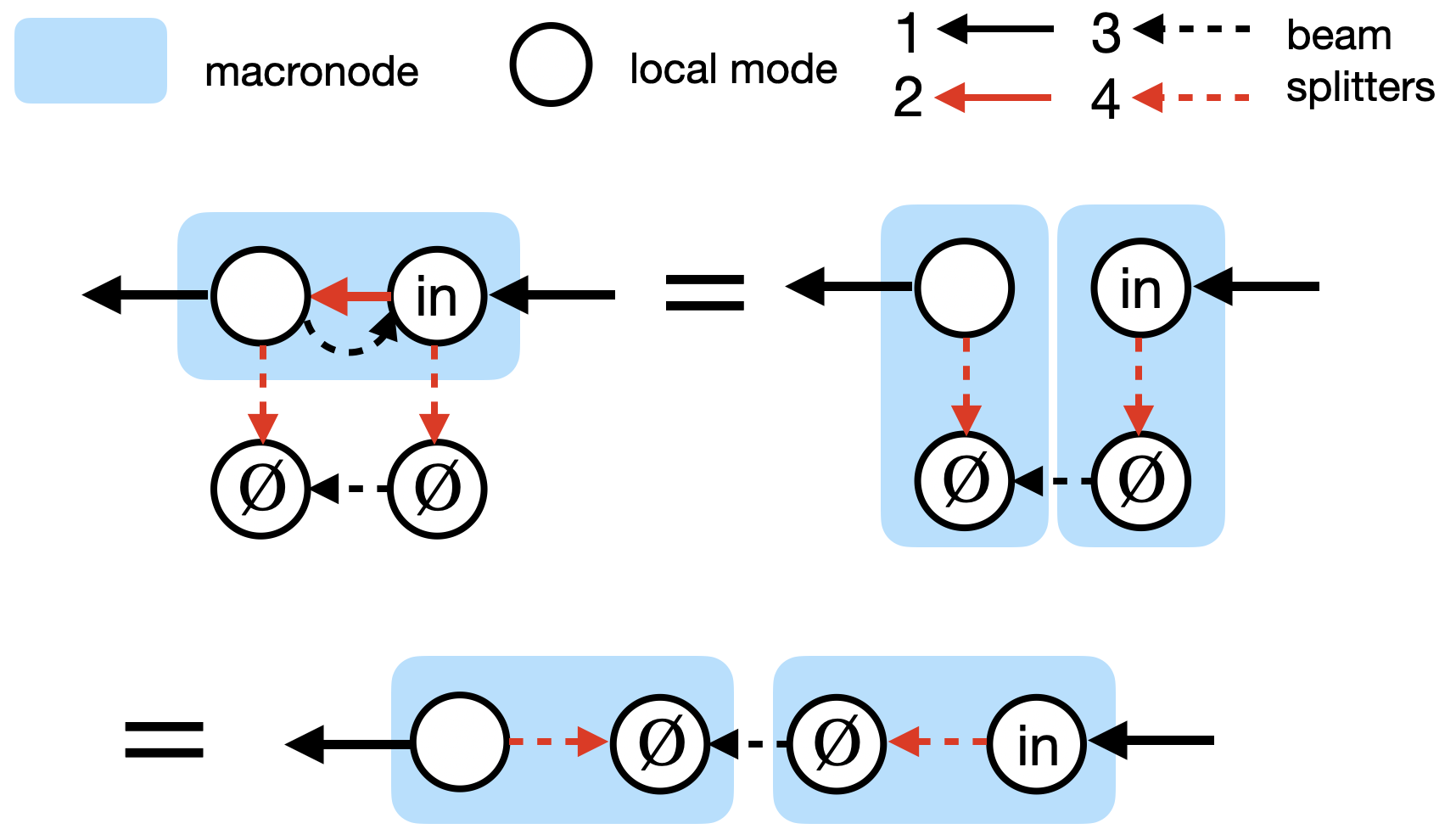} 
    \caption{\label{Fig-BellPairInsertion}
        GKP Bell pair insertion using a balanced four-splitter. Empty circles are local modes in unspecified states, circles filled with $\qunaught$ are local modes prepared in pure GKP qunaught states (noisy versions also work). 
        This process inserts two $\qunaught$ states in-line in a single-mode macronode wire (either isolated or embedded in a two-dimensional cluster state). 
        Described from left to right, two inputs, $\ket{\qunaught}$ are introduced alongside an existing macronode wire and a canonical balanced four-splitter is used to interact the two states with a two-mode macronode. The two beam splitters contained within the macronode cancel out. We then regroup the modes into two new macronodes, each one containing a single $\qunaught$ state. Each of these macronodes, when teleported through, performs error correction on either the position or momentum quadratures~\cite{Walshe2020}. Together, full GKP error correction is possible between these two macronodes.
         Note that one is not required to use qunaught states. This circuit is, more generally, a way to insert a Kraus state $\op{B}_{12} \ket{\psi} \otimes \ket{\phi}$ into a macronode wire.}
    \label{bellPairInsertionGraph}
\end{figure}

By undoing one of the wire's original beam splitters, an additional pair of modes containing a GKP Bell pair can be inserted into a macronode wire; see 
Fig.~\ref{Fig-BellPairInsertion}. 
This procedure is implemented by the circuit%
\footnote{This circuit starts from the red beam splitters in Fig.~\ref{Fig-BellPairInsertion}. The black ones are omitted because they link the top and bottom wires to other wires not shown.}
\begin{equation}\label{GKPMeasurementCircuit}
\begin{split}
\centering
\Qcircuit @C=1em @!R=0.1em {
		& \lstick{\brasub{m_1}{p_{\theta_1}}}& \bsbal[red]{1}[.>]&  \qw&\qw &  \qw &\qw & \qw&\qw&\qw& \bsbal[red]{3} & \qw &  \nogate{1}\\
		&\lstick{\brasub{m_a}{p_{\theta_a}}}& \qw&\bsbal{1}[.>]&\qw& \qw & \qw& \rstick{\hspace{-1.2em}\ket{\qunaught}} &&&&& \nogate{a} \\
		&\lstick{\brasub{m_b}{p_{\theta_b}}}&\bsbal[red]{1}[.>]&\qw& \qw& \qw   &\qw& \rstick{\hspace{-1.2em}\ket{\qunaught}} &&&&& \nogate{b} \\
		& \lstick{\brasub{m_2}{p_{\theta_2}}}&\qw& \qw  &\qw & \bsbal{-3}[.>] &  \qw&\qw &\qw &\qw&\qw& \qw & \nogate{2}  \\
}
\, \raisebox{-3.5em}{.}
\end{split}
\end{equation}
where modes $1$ and $2$, originally coupled by the red beam splitter on the right-hand side, are those from the macronode wire. Modes $a$ and $b$ are the new modes containing GKP Bell qunaught states. 
The dashed beam splitter from the mode $2$ to $1$ eliminates the red one, leaving
\begin{equation}\label{GKPMeasurementCircuitCopy2}
\begin{split}
\centering
\Qcircuit @C=1em @!R=0.1em {
		& \lstick{\brasub{m_1}{p_{\theta_1}}}&  \qw&\bsbal[red]{1}[.>] &  \qw &\qw & \qw&\qw&\qw& \qw & \qw  & \nogate{1} \\
		&\lstick{\brasub{m_a}{p_{\theta_a}}}&\qw&\qw& \bsbal{1}[.>] & \qw& \rstick{\hspace{-1.2em}\ket{\qunaught}} &&&&&  \nogate{a} \\
		&\lstick{\brasub{m_b}{p_{\theta_b}}}&\qw& \bsbal[red]{1}[.>]& \qw   &\qw& \rstick{\hspace{-1.2em}\ket{\qunaught}} & &&&&  \nogate{b}\\
		& \lstick{\brasub{m_2}{p_{\theta_2}}}& \qw  &\qw & \qw &  \qw&\qw &\qw &\qw&\qw& \qw &  \nogate{2} \\
}
 \raisebox{-3.5em}{\; .}
\end{split}
\end{equation}
This circuit is simplified with application of Eq.~\eqref{eq:singlemodeVderive}~\cite{Walshe2020}
to give
\begin{equation}\label{GKPMeasurementCircuitCopy3}
\begin{split}
\centering
\Qcircuit @C=1em @!R=0.1em {
		&\EPRdl &  \gate{D(\mu)}& \gate{V(\theta_1,\theta_a)}& \qw&\\
		 &\EPRul & \gate{\Pi_\GKP} &  \qw &\EPRdr & \\
		&\EPRdl&\gate{D(\mu')} & \gate{V(\theta_b,\theta_2)} & \EPRur&\\
		     &\EPRul&\qw&\qw&\qw \\
}
\, \raisebox{-3.5em}{,}
\end{split}
\end{equation}
and $\mu$ $(\mu^{\prime})$ is a function of $m_1$ and $m_a$ ($m_b$ and $m_2$).
This adds two further measurement degrees of freedom, enabling an additional $\op V$ gate, but only offers GKP error correction after the first one. Inserting GKP Bell pairs throughout the macronode wires is effectively the same as including GKP states at the beginning of cluster state generation on half of the ancillas.

\section{CV circuit identities}\label{appendix:conventionsDict}

Here, we list a collection of useful decompositions and identities for CV circuit elements.

\ \\ \noindent \emph{Beam splitter decomposition:}
\begin{subequations} \label{eq:bsDecomp}
\begin{align}
    \Qcircuit @C=0.5cm @R=1.1cm {
        &\varbs{1}& \qw \\
        & \qw     & \qw 
    } 
    \quad &\raisebox{-.5cm}{=} \quad \,
        \Qcircuit @C=0.5cm @R=0.5cm {
         & \ctrlg{\tan{\theta}}{1} & \gate{S^\dag(\sec{\theta})} & \targ & \qw\\
         & \targ & \gate{S(\sec{\theta})} & \ctrlg{-\tan{\theta}}{-1} & \qw
    } 
    \\
    \quad &\raisebox{-.5cm}{=} \,
       \Qcircuit @C=0.5cm @R=0.5cm {
         & \targ & \gate{S(\sec{\theta})} & \ctrlg{\tan{\theta}}{1}  & \qw\\
         & \ctrlg{-\tan{\theta}}{-1}  & \gate{S^\dag(\sec{\theta})} & \targ & \qw
    }
\end{align}
\end{subequations}
where $\CX^{jk}(g) \coloneqq e^{- i g \op{q}_j \otimes \op{p}_k}$ describes the controlled-shift gates in the circuit, with the control being mode $j$ and target mode $k$. 

\ \\ \noindent \emph{Reversing a beam splitter:}
\begin{equation}
\begin{split} \label{eq:bsdirectionswap}
         \Qcircuit @C=1.2em @R=2.35em  
         {
         	& \qswap      & \varbs{1} & \qswap      & \qw \\
         	& \qswap \qwx & \qw       & \qswap \qwx & \qw
  		  } 
      \raisebox{-1.25em}{\quad $\, = \, $~~}
         \Qcircuit @C=1.2em @R=1.9em 
         {
         	& \gate{F^2} & \varbs{1} & \gate{F^2} & \qw \\
         	& \qw        & \qw       & \qw        & \qw
  		  } 
      \raisebox{-1.25em}{\quad $\, = \, $~~}
         \Qcircuit @C=1.2em @R=2.4em  
         {
         	& \qw        & \qw \\
         	& \varbs{-1} & \qw
  		  } 
\end{split}	
\quad ,
\end{equation}

\ \\ \noindent \emph{$\CZ$ decomposition:}
\begin{align}
    \Qcircuit @C=0.5cm @R=0.4cm {
        &\ctrlg{g}{1}& \qw & \raisebox{-1cm}{=} & & \qw & \gate{P(-g)} & \bsbal{1} & \qw\\
        & \ctrl{-1} & \qw & &  & \bsbal{-1} & \gate{P(g)} & \qw & \qw
    }
\end{align}
where $\CZ^{jk}(g) \coloneqq e^{ig \op{q}_j \otimes \op{q}_k}$ describes the controlled-shift gates on the left-hand side.

\ \\ \noindent \emph{SWAP decompositions:}
\begin{subequations} \label{swapcircuitidentity}
\begin{align} 
    \Qcircuit @C=1.35em @R=2.2em @! 
         {
         	& \qswap      &  \qw \\
         	& \qswap \qwx &  \qw
  		  } 
      &  \raisebox{-1.1em}{\quad $\, = \, $~~} 
         \Qcircuit @C=0.5cm @R=0.45cm {
        &\gate{F^2} & \ctrl{1} & \targ          & \ctrl{1} & \qw \\
        & \qw       & \targ    & \ctrlg{-1}{-1} & \targ    & \qw 
    }
    \\
     &  \raisebox{-1.1em}{\quad $\, = \, $~~}
         \Qcircuit @C=0.1em @R=0.425em @! 
         {
            & \gate{F^2} & \bsbal{1} & \bsbal{1} & \qw         \\
            & \qw        & \qw       & \qw       & \qw  
  		  } \; \raisebox{-1.1em}{,} 
\end{align}
\end{subequations}
Interestingly, this implies that beam splitters as defined in Eq.~\eqref{BSdef} act on CV systems in analogy to the way $\sqrt{\text{SWAP}}$ gates act on qubits or qudits.

\ \\ \noindent \emph{EPR state:}
\begin{align}
    \raisebox{-0.35cm}{$\ket{\EPR} \coloneqq \int ds \ket{s}_q \otimes \ket{s}_q = $} \quad
    \Qcircuit @C=0.5cm @R=0.75cm {
        & \qw & \EPRdr & \\
        & \qw & \EPRur & 
    }
\end{align}

\ \\ \noindent \emph{The teleported gate:}
\begin{align}
    \Qcircuit @C=0.5cm @R=0.4cm {
        &\bsbal{1}& \qw & \nogate{\ket{\psi}} & \raisebox{-1cm}{=} & & \qw & \qw & \EPRdr & \\
        & \qw & \qw & \nogate{\ket{\phi}} & &  & \qw & \gate{\frac{1}{\sqrt{\pi}}A(\psi,\phi)} & \EPRur & 
    }
\end{align}
where
    $\op A(\psi, \phi) = \iint d^2 \alpha \; \tilde \psi(\alpha_I)\phi(\alpha_R)\op D(\alpha)$.

\ \\ \noindent \emph{Basic bounce:}
\begin{align}
    \Qcircuit @C=0.5cm @R=0.5cm {
        &\gate{O} & \EPRdr & \raisebox{-1cm}{=} & & \qw & \EPRdr & \\
        & \qw & \EPRur & &  & \gate{O^\tp}  & \EPRur & 
    }
\end{align}
with the transpose defined in the position basis. 

\ \\ \noindent \emph{Beam-splitter double bounce:}
\begin{align}
    \Qcircuit @C=0.5cm @R=0.5cm {
        & & \qw & \bsbal{2} & \EPRdr && \raisebox{-1.5cm}{=}  & & \qw & \qw & \EPRdr &\\
        &  & \qw & \qw & \EPRur &&&  & \qw &\qw & \EPRur & \\
       &  & \qw &\qw & \EPRdr & & &  & \qw & \qw & \EPRdr & \\
        & & \qw & \qw & \EPRur & & &  & \qw & \bsbal{-2}  & \EPRur & 
    }
\end{align}

\ \\ \noindent \emph{Single-mode teleported gate:}
\begin{subequations}
    \begin{align}
        \op V(\theta_1,\theta_2)&=\op R(\theta_+-\tfrac{\pi}{2})\op S(\tan{\theta_-})\op R(\theta_+) \label{eq:vGateOrig} \\
        &= \op R(\theta_1-\tfrac{\pi}{2}) \op P[2\cot (2 \theta_-)] \op R(\theta_1-\tfrac{\pi}{2})  \label{eq:vGateShear} \\ 
        &= \op R(\theta_1 -\pi) \op P_{\op{p}}[2\cot (2 \theta_-)] \op R(\theta_1)\label{eq:vGatePShear}
     \, ,
    \end{align}
\end{subequations}

\ \\ \noindent \emph{$\CX$ reordering:}
\begin{align}\label{commuteCX}
    \Qcircuit @C=0.5cm @R=0.5cm
        {
        &\qw&\targ&\qw&\qw&\\
        &\targ&\ctrlg{b}{-1}&\qw&\qw&\\
        &\ctrlg{a}{-1}&\qw&\qw&\qw&
        }
        \raisebox{-2em}{=} \quad
        \Qcircuit @C=0.5cm @R=0.5cm
        {
        &\targ&\qw&\targ&\qw&\\
        &\ctrlg{b}{-1}&\targ&\qw&\qw&\\
        &\qw&\ctrlg{a}{-1}&\ctrlg{\raisebox{1em}{\scriptsize{$-ab$}}}{-2}&\qw&
        }
\end{align}

\bibliography{ref}
\end{document}